\let\textcite\cite 
\newclass{\UEOPL}{UEOPL}
\newlang{\wSAT}{Weighted\;2\text{-}SAT}
\newlang{\VCSP}{VCSP}
\newcommand{\height}[1][]{\text{height}_f(#1)} 
\newcommand{\width}[1][]{\text{width}_f(#1)} 
\newcommand{\arc}[3][r]{#2\drct[#1]#3} 
\newcommand{\drct}[1][r]{
    \if#1r \rightarrow{}\fi
    \if#1l \leftarrow{} \fi
    \if#1b \leftrightarrow{}\fi
} 
\newcommand{\se}[3][]{#2\xrightarrow{#1}#3}
\newcommand{\nse}[3][]{#2\not\xrightarrow{#1}#3}
\newcommand{\rse}[3][]{#2\xleftrightarrow{#1}#3}
\newcommand{\nrse}[3][]{#2\not\xleftrightarrow{#1}#3}
\DeclareMathOperator*{\argmax}{arg\,max}
\newcommand{\appref}[1]{{\hyperref[proof:#1]{\appsymb}}}  
\title{When is local search both effective and efficient?}
\author{Artem Kaznatcheev}{Department of Mathematics, and Department of Information and Computing Sciences, Utrecht University, The  Netherlands}{a.kaznatcheev@uu.nl}{https://orcid.org/0000-0001-8063-2187}{}
\author{Sofia {Vazquez Alferez}}{Department of Mathematics, and Department of Information and Computing Sciences, Utrecht University, The Netherlands}{s.vazquezalferez@uu.nl}{https://orcid.org/0000-0002-1541-8683}{}
\authorrunning{A. Kaznatcheev and S. Vazquez Alferez} 
\keywords{valued constraint satisfaction problem, local search, algorithm analysis, constraint graphs, pseudo-Boolean functions, parameterized complexity}
\begin{document}

\makeatletter
\let\c@author\relax
\makeatother

\makeatletter
\def\thmt@innercounters{equation,algorithm}
\makeatother

\maketitle

\begin{abstract}
Combinatorial optimization problems implicitly define \emph{fitness landscapes} that combine the numeric structure of the `fitness' function to be maximized with the combinatorial structure of which assignments are `adjacent'.
Local search starts at an assignment in this landscape and successively moves assignments until no further improvement is possible among the adjacent assignments.
Classic analyses of local search algorithms have focused more on the question of effectiveness (``did we find a \textit{good} solution?'') and often implicitly assumed that there are no doubts about their efficiency (``did we find it \textit{quickly}?'').
But there are many reasons to doubt the efficiency of local search.
Even if we focus on fitness landscapes on the hypercube that are single peaked on every subcube (known as semismooth fitness landscapes, completely unimodal pseudo-Boolean functions, or acyclic unique sink orientations) where effectiveness is obvious,
many local search algorithms are known to be inefficient.
Since fitness landscapes are unwieldy exponentially large objects, we focus on their polynomial-sized representations by instances of valued constraint satisfaction problems (\VCSP).
We define a ``direction'' for valued constraints such that \emph{directed} \VCSP{s} generate semismooth fitness landscapes.
We call directed \VCSP{s} \emph{oriented} if they do not have any pair of variables with arcs in both directions.
Since recognizing if a \VCSP-instance is directed or oriented is \coNP-complete, we generalized oriented \VCSP{s} as \emph{conditionally-smooth} fitness landscapes where the structural property of `conditionally-smooth' \emph{is} recognizable in polynomial time for a \VCSP-instance.
We prove that many popular local search algorithms like random ascent, simulated annealing, history-based rules, jumping rules, and the Kernighan-Lin heuristic are very efficient on conditionally-smooth landscapes.
But conditionally-smooth landscapes are still expressive enough so that other well-regarded local search algorithms like steepest ascent and random facet require a super-polynomial number of steps to find the fitness peak.
\end{abstract}

\newpage
\section{Introduction}
\label{sec:intro}

Local search algorithms start at an initial assignment and successively move to better adjacent assignments until no further improvement is possible.
As with all algorithms, we can ask about their effectiveness (``did we find a \textit{good} solution?'') and efficiency (``did we find it 
\textit{quickly}?'').
Often, we do not know the precise answers to these questions, but still choose to use local search in combinatorial optimisation~\cite{LocalSearch_Book1,LocalSearch_Book4,LocalSearch_Book3,LocalSearch_Book2,LocalSearchCO_Thesis}.\footnote{
Sometimes local search is chosen because it is easy to implement.
If we do not expect any algorithm to be effective and efficient on all inputs -- say for \NP-hard combinatorial optimization problems -- then why not choose an algorithm that will take less effort to implement and maintain?
At other times, local search is forced on us.
If we want to understand the evolution of biological systems~\cite{evoPLS,KazThesis} or of social systems in business~\cite{businessLandscape,orgBehavOrig} and economics~\cite{AlgoEconSurvey}, or even just energy-minimization in physical systems, then we need to study the effectiveness and efficiency of the local search algorithms followed by nature.
}
In practice, local search seems to efficiently find effective solutions.
But practice is not theory.

As theorists, we can view combinatorial optimization problems as implicitly defining \emph{fitness landscapes} that combine the numeric structure of the `fitness' function to be maximized with the combinatorial structure of which assignments are `adjacent'~\cite{LocalSearch_Book1,LocalSearch_Book4,repCP}.
Since many local search algorithms stop at local peaks in these fitness landscapes,
much work asks questions of effectiveness like ``are these peaks `good enough'?''~\cite{LocalSearch_Book1,LocalSearch_Book3}.
Such questions of effectiveness can be avoided by focusing on just single-peaked fitness landscapes.
Or even more stringently, by focusing on fitness landscapes on the hypercube of assignments that are single peaked on every subcube.
These landscapes are known -- depending on the research community -- as semismooth fitness landscapes~\cite{evoPLS}, completely unimodal pseudo-Boolean functions~\cite{completelyUnimodal}, or acyclic unique-sink orientations (AUSOs) of the hypercube~\cite{AUSO_Thesis,AUSO}.
Semismooth fitness landscapes have a single peak, a short ascent from any starting point to this peak, and have nothing resembling other local peaks to potentially block local search (\cref{prop:semismooth,prop:semismooth_shortAscents}; Hammer et al.~\textcite{completelyUnimodal}, Poelwijk et al.~\textcite{rse2}, and Kaznatcheev~\textcite{evoPLS}; see \cref{sec:background}).
Local search is effective in finding the best possible result in semismooth landscapes, but can it do so efficiently?
 
The (in)efficiency of finding local peaks for general combinatorial optimization problems is captured by the complexity class of polynomial local search (\PLS;~\cite{PLS}):
if $\FP \neq \PLS$ then there is no polynomial time algorithm (even non-local) that can 
find a local peak in general fitness landscapes.
Many problems that generate these hard landscapes are complete under \emph{tight} \PLS-reductions, so have families of instances and initial assignments such that all ascents 
(i.e., sequences of adjacent assignments with strictly increasing fitness; \cref{sec:background}) 
to any local peak are exponentially long~\cite{PLS_Survey,PLS}. 
Thus, 
ascent-following local search will not find a local peak in polynomial time, even if $\FP = \PLS$.\footnote{
Restricting \PLS-complete problems to a subset of instances that are tractable in polynomial time does not mean that particular local search algorithms will also find a peak in polynomial time.
For example, \wSAT\ is a \PLS-complete problem~\cite{W2SAT_PLS1,W2SAT_PLS2} and thus binary Boolean valued constraint satisfaction problems (\VCSP) are \PLS-complete.
For \VCSP-instances of bounded treewidth, the global peak can be found in polynomial time~\cite{BB73,VCSPsurvey}.
Thus, bounded-treewdith \VCSP\ is not \PLS-complete.
But exponentially long ascents exist (and greedy local search algorithms like \lang{SteepestAscent} can take exponentially long) even with binary Boolean \VCSP-instances of pathwidth 2~\cite{repCP,tw7,pw4,pw2,pw2MSc}.
What is easy for non-local search is not necessarily easy for local search.
}
In contrast, semismooth fitness landscapes always have some short ascent to their unique fitness peak.
Thus, any results about (in)efficiency will have to be stated in terms of specific local search algorithms.\footnote{
The special case of finding the peak in semismooth landscapes is reducible to \lang{Unique\text{-}End\text{-}of}\-\lang{\text{-}Potential\text{-}Line} -- the complete problem for $\UEOPL \subseteq \PLS \cap \PPAD$~\cite{UEOPL,EOPLeqPLScapPPAD}.
It is believed that \UEOPL\ is strictly easier than \PLS, but still not tractable in polynomial time.
In a blackbox setting (no access to a concise description of the fitness landscape), \UEOPL\ and \PLS\ are not tractable in polynomial time.
}
And there exist constructions of semismooth landscapes that cannot be solved efficiently by various popular local search algorithms including \lang{SteepestAscent}~\cite{fittestHard3,fittestHard2,evoPLS,pw2}, \lang{RandomAscent}~\cite{randomFitter2,randomFitter1}, \lang{RandomFacet}~\cite{randomFacetBound},
jumping rules~\cite{LongJump}, and many others.
It is generally believed that for any particular local search algorithm, there will be some family of semismooth fitness landscapes that show that the algorithm is not efficient.
On semismooth fitness landscapes, even if local search is always effective, it is not always efficient.

This raises our main question for this paper: \emph{when is local search efficient on single-peaked landscapes?}
In a classic complexity theory setting, one wants to keep fixed a given problem (or class of problem instances) and seek for the simplest or most natural algorithm that solves this problem effectively (i.e., correctly) and efficiently (i.e., in polynomial time).
We flip this formula around.
We fix the algorithms and seek the most complex class of problem instances on which these algorithms can solve the problem effectively and efficiently. 
Specifically, we fix a collection of many popular local search algorithms and
seek to find a large class of single-peaked fitness landscapes on which this collection of many local search algorithms is efficient (we abbreviate this as \emph{efficient-for-many}; and we let our fixed collection be algorithms from a broader algorithm-class that we name poly-bypass).

Given that blackbox fitness landscapes are unwieldy exponentially large objects, we open the blackbox by representing fitness landscapes by instances of valued constraint satisfaction problems (\VCSP s)~\cite{repCP}.\footnote{
Our focus on easily checkable properties of polynomial-sized representations of problems instead of purely theoretical properties of the exponentially-large fitness landscapes implicit in the problems, is one of the big differences of our approach/results versus similar work in evolutionary computation~\cite{EvoComp2,LocalSearch_Book4,EvoComp3}.
}
Specifically, we study binary Boolean \VCSP s, also known as quadratic pseudo-Boolean functions.
Since arbitrary binary Boolean \VCSP s are \PLS-complete~\cite{PLS_Survey,PLS,W2SAT_PLS1,W2SAT_PLS2},
we will be looking for subclasses of binary Boolean \VCSP{s} that implement single-peaked fitness landscapes on which poly-bypass local search algorithms can find the peak efficiently.
Thus, our goal is to find an easy to check structural property of \VCSP{s} that captures the most expressive subclass of binary Boolean \VCSP{s} for which many popular local search algorithms are both effective and efficient.

\section{Summary of Results}

To avoid concerns over (in)effectiveness, we start \cref{sec:AUSOfromVCSP} by identifying the binary Boolean \VCSP s that implement semismooth fitness landscapes.
We do this by assigning each edge in the constraint graph of a binary Boolean \VCSP\ corresponding to a valued constraint between variable $x_i$ and $x_j$ one of two types: directed ($\se{i}{j}$ or $\se{j}{i}$) or bidirected ($\rse{i}{j}$).
We show that \VCSP s without bidirectional edges are equivalent to semismooth fitness landscapes 
and name them \emph{directed} \VCSP s.
Given many constructions of hard semismooth fitness landscapes \cite{fittestHard3,randomFacetBound,randomFitter2,fittestHard2,evoPLS,randomFitter1,LongJump}, we do not expect local search to be efficient on directed \VCSP s.

To find efficiency, we define \emph{oriented} \VCSP{s} as a further restriction on directed \VCSP{s}.
For each pair of variables $x_i$, $x_j$ in an oriented \VCSP\ there can be at most one of $\se{i}{j}$ or $\se{j}{i}$:
the constraint graph is oriented, hence the name.
We show that there are no directed cycles in oriented \VCSP s: all oriented \VCSP s induce a partial order where the preferred assignment of a variable $x_j$ depends only on the assignments of variables $x_i$ with $i$ lower than $j$ in the partial order.
This means that once we condition on all $x_i$ with $i$ lower than $j$, the preferred assignment of $x_j$ is independent of all other variables -- like in a smooth landscape.

Although having a directed or oriented constraint graph is a natural property to define subclasses of \VCSP s, it is a hard property to recognize.
Specifically, 
\lang{SubsetSum} can be solved by determing the direction of constraints (\cref{prop:rse_NPC} for $\rse{i}{j}$ and \cref{prop:doublese_NPC} for both $\se{i}{j}$ and $\se{j}{i}$).
Thus, given an \textit{arbitrary} binary Boolean \VCSP, the problem of checking if it is directed or oriented is \coNP-complete (\cref{cor:semismoothTesting_coNPC}).
Furthermore, even if we are given a \textit{directed} binary Boolean \VCSP, checking if it is oriented is \coNP-complete (\cref{cor:orientedTesting_coNPC}).
The only silver lining is that if we take the maximum degree of the constraint graph of the \VCSP-instance as a parameter then checking if a binary Boolean \VCSP\ is directed or oriented is fixed-parameter tractable (\cref{alg:VCSP_arcs} and \cref{sec:complexity_arcs}).

Given the importance of the partial order of conditional independence but the difficulty of recognizing if a \VCSP\ is oriented, with \cref{def:precSmooth} in \cref{sec:precSmooth}, we abstract to a class of landscapes that we call conditionally-smooth.
Just like smooth and semismooth landscapes, conditionally-smooth landscapes have only one local (and thus global) peak and there exist short ascents from any initial assignment to the peak (\cref{prop:precSmooth_SinglePeaked}).
In other words, just as with semismooth landscapes, local search algorithms are effective on conditionally-smooth landscapes.
Unlike directed or oriented \VCSP s, however, we show how to recognize if an arbitrary binary Boolean \VCSP\ is conditionally-smooth in polynomial time (\cref{alg:precSmoothCheck}).

In \cref{sec:efficientAlgos}, we show that --
unlike general semismooth fitness landscapes (and like smooth landscapes and oriented \VCSP s)
-- conditionally-smooth fitness landscapes are not just effective-for but also efficient-for-many local search algorithms.\footnote{
We do not focus on \emph{efficient-for-all} (or more formally: landscapes where \emph{all} ascents are polynomial length) because we think that this case is too strict and has been largely resolved. 
Kaznatcheev, Cohen and Jeavons~\textcite{repCP} showed that all ascents have length $\leq \binom{n+1}{2}$ in fitness landscapes that are implementable by binary Boolean \VCSP\ with tree-structured constraint graphs.
Efficient-for-all cannot be pushed much further than Boolean trees: 
Kaznatcheev, Cohen and Jeavons~\textcite{repCP} also gave examples of \VCSP{s} with exponential ascents from
(a) domains of size $\geq 3$ and path-structured constraint graphs, or 
(b) Boolean domains and constraint graphs of pathwidth $2$. 
This is why, we switch from the question of efficient-for-all to the question of \emph{efficient-for-many}.
}
Specifically, we show that conditionally-smooth fitness landscapes are solved efficiently by many local search algorithms including \lang{RandomAscent} (\cref{prop:RA_stepsBound}); \lang{ShakenAscent} (\cref{prop:runningtime-shaken-ascent}); \lang{SimulatedAnnealing} (\cref{prop:SA_stepsBound}); \lang{ZadehsRule}, \lang{LeastRecentlyConsidered}, and other history-based local search (\cref{prop:HB_stepsBound}); \lang{AntipodalJump} and \lang{JumpToBest} (\cref{prop:runningtime-antipodal-and-best-jump}); \lang{KerninghanLin} (\cref{prop:KL_bound}); and \lang{RandomJump} (\cref{prop:randomJump_steps}) --
all of these are examples a broader class of local search algorithms that we name \emph{poly-bypass} local search algorithms (\cref{def:poly-bypass}).
Conditionally-smooth landscapes are a more expressive class than the tree-structured binary Boolean \VCSP{s} that (fully) capture the class of fitness landscapes that are efficient-for-all local search algorithms~\cite{repCP}.
Thus, the conditionally-smooth structural property permits families of landscapes that are complex enough to break the efficiency of some popular local search algorithms that are not \emph{poly-bypass} algorithms.
In Section \ref{sec:inefficientAlgos}, we show that there are families of conditionally-smooth landscapes where finding the peak takes an exponential number of steps for \lang{SteepestAscent} (\cref{cor:HakenLubyBound}) and takes a superpolynomial number of steps for \lang{RandomFacet}  (\cref{cor:RandomFacetSlow}).
Given that both \lang{SteepestAscent} and \lang{RandomFacet}~\cite{randomFacetBound} are often considered to be very good local search algorithms,\footnote{
\lang{RandomFacet} is currently considered the best known algorithm for solving semismooth fitness landscapes.
It finds the peak in any semismooth fitness landscapes in a superpolynomial but subexponential number of steps -- even with the landscape given as a black-box~\cite{randomFacetBound}.
In \cref{cor:MatousekPrecSmooth}, we show that the family of semismooth landscapes that saturate this worst case behavior for \lang{RandomFacet} are conditionally-smooth.
} 
their inefficiency on conditionally-smooth landscapes tells us that conditionally-smooth landscapes are not a trivially easy to solve class of landscapes.

\section{Effectiveness from smooth to semismooth fitness landscapes}
\label{sec:background}

We consider \emph{assignments} $x \in \{0,1\}^n$ from the \emph{$n$-dimensional hypercube} 
where $x_i$ refers to the $i$-th entry of $x$.
To refer to a substring with indexes $S \subseteq [n]$, we will write the partial assignment $y = x[S] \in \{0,1\}^S$.
If we want to modify a substring with indexes $S \subseteq [n]$ to match a partial assignment $y \in \{0,1\}^S$ (or any string $y \in \{0,1\}^{|S|}$), we write $x[S \mapsto y]$.
We abbreviate $x[\{i\} \mapsto b]$ by just $x[i \mapsto b]$.
Two assignments are adjacent if they differ on a single bit: $x,y \in \{0,1\}^n$ are adjacent if there exists an index $i \in [n]$ such that $y=x[i \mapsto \bar{x}_i]$,
where $\bar{x}_i := 1 - x_i$ is the negation of $x_i$.

This combinatorial structure of adjacent assignments can be combined with the numeric structure of a pseudo-Boolean function (that we call the \emph{fitness function}) to create a \emph{fitness landscape}~\cite{W32,evoPLS,repCP}.
A fitness landscapes $f$ associates to each assignment $x$ the integer $f(x)$ and a set of adjacent assignments.
Given an assignment $x$, we let $\phi^+(x) = \{i \; | \; i \in [n] \text{ and } f(x[i \mapsto \bar{x}_i]) > f(x)\}$ be the indexes of variables that increase fitness when flipped (out-map) and $\phi^-(x) = \{i \; | \; i \in [n] \text{ and } f(x) \geq f(x[i \mapsto \bar{x}_i]) \}$ be the set of indexes that lower or do not increase fitness when flipped (in-map).
An assignment $x$ is a \emph{local peak} in $f$ if for all $y$ adjacent to $x$ we have $f(x) \geq f(y)$ (i.e., if $\phi^+(x)$ is empty).\footnote{
\label{fn:peak}Note that local peaks are not necessarily `strict', they can have adjacent assignments of equal (but not greater) fitness.
So what some call a `fitness plateau' is for us a collection of adjacent local peaks.
This transforms how we think about popular `hard' landscapes like \lang{Needle} from the evolutionary computation literature.
A landscape $f$ is a \lang{Needle} landscape if there exists a single assignment $x^\text{needle}$ such that $f(x^\text{needle}) = 1$ and for all other assignments $x \neq x^\text{needle} \; f(x) = 0$.
We use the scare quotes around hard because for our definition of local peaks, \lang{Needle} landscapes are trivial: every assignment is either a local peak (for $x^\text{needle}$ and any $x \neq x^\text{needle}$ not directly adjacent to $x^\text{needle}$) 
or directly adjacent to a local peak (for $y$ directly adjacent to  $x^\text{needle}$)
and are thus solved by any local search algorithm in at most one step.
\lang{Needle} landscapes are only hard if we are looking for a global peak instead of any local peak. 
So from the perspective of our effective vs efficient distinction, many intractability results that rely on the difficulty of `navigating fitness plateaus or crossing fitness valleys' are statement of ineffectiveness rather than inefficiency.
This is one of the big differences between our approach and results, and the approach and results in the evolutionary computation literature~\cite{fitnessLevel1,EvoComp1,EvoComp2,LocalSearch_Book4,EvoComp3,fitnessLevel2}.
}
A sequence of assignments $x^0,x^1,\ldots,x^T$ is an \emph{ascent} if every $x^{t - 1}$ and $x^t$ are adjacent with the latter having higher fitness (i.e., $\forall t \leq T \; x^{t} = x^{t - 1}[i \mapsto \overline{x^{t - 1}_i}] \text{ and } i \in \phi^+(x^{t - 1})$) and $x^T$ is a local peak.

A fitness landscape $f$ is \emph{smooth} if for each $i \in [n]$ there exists an assignment $x^*_i$ such that for all assignments $x$ we have $f(x[i \mapsto x^*_i]) > f(x[i \mapsto \overline{x^*_i}])$.
In other words, in a smooth landscape, each variable $x_i$ has a preferred assignment $x^*_i$ that is independent of how other variables are assigned.
It is easy to see that all ascents are short (i.e, $\leq n$) in smooth fitness landscapes, so smooth landscapes are effective- and efficient-for-all local search algorithms.

We can relax the definition of smooth while maintaining effectiveness: a fitness landscape on the hypercube is a \emph{semismooth fitness landscape} if every subcube is single peaked~\cite{evoPLS}.
These are also known as completely unimodal pseudo-Boolean functions~\cite{completelyUnimodal}, or acyclic unique-sink orientations (AUSOs) of the hypercube~\cite{AUSO_Thesis,AUSO}.
Semismooth fitness landscapes have a nice characterization in terms of the biological concept of sign epistasis~\cite{evoPLS,rse2,rse1,se}:

\begin{definition}[Kaznatcheev, Cohen and Jeavons \textcite{repCP}]\label{def:sign-depends}
    We say index $i$ \emph{sign-depends} on $j$ in background $x$ (and write  $\se[x]{j}{i}$) if 
    $f(x[i\mapsto \Bar{x}_i]) > f(x)$ and 
    $f(x[\{i,j\} \mapsto \Bar{x}_i\Bar{x}_j]) \leq f(x[j\mapsto \Bar{x}_j])$
     If there is no background assignment $x$ such that $\se[x]{j}{i}$ then we say that $i$ \emph{does not sign-depend} on $j$ (write $\nse{j}{i}$).
    If for all $j\neq i$ we have $\nse{j}{i}$ then we say that $i$ is \emph{sign-independent}.
\end{definition}

This terminology of `sign' comes from the observation that the sign of the fitness effect of a change in $x_i$ depends on the value of $x_j$.
Since the sign of the fitness effect of a change in $x_i$ just indicates the preferred assignment of $x_i$ (with a positive sign indicating $x^*_i = 1$ and negative indicating $x^*_i = 0$), it is easy to link this definition to smooth landscapes:
a smooth landscape is a fitness landscapes where all indexes are sign-independent.
Or stated in the negative, a smooth landscape is one without sign-dependence.
We will relax this negative definition to get semismooth landscapes by instead excluding the concept of `reciprocal sign epistasis'~\cite{rse1,rse2}:

\begin{definition}[Poelwijk et al.\textcite{rse2}]\label{def:rse}
    If there exists a background assignment $x$ such that $\se[x]{j}{i}$ and $\se[x]{i}{j}$ then we say that $i$ and $j$ have \emph{reciprocal sign epistasis} in background $x$ (and write $\rse[x]{i}{j}$).
    If there is no background assignment $x$ such that $\rse[x]{i}{j}$ then we say $i$ and $j$ \emph{do not have reciprocal sign epistasis}, and use the symbol $\nrse{i}{j}$.
\end{definition}

\noindent If the background $x$ is clear form context or not important then we drop the superscript in the above notations and just write $\se{j}{i}$ or $\rse{i}{j}$.
The absence of reciprocal sign epistasis is clearly necessary for a fitness landscape to be semismooth, but it is also sufficient:

\begin{proposition}[Hammer et al.\textcite{completelyUnimodal}, Poelwijk et al.\textcite{rse2}, and Kaznatcheev \textcite{evoPLS}]
A fitness landscape $f$ on $n$ bits is semismooth if and only if for all $i,j \in [n]$ $\nrse{i}{j}$.
\label{prop:semismooth}
\end{proposition}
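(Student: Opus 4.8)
The ``necessary'' direction is the easy one (the text already calls it clear): if $\rse[x]{i}{j}$ then expanding \cref{def:sign-depends,def:rse} gives $f(x[i\mapsto\bar x_i])>f(x)$, $f(x[j\mapsto\bar x_j])>f(x)$, and $f(x[\{i,j\}\mapsto\bar x_i\bar x_j])\le\min\{f(x[i\mapsto\bar x_i]),f(x[j\mapsto\bar x_j])\}$; hence in the $2$-dimensional subcube through $x$ spanned by coordinates $i,j$ the two distinct assignments $x[i\mapsto\bar x_i]$ and $x[j\mapsto\bar x_j]$ are both local peaks, so that subcube is not single peaked and $f$ is not semismooth. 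I would spend all the real effort on ``sufficient''. Throughout I would assume $f$ is injective on adjacent assignments: this is implicit in the AUSO formulation (it makes the induced edge orientation well-defined) and is genuinely needed, since a constant $f$ has no reciprocal sign epistasis yet every assignment is a local peak.

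The plan for ``$\nrse{i}{j}$ for all $i,j\;\Rightarrow\;$ semismooth'' starts from the coordinatewise negation of \cref{def:sign-depends,def:rse}: having $\nrse{i}{j}$ for all $i,j$ is equivalent to the statement that for every assignment $x$ and every pair $i\ne j$,
\[
 i,j\in\phi^+(x)\ \Longrightarrow\ i\in\phi^+(x[j\mapsto\bar x_j])\ \text{ or }\ j\in\phi^+(x[i\mapsto\bar x_i]),
\]
i.e.\ of any two improving flips available at $x$, at least one is still improving after the other is made. For each $x$ let $\mathrm{Apex}(x)$ be the set of local peaks that occur as the final assignment of some ascent starting at $x$; it is nonempty (extend greedily), equals $\{x\}$ when $x$ is a local peak, and otherwise equals $\bigcup_{i\in\phi^+(x)}\mathrm{Apex}(x[i\mapsto\bar x_i])$.

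The core claim is that $\mathrm{Apex}(x)$ is always a singleton, which I would prove by induction on $r(x):=|\{y:f(y)>f(x)\}|$. The base case $r(x)=0$ is the global maximum. For the step, suppose $x$ is not a peak and take $i\ne j$ in $\phi^+(x)$ (the case $|\phi^+(x)|=1$ is immediate from the induction hypothesis); by the reformulation say $i\in\phi^+(x[j\mapsto\bar x_j])$, so $z:=x[\{i,j\}\mapsto\bar x_i\bar x_j]$ satisfies $f(z)>f(x[j\mapsto\bar x_j])>f(x)$, whence $r(z)<r(x[j\mapsto\bar x_j])<r(x)$ and $r(x[i\mapsto\bar x_i])<r(x)$, so by induction $\mathrm{Apex}$ is a singleton at $z$, at $x[j\mapsto\bar x_j]$, and at $x[i\mapsto\bar x_i]$. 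Prepending the improving step $x[j\mapsto\bar x_j]\to z$ gives $\mathrm{Apex}(z)\subseteq\mathrm{Apex}(x[j\mapsto\bar x_j])$, hence equality of singletons; and since $z$ and $x[i\mapsto\bar x_i]$ are adjacent with distinct $f$-values, exactly one of the two improving steps between them exists, and prepending it gives $\mathrm{Apex}(x[i\mapsto\bar x_i])=\mathrm{Apex}(z)$ in either case. Thus the singletons $\mathrm{Apex}(x[i\mapsto\bar x_i])$ agree over all $i\in\phi^+(x)$, so $\mathrm{Apex}(x)$ is a singleton. I expect this inductive step — in particular noticing that \emph{both} outcomes of comparing $f(z)$ with $f(x[i\mapsto\bar x_i])$ hand us a usable improving step — to be the main obstacle; the rest is bookkeeping.

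To finish, write $\mathrm{Apex}(x)=\{P(x)\}$. For adjacent $x,y$ with $f(y)>f(x)$, prepending $x\to y$ shows $P(x)=P(y)$, so $P$ is constant on the connected hypercube; since $P(p)=p$ for any local peak $p$, $f$ has a unique local peak and is single peaked. Finally, the hypothesis ``$\nrse{i}{j}$ for all $i,j$'' is inherited by every subcube — the four assignments witnessing reciprocal sign epistasis inside a subcube also witness it in $f$ — so applying the argument above to each subcube shows every subcube of $f$ is single peaked, i.e.\ $f$ is semismooth.
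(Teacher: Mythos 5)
Your proof is correct, but note that the paper does not actually prove \cref{prop:semismooth} --- it is stated as a known result and attributed to Hammer et al., Poelwijk et al., and Kaznatcheev, with the necessity direction dismissed as clear. So there is no in-paper argument to compare against; what you have written is a self-contained proof of the cited theorem (essentially the classical fact that a pseudo-Boolean function is completely unimodal iff every $2$-dimensional face has a unique local maximum). Two remarks. First, your observation that the statement is literally false without a genericity assumption (adjacent assignments taking distinct values) is a real point: a constant landscape has $\nrse{i}{j}$ for all $i,j$ vacuously, since sign-dependence requires a strictly improving flip, yet it is not single-peaked. The paper inherits this assumption silently from the AUSO literature, and you are right both to flag it and to note that it is used twice in your argument (to orient the edge between $z$ and $x[i\mapsto\bar{x}_i]$, and to make $P$ constant across every edge of the hypercube). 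Second, your induction on $r(x)=|\{y: f(y)>f(x)\}|$ with the ``reachable apex'' set is a clean alternative to the dimension-induction one usually finds in the cited sources; the key step --- that whichever way the edge between $x[\{i,j\}\mapsto\bar{x}_i\bar{x}_j]$ and $x[i\mapsto\bar{x}_i]$ is oriented you obtain a usable improving step, so the two singleton apices must coincide --- is exactly where the hypothesis $\nrse{i}{j}$ does its work, and the restriction of the hypothesis to subcubes at the end is handled correctly.
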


Conveniently, semismooth landscapes always have a short ascent to the unique peak:

\begin{proposition}[Hammer et al.\textcite{completelyUnimodal}, and Kaznatcheev \textcite{evoPLS}]
A semismooth fitness landscape has only one local (thus global) peak at $x^*$
and given any initial assignment $x^0$, there exists an ascent to $x^*$ of Hamming distance (i.e., length $\leq n$).
\label{prop:semismooth_shortAscents}
\end{proposition}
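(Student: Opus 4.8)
The plan is to treat the two assertions in turn. For uniqueness of the peak I would observe that the full hypercube $\{0,1\}^n$ is itself a subcube (freeze the empty set of coordinates), so semismoothness applied to it already gives a unique local peak $x^*$; since a global peak is in particular a local peak, $x^*$ is also the unique global peak. For the second assertion I would prove, by induction on the Hamming distance $d = |\{i \in [n] : x^0_i \neq x^*_i\}|$, the sharper statement that there is an ascent from $x^0$ to $x^*$ of length exactly $d$; taking $d \leq n$ then gives the bound in the statement. The base case $d = 0$ is immediate, since then $x^0 = x^*$ is already the local peak.

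For the inductive step I would single out the set $D = \{i : x^0_i \neq x^*_i\}$ of disagreeing coordinates and pass to the subcube $C$ obtained by freezing every coordinate outside $D$ to its common value in $x^0$ and $x^*$, so that both $x^0 \in C$ and $x^* \in C$. Since $x^*$ is a global peak of $f$, every neighbour of $x^*$ — in particular every neighbour lying inside $C$ — has fitness at most $f(x^*)$, so $x^*$ is a local peak of $C$; single-peakedness of $C$ then forces $x^*$ to be its \emph{only} local peak. Consequently $x^0 \neq x^*$ is not a local peak of $C$, hence some neighbour of $x^0$ inside $C$ strictly improves fitness, and that neighbour is $x^1 = x^0[i \mapsto \overline{x^0_i}]$ for some coordinate $i \in D$ with $i \in \phi^+(x^0)$. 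Because $i \in D$ we have $\overline{x^0_i} = x^*_i$, so $x^1$ now agrees with $x^*$ on all of $[n]\setminus D$ and on $i$; thus $x^1$ is at Hamming distance $d-1$ from $x^*$. Applying the induction hypothesis to $x^1$ and prepending the step $x^0 \to x^1$ yields an ascent of length $d$ from $x^0$ to $x^*$ that ends at the local peak $x^*$, as required.

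The one genuinely load-bearing point — and the place I expect the only real friction — is the claim that $x^*$ is the unique local peak of the auxiliary subcube $C$. This is exactly where both parts of the hypothesis are needed together: single-peakedness of $C$ only yields uniqueness of \emph{some} local peak, and it is the fact that $x^*$ is a \emph{global} (not merely local) peak of $f$ that guarantees $x^*$ remains a peak after we discard all coordinates outside $D$, pinning it down as that unique peak. Everything else is bookkeeping: each improving step taken inside $C$ necessarily corrects one coordinate of $D$, so the distance to $x^*$ strictly decreases and the ascent terminates after exactly $d$ moves.
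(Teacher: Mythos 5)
Your argument is correct, and it is essentially the canonical proof of this result: the paper itself states \cref{prop:semismooth_shortAscents} without proof, importing it from Hammer et al.\ and Kaznatcheev, and the argument in those sources is exactly your restriction to the subcube spanned by the disagreement set $D$ followed by induction on the Hamming distance. One small remark: you do not actually need $x^*$ to be a \emph{global} peak to conclude it is a peak of the subcube $C$ --- being a local peak of $f$ already suffices, since the neighbours of $x^*$ inside $C$ are a subset of its neighbours in $\{0,1\}^n$ --- but this does not affect correctness.
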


\noindent Unlike smooth landscapes, however, not all ascents in semismooth fitness landscapes are short,
and finding and following a short ascent is not easy.
The Klee-Minty cube~\cite{KM72} 
is a construction of a semismooth fitness landscape on $\{0,1\}^n$ with an ascent of length $2^n - 1$.
As for local search algorithms, constructions exists such that semismooth landscapes are not tractable in polynomial time by various popular ascent-following local search algorithms including \lang{SteepestAscent}~\cite{fittestHard3,fittestHard2,evoPLS}, \lang{RandomAscent}~\cite{randomFitter2,randomFitter1}, \lang{RandomFacet}~\cite{randomFacetBound}, jumping rules~\cite{LongJump}, and many others.
Thus, local search is not efficient on \emph{all} semismooth fitness landscapes.
So we start our search for single-peaked landscapes that are efficient-for-many local search algorithms by looking for natural subclasses of semismooth fitness landscapes.

\section{Representing semismooth fitness landscapes by directed \VCSP s}
\label{sec:AUSOfromVCSP}
\label{sec:recognize_doVCSP}

As blackboxes, fitness landscapes are unwieldy exponentially large objects, so we open the blackbox by representing landscapes by instances of valued constraint satisfaction problems (\VCSP s)~\cite{repCP}.
We can then find which natural subclass of \VCSP{s} represents semismooth fitness landscapes along with an algorithms for checking when a given \VCSP-instance has these properties.
A Boolean \VCSP-instance is a set of constraint weights $\mathcal{C} = \{c_S\}$ 
where each weight $c_S \in \mathbb{Z} \setminus \{0\}$ has a \emph{scope} $S \subseteq [n]$.
This set of constraints \emph{implements} a pseudo-Boolean function: 
\begin{equation}
f(x) = \sum_{c_S \in \mathcal{C}} c_S \prod_{j \in S} x_j.
\end{equation}
If $|S| \leq 2$ for all constraints then we say the \VCSP\ is binary.
We also view every binary \VCSP-instance $\mathcal{C}$ as a \emph{constraint graph} with edges $\{i,j\} \in E(\mathcal{C})$ if $c_{ij} := c_{\{i,j\}} \in \mathcal{C}$
and a neighbourhood function 
$N_\mathcal{C}(i) = \{j \; | \; \{i,j\} \in E(\mathcal{C})\}$.
This constraint graph is a way of representing where the potential sign-dependencies are in the fitness landscape.

Given a binary Boolean \VCSP{} $\mathcal{C}$ on the whole $n$-dimensional hypercube, it is sometimes useful to consider the binary Boolean \VCSP{} $\mathcal{C}'$ restricted to just a subset of the indexes $R \subseteq [n]$ with the other variables in $S := [n] \setminus R$ fixed to some assignment $y \in \{0,1\}^{S}$ (i.e., restricted to the face $\{0,1\}^R y$).
This restricted \VCSP{} $\mathcal{C}'$ will have the same binary constraint as $\mathcal{C}$ (i.e., if $i,j \in R$ and $c_{ij} \in \mathcal{C}$ then $c_{ij} \in \mathcal{C'}$) but the unary constraints will change to what we call the \emph{effective unaries} that `absorb' the binary constraints that cross the $R$-$S$ cut:

\begin{definition}\label{def:effective-unary} 
    Given a variable index $i$ with neighbourhood $N(i)$ and $R\subseteq N(i)$ a set of indices we do not want to fix, we define the \emph{effective unary} $\hat{c}_i(x,R) = c_i+\sum_{j\in N(i)\setminus R}x_j c_{ij}$. 
    For simplicity, we write $\hat{c}_i(x)$ for $\hat{c}_i(x,\emptyset)$.
\end{definition}

\noindent If $x$ and $y$ are two assignments with $x[N(i)\setminus R] = y[N(i)\setminus R]$
then $\hat{c}_i(x,R)=\hat{c}_i(y,R)$.
We use this to overload $\hat{c}_i$ to partial assignments: given $T \supseteq N(i) \setminus R$ and 
$y \in \{0,1\}^T$, we interpret $\hat{c}_i(y,R)$ as $\hat{c}_i(y0^{[n] \setminus T},R)$. 
From this, it is easy to see that restricting our \VCSP{} $\mathcal{C}$ to $\mathcal{C}$' on the face $\{0,1\}^R y$ will change the unaries to $c'_i = \hat{c}_i(y,R)$ (for $i \in R$).

In terms of representation of smooth landscapes, it is clear than if all constraints in a \VCSP{} are unary then the resulting fitness landscape is smooth.
But even given a binary Boolean \VCSP-instance with non-unary constraints, it is easy to check if that \VCSP-instance implements a smooth landscape by checking if the unary constraints are ``sufficiently large'' compared to the relevant binary constraints.
Or more generally, given any partial assignment $y \in \{0,1\}^S$ it is easy to check if the face $\{0,1\}^{[n] - S}y$ is smooth using the effective unaries.
We will do this check one variable at a time.
For a variable $x_i$ to have a preferred assignment that is independent of how other variables are assigned, its unary must dominate over the binary constraints that it participates in -- its unary must have big magnitude.
For example, suppose that $c_i > 0$ then $x_i$ will prefer to be $1$ if all of its neighbours are $0$.
This preference must not change as any $x_j$ with $j \in N(i)$ change to $1$s.
Since $x_j$ with $c_{ij} < 0$ are the only ones that can lower $x_i$'s preference for $1$, this becomes equivalent to checking if
$c_i > \sum_{j \in N(i) \text{ s.t. } c_{ij} < 0}|c_{ij}|$.
There exists a combination of $x_j$s that flip $x_i$'s preference if and only if this inequality is violated.
In moving from this particular example to the general case, we can also generalize our algorithm to check not only if $x_i$ has an independent preference in the whole landscape but also if it has a conditionally independent preference conditional on some variables with indexes $j \in S$ not being able to vary and fixed to some $y \in \{0,1\}^S$.
To do this we only need to replace $c_i$ by the effective unary of \cref{def:effective-unary} $\hat{c}_i = \hat{c}_i(y,N(i) \setminus S) := c_i + \sum_{j \in S \cap N(i)}y_j c_{ij}$ and consider $c_{ij}$s with $\text{sgn}(\hat{c}_i) \neq \text{sgn}(c_{ij})$ and $j \in N(i) \setminus S$.
We encode this in  \lang{ConditionallySignIndependent}$(\mathcal{C},i,S,y)$ of \cref{alg:sign-independence}.

\begin{restatable}[!htb]{algorithm}{printAlgSI}
\caption{\lang{ConditionallySignIndependent}$(\mathcal{C},i,S,y)$:}
\label{alg:sign-independence}
\begin{algorithmic}[1]
    \Require \VCSP-instance $\mathcal{C}$, variable index $i$, subset of indexes $S\subset [n]$, an assignment $y$.
    
    \State Compute $\hat{c} \gets \hat{c}_i(y,N(i) \setminus S) = c_i + \sum_{j \in S \cap N(i)} y_jc_{ij}$
    \State Compute $c_\text{flip} \gets \sum_{j \in N(i) \setminus S} c_{ij}$ when $\text{sgn}(c_{ij}) \neq \text{sgn}(\hat{c})$.
    \State \textbf{return} $|\hat{c}|>|c_\text{flip}|$
\end{algorithmic} 
\end{restatable}

\noindent Given a binary Boolean \VCSP-instance $\mathcal{C}$, this algorithms runs in time linear in the maximum degree $\Delta(\mathcal{C})$ of the constraint graph.
To check if $\mathcal{C}$ implements a smooth landscape, run \lang{ConditionallySignIndependent}$(\mathcal{C},i,\emptyset,0^n)$ on every $i$ and return the conjunction of their outputs in $O(\Delta(\mathcal{C})n)$ time.

Now let us return to view the constraint graph as a way of represetning where the potential sign-dependencies are in a fitness landscape.
If \VCSP-instance with binary constraints is smooth then this tells us that the edges in the constraint-graph did not actually encode any actual sign-dependence.
This idea can be taken further by converting edges to arcs and assigning `directions' to the binary constraint $c_{ij}$ based on how its weight compares to the effective unaries $\hat{c}_i$ across various background assignments (see \cref{fig:arcDirections} for illustration): 

\begin{definition}
    For \VCSP-instance $\mathcal{C}$ we set the \emph{arcs} $A(\mathcal{C})$ such that for $\{i,j\} \in E(\mathcal{C})$:
    \begin{enumerate}[(1)]
        \item we set $\arc[b]{i}{j}\in A(\mathcal{C})$ if there exists an assignment $x$ with $|c_{ij}| > \max\{ |\hat{c}_i(x,\{i,j\})|, |\hat{c}_j(x,\{i,j\})| \}$ with $\text{sgn}(c_{ij}) \neq \text{sgn}(\hat{c}_i(x,\{i,j\}))$ and $\text{sgn}(c_{ij}) \neq \text{sgn}(\hat{c}_j(x,\{i,j\}))$.
        \item otherwise we set:
        \textbf{(a)} $\arc[r]{i}{j}\in A(\mathcal{C})$ if there exists an assignment $y$ such that $|c_{ij}| > |\hat{c}_j(y,\{i,j\})|$ with $\text{sgn}(c_{ij}) \neq \text{sgn}(\hat{c}_j(y,\{i,j\}))$, 
        and 
        \textbf{(b)} $\arc[l]{i}{j}\in A(\mathcal{C})$ if there exists an assignment $z$ such that $|c_{ij}| > |\hat{c}_i(z,\{i,j\})|$ with $\text{sgn}(c_{ij}) \neq \text{sgn}(\hat{c}_i(z,\{i,j\}))$.
    \end{enumerate}
    \label{def:VCSP_arcs}
\end{definition}

\noindent From this, each edge  $\{i,j\} \in E(\mathcal{C})$ is assigned only one of the three kinds of arcs: $\{\arc[b]{i}{j}\}$ or $\{\arc[r]{i}{j}$ and/or $\arc[l]{i}{j}\}$ or $\{\}$.
In \cref{fig:arcDirections}, we provide some minimal prototypical examples of \VCSP-instances that have (a) no direction for the constraint with scope $\{i,j\}$, (b) $\arc[r]{i}{j} \in A(\mathcal{C})$, (c) both $\arc[r]{i}{j}, \arc[l]{i}{j} \in A(\mathcal{C})$, or (d) $\arc[b]{i}{j} \in A(\mathcal{C})$.
Most of these examples are on two variables.
Since it is impossible to have both both $\arc[r]{i}{j}$ and $\arc[l]{i}{j}$ without a $k \in N(i) \cup N(j)$ (\cref{prop:unclustered_bi}), the minimal instance in \cref{fig:arcDirections2se} requires three variables.
It is easy to check that the overloading of the sign-dependence and reciprocal sign epistasis symbols is appropriate.
Most importantly, the fitness landscape will have reciprocal sign epistasis if and only if the \VCSP-instance $\mathcal{C}$ implementing it has $\{\arc[b]{i}{j}\} \in A(\mathcal{C})$.

\begin{figure}
    \hfill
    \begin{subfigure}[b]{0.22\textwidth}
        \begin{tikzpicture}[scale=0.75]
        \tikzstyle{every node}=[font=\small]
        \node[draw, circle, minimum size=4pt, label=left:$3$] (i) at (0,0) {$i$};
        \node[draw, circle, minimum size=4pt, label=right:$3$] (j) at (2,0) {$j$};
        \draw[thick,dashed] (i) -- (j) node [midway, above, sloped, fill=white] {$-2$};
        \path [use as bounding box] (0,-0.5) rectangle (1.5,2.0);
    \end{tikzpicture}
    \subcaption{$\{\} \in A(\mathcal{C})$}
    \end{subfigure}
    \hfill
    \begin{subfigure}[b]{0.22\textwidth}
    \begin{tikzpicture}[scale=0.75]
        \tikzstyle{every node}=[font=\small]

        \node[draw, circle, minimum size=4pt, label=left:$3$] (i) at (0,0) {$i$};
        \node[draw, circle, minimum size=4pt, label=right:$1$] (j) at (2,0) {$j$};
        \draw[->, thick] (i) -- (j) node [midway, above, sloped, fill=white] {$-2$};
        \path [use as bounding box] (-0.5,-0.5) rectangle (1.5,2.0);
    \end{tikzpicture}
    \subcaption{$\arc[r]{i}{j}\in A(\mathcal{C})$}
    \end{subfigure}
    \hfill
    \begin{subfigure}[b]{0.24\textwidth}
        \begin{tikzpicture}[scale=0.75]
        \tikzstyle{every node}=[font=\small]
            
            \node[draw, circle, minimum size=4pt, label=left:$1$] (i) at (0,0) {$i$};
            \node[draw, circle, minimum size=4pt, label=right:$3$] (j) at (2,0) {$j$};
            \node[draw, circle, thick, dotted, minimum size=4pt, label=above:$3$] (k) at (1,1.5) {$k$};
            \node[circle] (inv) at (1,0) {$-2$};
            
            \draw[->, thick] (i) to [out=30,in=150] (j);
            \draw[->, thick] (j) to [out=210,in=-30] (i);
            
            \draw[->, thick, dotted] (k) -- (i) node [midway, above, sloped, fill=white] {$2$};
            \draw[->, thick, dotted] (k) -- (j) node [midway, above, sloped, fill=white] {$-2$};
        \path [use as bounding box] (-0.5,-0.5) rectangle (1.5,2.0);
        \end{tikzpicture}
    \subcaption{$\arc[r]{i}{j},\arc[l]{i}{j}\in A(\mathcal{C})$}
    \label{fig:arcDirections2se}
    \end{subfigure}
    \hfill
    \begin{subfigure}[b]{0.24\textwidth}
        \begin{tikzpicture}[scale=0.75]
        \tikzstyle{every node}=[font=\small]
            
            \node[draw, circle, minimum size=4pt, label=left:$1$] (i) at (0,0) {$i$};
            \node[draw, circle, minimum size=4pt, label=right:$1$] (j) at (2,0) {$j$};
            \draw[<->, thick] (i) -- (j) node [midway, above, sloped, fill=white] {$-2$};
        \path [use as bounding box] (-0.5,-0.5) rectangle (1.5,2.0);
        \end{tikzpicture}
    \subcaption{$\arc[b]{i}{j}\in A(\mathcal{C})$}
    \end{subfigure}
    \caption{Four \VCSP{} instances illustrating the different arc directions of \cref{def:VCSP_arcs}.
    Weights of unary constraints are next to nodes and weights of binary constraints are above the edges.}
    \label{fig:arcDirections}
\end{figure}

The advantage of \cref{def:VCSP_arcs} over black-box features of the fitness landscape is its statement in terms of properties of just the \VCSP-instance.
This means, for example, that if we want to check the potential direction of a constraint with scope $\{i,j\} \in E(\mathcal{C})$ with $c_{ij} < 0$ then we need to compare it to the effective unaries $\hat{c}_i(x,\{i,j\})$ and $\hat{c}_j(x,\{i,j\})$ for various choices of $x$.
If we find an $x$ such that both $|c_{ij}| > \hat{c}_i(x,\{i,j\}) > 0$ and $|c_{ij}| > \hat{c}_j(x,\{i,j\}) > 0$ are simultaneously satisfied then we output that $\arc[b]{i}{j} \in A(\mathcal{C})$.
If we only ever satisfy one or fewer of these equations for all choices of $x$ then we need to output a subset of $\{\arc[r]{i}{j},\arc[l]{i}{j}\}$ depending on which of the comparisons was true.
Outputing $\{\}$ if neither comparison was ever satisfied, $\arc[r]{i}{j}$ if only the first was satisfied, $\arc[l]{i}{j}$ if only the second, and $\{\arc[r]{i}{j},\arc[l]{i}{j}\}$ if each was true in a different background.
Finally, to be fixed-parameter tractable, it is important to use that $\hat{c}_i(x,R)=\hat{c}_i(y,R)$ when $x[N(i)\setminus R] = y[N(i)\setminus R]$ to limit our search to just partial assignments $x \in \{0,1\}^{N(i) \cup N(j) \setminus \{i,j\}}.$
We formalize this as \cref{alg:VCSP_arcs}:

\begin{restatable}[!htb]{algorithm}{printAlgArcs}
\caption{\lang{ArcDirection}($\mathcal{C},i,j$):}
\label{alg:VCSP_arcs}
\begin{algorithmic}[1]
\Require \VCSP-instance $\mathcal{C}$ and variable indexes $i$ and $j$.
\State Initialize $A \gets \{\}$
\For{$x \in \{0,1\}^{N(i) \cup N(j) \setminus \{i,j\}}$}
\State Set $B \gets \{\}$, $\hat{c}_i = \hat{c}_i(x,\{i,j\})$, $\hat{c}_j = \hat{c}_j(x,\{i,j\})$
\State \algorithmicif\ $|\hat{c}_i| < |c_{ij}|$ and $\text{sgn}(\hat{c}_i) \neq \text{sgn}(c_{ij})$ \algorithmicthen\ $B \gets B \cup \{\arc[r]{i}{j}\}$
\State \algorithmicif\ $|\hat{c}_j| < |c_{ij}|$ and $\text{sgn}(\hat{c}_j) \neq \text{sgn}(c_{ij})$ \algorithmicthen\ $B \gets B \cup \{\arc[l]{i}{j}\}$
\State \algorithmicif\ $|B| < 2$ \algorithmicthen\ $A \gets A \cup B$
\algorithmicelse\ \textbf{return} $\{\arc[b]{i}{j}\}$
\EndFor
\State \textbf{return} A
\end{algorithmic}
\end{restatable}

The resulting worst-case runtime is  $2^{O(\Delta(\mathcal{C}))}$,
or an overall runtime of $2^{O(\Delta(\mathcal{C}))}O(\Delta(\mathcal{C})n)$ to determine the direction of all arcs in the \VCSP-instance $\mathcal{C}$.
This is fixed-parameter tractable when parameterized by the maximum number of constraints incident on a variable ($\Delta(\mathcal{C})$).
A fully polynomial time algorithm for finding arc directions is unlikely given that the questions ``is $\rse{i}{j} \in A(\mathcal{C})$?'' (\cref{prop:rse_NPC}) and ``are both $\se{i}{j}$ and $\se{j}{i}$ in $A(\mathcal{C})$?'' (\cref{prop:doublese_NPC}) are \NP-complete by reduction from \lang{SubsetSum}.
However, we can still define two natural subclasses of \VCSP s by restricting the bidirected constraint graph: 

\begin{definition}
    We say that a VCSP instance $\mathcal{C}$ is \emph{directed} if $\mathcal{C}$ has no bidirected arcs.
    We say that a directed $\mathcal{C}$ is \emph{oriented} if it has at most one arc for every pair of variables $i\neq j$.
    \label{def:VCSP_dir_ori}
\end{definition}

\noindent In \cref{sec:propVCSPs}, we extend \cref{prop:semismooth} to representations to show that
a quadratic pseudo-Boolean $f$ is semismooth if and only if the corresponding \VCSP{} is directed (\cref{prop:directedVCSP_semismooth}), 
any triangle-free directed \VCSP\ is oriented (\cref{lem:triangle-free-means-oriented}),
and oriented \VCSP s are always acyclic (\cref{prop:no-cycles}).
\cref{lem:triangle-free-means-oriented,prop:no-cycles} let us view oriented \VCSP{s} as a kind of generalization of of the tree-structured \VCSP{s} that Kaznatcheev, Cohen, and Jeavons~\textcite{repCP} showed are efficient-for-all local search algorithms.
Specifically, we replace the undirected acyclicity of trees by the directed acyclicity of DAGs.
Unfortunately, checking if a \VCSP\ is directed or oriented is \coNP-complete (\cref{cor:semismoothTesting_coNPC} and \cref{cor:orientedTesting_coNPC}).

\section{From oriented \VCSP{s} to conditionally-smooth fitness landscapes}
\label{sec:precSmooth}

One of these nice features of our two natural subclasses of directed and oriented \VCSP{s} is that they let us capture when fitness landscapes are both effective-for-all and efficient-for-many local search algorithms.
Directed \VCSP{s} capture the semismooth fitness landscapes that are effective-for-all local search algorithms, but that have no known efficient algorithms.
Oriented \VCSP{s} are then a further restriction to capture those semismooth fitness landscapes that are efficient-for-many local search algorithms.
By focusing on the main aspect of acyclicity that makes oriented \VCSP{} tractable, we can generalize that class to a slightly larger class of conditionally-smooth fitness landscapes that are also single-peaked (and so effictive-for-all local search algorithms) but also recognizable from the implementing \VCSP{}-instance.

The acyclicity of an oriented \VCSP-instance (\cref{prop:no-cycles}) lets us define a strict partial order $\prec$ as the transitive closure of the constraint graph and the corresponding down sets $\downarrow j = \{i\; | \; i \prec j\}$.
As we show later, what makes oriented \VCSP{s} efficient for many local search algorithms is that this order respects conditional independence (from \cref{alg:sign-independence}).
Specifically \cref{prop:orientedVCSPcombed}: for oriented \VCSP-instance $\mathcal{C}$, $\forall y \in \{0,1\}^{\downarrow j}$ $\lang{Conditionally SignIndependent}(\mathcal{C},j,\downarrow j,y) = \textsc{True}$. 
\cref{prop:orientedVCSPcombed} is a powerful defining feature of oriented \VCSP{s}, but it is more powerful (and restrictive) than necessary for efficiency.
To get our definition of the larger class of \emph{conditionally-smooth fitness landscapes}, we can relax from conditional sign independence for any background $y \in \{0,1\}^{\downarrow j}$ to just a single background $y = x^*[\downarrow j]$ where $x^*$ is the peak of a single-peaked landscape:

\newpage
\begin{definition}
Given a strict partially ordered set $([n],\prec)$ and $\downarrow j = \{i\; | \; i \prec j\}$,
we call a fitness landscape $f$ on $\{0,1\}^n$ a \emph{$\prec$-smooth fitness landscape} with optimum $x^*$ 
when for all $j \in [n]$ and $x \in \{0,1\}^n$, 
if $x[\downarrow j] = x^*[\downarrow j]$ then $f(x[j \mapsto x^*_j]) > f(x[j \mapsto \overline{x^*_j}])$.\footnote{
Thus $\emptyset$-smooth fitness landscapes are just smooth fitness landscapes.
}
We say $f$ is a \emph{conditionally-smooth} fitness landscape if there exists some $\prec$ such that $f$ is $\prec$-smooth.
\label{def:precSmooth}
\end{definition}

\noindent Conditionally-smooth landscapes generalize both oriented \VCSP s and recursively combed AUSOs (\cref{def:rcAUSO}).
Although conditionally-smooth landscapes are not always semismooth, they are singled peaked and have direct ascents from any assignment to the peak:

\begin{proposition}
A conditionally-smooth fitness landscape has only one local (thus global) peak at $x^*$
and given any initial assignment $x^0$, there exists an ascent to $x^*$ of Hamming distance (i.e., length $\leq n$).
\label{prop:precSmooth_SinglePeaked}
\end{proposition}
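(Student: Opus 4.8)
The plan is to run everything off the single defining inequality of \cref{def:precSmooth} evaluated at a carefully chosen coordinate. Fix a strict partial order $\prec$ on $[n]$ witnessing that $f$ is conditionally-smooth, with optimum $x^*$. First I would verify that $x^*$ is a (strict) local peak: for each $j\in[n]$, instantiate \cref{def:precSmooth} with $x=x^*$, so that the hypothesis $x[\downarrow j]=x^*[\downarrow j]$ holds trivially and we obtain $f(x^*)=f(x^*[j\mapsto x^*_j])>f(x^*[j\mapsto\overline{x^*_j}])$. As $j$ ranges over $[n]$ this says $x^*$ strictly beats all of its neighbours, so $\phi^+(x^*)=\emptyset$.

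Next I would show that every $x\neq x^*$ has a fitness-improving neighbour, namely the one obtained by flipping a $\prec$-minimal coordinate on which $x$ disagrees with $x^*$. Let $D(x)=\{i\in[n]:x_i\neq x^*_i\}$, which is non-empty, and choose $j$ to be $\prec$-minimal in $D(x)$ (it exists since $[n]$ is finite). By minimality no $i\prec j$ lies in $D(x)$, hence $x[\downarrow j]=x^*[\downarrow j]$, and \cref{def:precSmooth} yields $f(x[j\mapsto x^*_j])>f(x[j\mapsto\overline{x^*_j}])$. Since $x_j\neq x^*_j$, the right-hand side equals $f(x)$ while the left-hand side is $f(y)$ for the neighbour $y=x[j\mapsto\overline{x_j}]$; thus $j\in\phi^+(x)$ and $x$ is not a local peak. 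Together with the previous paragraph this shows $x^*$ is the unique local peak, and since any global maximiser of $f$ is in particular a local peak, $x^*$ is also the unique global peak.

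Finally, to produce the promised ascent from an arbitrary $x^0$, I would iterate the construction of the previous paragraph: as long as the current assignment $x^t$ differs from $x^*$, pick a $\prec$-minimal $j\in D(x^t)$ and set $x^{t+1}=x^t[j\mapsto\overline{x^t_j}]=x^t[j\mapsto x^*_j]$. The argument above gives $f(x^{t+1})>f(x^t)$ with $j\in\phi^+(x^t)$, and flipping a disagreeing coordinate so that it agrees with $x^*$ decreases $|D(\cdot)|$ by exactly one, so after $T=|D(x^0)|\le n$ steps we reach $x^T=x^*$, which is a local peak. Hence $x^0,x^1,\dots,x^T$ is a valid ascent whose length $T$ equals the Hamming distance between $x^0$ and $x^*$, and in particular $T\le n$.

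I do not anticipate a real obstacle: the whole proof hinges on the single observation that flipping a $\prec$-minimal disagreeing coordinate is always improving. The only things needing a touch of care are (i) checking that the hypothesis $x[\downarrow j]=x^*[\downarrow j]$ of \cref{def:precSmooth} is met at every application — which is exactly what $\prec$-minimality buys us — and (ii) confirming that the constructed sequence satisfies the paper's definition of an ascent (strictly increasing in fitness, and terminating at a local peak), both of which are immediate from the above.
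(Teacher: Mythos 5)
Your proposal is correct and follows essentially the same route as the paper's (much terser) proof: flip the $\prec$-minimal coordinate on which the current assignment disagrees with $x^*$, observe that $\prec$-minimality guarantees the hypothesis $x[\downarrow j]=x^*[\downarrow j]$ of the definition, and iterate to get an ascent of length equal to the Hamming distance. You have simply spelled out the details (uniqueness of the peak and termination of the iteration) that the paper leaves implicit.
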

\begin{proof}
    Let $f$ be a $\prec$-smooth landscape. Take any assignment $x\neq x^*$ and let $i$ be the $\prec$-smallest index such that $x_i\neq x_i^*$.
    Then $f(x[i\mapsto\overline{x_i}])>f(x)$.
\end{proof}

\noindent Unlike with directed or oriented \VCSP s, we can check if a binary Boolean \VCSP\ instance implements a fitness landscape that is conditionally smooth in polynomial time.
In fact, if the \VCSP\ does implement a conditionally-smooth fitness landscape then our recognition \cref{alg:precSmoothCheck} even returns a partial order $\prec$ and the preferred assignment $x^*$ 
such that $\mathcal{C}$ is $\prec$-smooth.
The overall approach is similar to checking if a \VCSP\ is smooth (i.e., $\emptyset$-smooth) with the difference being that subsequent calls to \lang{ConditionallySignIndependent}$(\mathcal{C},i,S,y)$ adjust the set of fixed variables $S$ and background assignment $y$ based on previous calls.
This gives the \lang{ConditionallySmooth}$(\mathcal{C})$ algorithm (\cref{alg:precSmoothCheck}).
This algorithm calls \lang{ConditionallySignIndependent} at most $\binom{n}{2}$ times for an overall runtime of $\binom{n}{2}\Delta(\mathcal{C})$ (for details of the analysis, see \cref{sec:precSmooth_rec}).

\begin{restatable}[!htb]{algorithm}{printAlgPrecSmooth}
\caption{\lang{ConditionallySmooth}($\mathcal{C}$)
\\ Checking if a \VCSP\ implements a $\prec$-smooth landscape.}
\label{alg:precSmoothCheck}
\begin{algorithmic}[1]
\Require \VCSP-instance $\mathcal{C}$
\State Initialize $\prec \; \gets \emptyset$, $S \gets \emptyset$, $x^* \gets 0^n$
\While{$S$ is not $[n]$}
\State $T \gets \emptyset$ and $x^\text{next} \gets x^*$
\For{$i \in [n] \setminus S$}
\If{\lang{ConditionallySignIndependent}$(\mathcal{C},i,S,x^*)$}
\State \algorithmicif\ $\hat{c}_i(x^*,N(i)\backslash S) > 0$ 
\algorithmicthen\ $x^\text{next}_i \gets 1$ \label{alg:precSmoothCheck_line:step}
\State $T \gets T + \{i\}$
\EndIf
\EndFor
\State \algorithmicif\ $T$ is empty 
\algorithmicthen\ \textbf{return} \textsc{False} \label{alg:precSmoothCheck_line:false}
\State $\prec \; \gets \; \prec + \; S \times T$, 
$S \gets S + T$, 
$x^*\gets x^\text{next}$
\EndWhile
\State \textbf{return} $(\prec,x^*)$
\end{algorithmic}
\end{restatable}

\section{Efficient local search in conditionally-smooth landscapes}
\label{sec:efficientAlgos}

Local search starts at some initial assignment $x^0$ and takes steps to assignments $x^1, x^2, \ldots , x^T$ with $x^T$ as a local peak.
If, additionally, for every $0 \leq t < T$ we have $f(x^{t + 1})>f(x^t)$ then local search followed an ascent.
For an arbitrary local search algorithm \lang{A} we let $\lang{A}^t_f(x)$ denote $t$ steps of \lang{A} from $x$ on fitness landscape $f$.

\begin{definition}
Given a polynomial $p(n)$, we say that an ascent-following\footnote{
One can modify \cref{def:poly-bypass} to apply to ascent-biased algorithms (\cref{app:effectiveFitnessDecreasing}) or jumping algorithms (\cref{app:efficientJumping}) instead of just ascent-following local search algorithms -- but this makes the definition more unwieldy and unintuitive.
We want to present poly-bypass algorithms as just a simple warm-up example to get at the main ideas of our later proofs.
So here we focus on just ascent-following poly-bypass for simplicity, and go into the nuance of stochastic, ascent-biased, and jumping rules when we focus on proving even tighter bounds for specific popular examples of those local search algorithms.
} local search algorithm $A$ is a \emph{$p(n)$-bypass ascent following local search algorithm} if given any $f$, any corresponding run $x^0, x^1, \cdots, x^T$ of $A$ on $f$, and all $s \in [T - p(n)]$, we have that $\cap_{t = s}^{s + p(n)} \phi^+(x^t)$ is empty (with high probability, for randomized algorithms).
If some polynomial $p(n)$ exists, but its specific form is not important to us, then we say that the algorithm is a \emph{poly-bypass} ascent-following local search algorithm.
\label{def:poly-bypass}
\end{definition}

\noindent For a $p(n)$-bypass algorithm, any index $i$ that could have flipped to increase fitness at some step $s$ (i.e., $i \in \phi(x^s)$), will become an index that cannot be flipped to increase fitness at some point in the subsequent $p(n)$ steps.
This can happen either because the variable with index $i$ flips or because some other indexes flip in a way that makes a flip at $i$ no longer fitness increasing.
In other words, no potential fitness-increasing flip was bypassed for more than $p(n)$ steps.
Hence, the name.

Now, we can easily show that conditionally-smooth landscapes are efficient for $p(n)$-bypass local search algorithms.

\begin{theorem}
On a $\prec$-smooth landscape $f$ on $n$ bits, given any initial assignment $x^0$ at Hamming distance $m (\leq n)$ to the fitness peak $x^*$, any $p(n)$-bypass local search algorithm starting at initial assignments $x^0$ takes at most $m \cdot p(n)$ steps to find the peak.
\label{thm:bypass_steps}
\end{theorem}

\begin{proof}
We prove this by induction on $m$.
For $m = 0$, $x^0 = x^*$ and local search finishes without taking a step.
Now we assume the inductive hypothesis is true for Hamming distance $\leq m - 1$ and show it is true if the Hamming distance between $x^0$ and $x^*$ is $m$. 
Let $i$ be the $\prec$-smallest index such that $x^0_i \not= x^*_i$.
Since the algorithm is $p(n)$-bypass, sometime by the $p(n)$th step, it will be at an assignment such that the $i$th bit doesn't want to flip.
Since $i$ was $\prec$-smallest index such that $x^0_i \not= x^*_i$ that means that the $i$th will always want to be in state $x^*_i$ and once it flips, it won't flip back because the algorithm is ascent-following, so $x^{p(n)}_i = x^*_i$.
Thus $x^{p(n)}$ has at least one more variable assignment in common with $x^*$ than $x^0$ did, so the Hamming distance between $x^{p(n)}$ and $x^*$ is $\leq m - 1$.
By the inductive hypothesis, the $p(n)$-bypass algorithm will find the fitness peak in at most $(m - 1)\cdot p(n)$ steps starting from $x^{p(n)}$.
This gives us a total number of steps of less than $m \cdot p(n)$.
\end{proof}

For a specific local search algorithm, the bound in \cref{thm:bypass_steps} can be rather loose.
To provide tighter bounds on the number of steps taken by many popular local search algorithms, we will need a bit more fine-grained notation and more careful proofs.
However, much like the proof of \cref{thm:bypass_steps}, all these proofs will rest on showing a bound on how long it takes to fix $\prec$-minimal indexes that disagree with $x^*$ and repeatedly applying that bound until the assignments agree (\cref{thm:main_stepsBound}).

To show that conditionally-smooth landscapes (and thus also oriented \VCSP{s}) are efficient for many popular local search algorithms with tighter bounds, we need to state \cref{thm:main_stepsBound} precisely.
This requires us to create a partition of the $n$ indexes and show that many local search algorithms quickly and ``permanently'' fix variables with indexes progressing along the levels of this partition.
Given a  poset $([n],\prec)$ and variable index $i \in [n]$, define the upper set of $i$ as $\uparrow i = \{j \;|\; i \preceq j \}$.
We partition $[n]$ into $\text{height}([n],\prec)$-many \emph{level sets} defined as $S_l = \{ i \;|\; \text{height}(\uparrow i, \prec) = l\}$ where $\text{height}(S,\prec)$ is the height of the poset $(S,\prec)$.
Additionally we define $S_0=\emptyset$, $S_{<l}=\bigcup_{k=0}^{l-1}S_k$, and $S_{>l} = \bigcup_{k=l + 1}^{n}S_k$.
We relate these partitions to an assignment $x$ via a conditionally-smooth landscape specific refinement of in- and out-maps:
\begin{restatable}{definition}{defInOutMap}\label{def:in-out-maps}
Given a $\prec$-smooth landscape $f$ on $n$ bits and $x$ an assignment,
we define the maps $\phi^\ominus,\phi^\oplus: \{0,1\}^n \rightarrow 2^{[n]}$ by:
\begin{itemize}
\item $i \in \phi^\ominus(x)$ if for all $j \preceq i$ we have $j \in \phi^-(x)$
(we say $i$ is \emph{correct} at $x$), and
\item $i \in \phi^\oplus(x)$ if $i \in \phi^+(x)$ but for all $j \prec i$ we have $j \in \phi^-(x)$ 
(we say $i$ is at \emph{border} at $x$).
\end{itemize}
\end{restatable}
\leavevmode\newline
Note that 
given any ascent $x^0,x^1, \ldots , x^T$ in a conditionally-smooth landscape, we have $\phi^\ominus(x^0) \subseteq \phi^\ominus(x^1) \subseteq \ldots \phi^\ominus(x^T) = [n]$.
We refer to the set $\overline{\phi^\ominus(x)} := [n]-\phi^\ominus(x)$ as the \emph{free indices} at $x$.
For an assignment $x$, define $\text{height}_f(x)$ (and $\text{width}_f(x)$) as the height (and width) of the poset $(\overline{\phi^\ominus(x)},\prec)$.
Note that for $x$ with $\text{height}(x)=l$,\footnote{
If we define $X_l$ as the set of all assignments $x$ with $\text{height}(x)=l$, it is important to note that the resulting sets $X_0,\ldots,X_{\text{height}(\prec,[n])}$ are not necessarily monotonic in fitness.
There can exist $x$ at $\text{height}(x)=l$ and $y,z$ with height $l - 1$ such that $f(y) < f(x) < f(z)$.
Thus, even if we expressed our results in terms of sets of assignments (i.e., in the space of the fitness landscape rather than the more compact space of the representation that we use) our approach in this paper is still \emph{not} a fitness-level method 
of the sort often used in the analysis of randomized search heuristics~\cite{fitnessLevel1,LocalSearch_Book4,fitnessLevel2}.
}
we have $S_{>l} \subseteq \phi^\ominus(x)$, $S_l \subseteq \phi^\ominus(x) \cup \phi^\oplus(x)$, and, for $x\neq x^*$, $\phi^\oplus(x) \cap S_l\neq\emptyset$.
In other words, if $x$ is at height $l \neq 0$ then all the variables with indexes at higher levels are set correctly, all free indexes at level $l$ are at the border, and at least one index at level $l$ is free.
This also means that $\overline{\phi^\ominus(x[\phi^\oplus(x) \mapsto \overline{x[\phi^\oplus(x)]}])} \subseteq S_{<l}$.

\begin{restatable}{lemma}{mainStepsBound}\label{thm:main_stepsBound}
Given a $\prec$-smooth fitness landscape $f$ on $n$ bits and any assignment $x$ with $l := \text{height}_f(x)$,
let $Y: \Omega \times \mathbb{N} \rightarrow \{0,1\}^n$ be a stochastic process such that $Y_t \sim \lang{A}^t_f(x)$ is the random variable of outcomes of $t$ applications of the local search step algorithm $\lang{A}^1_f$ starting from $x$ and let the random variable $\tau_{<l}(\omega) := \inf \{t \;|\; \overline{\phi^\ominus(Y_t(\omega))} \in S_{<l}\}$ be the number of steps to decrement height. 
If the expected number of steps to decrement height is:
\begin{equation}
\mathbb{E}\{\tau_{<l}(\omega)\} \leq p(n,l) \leq q(n)
\label{eq:drift_time_bound}
\end{equation}
then the expected total number of steps taken by $\lang{A}$ to find the peak from an initial assignment $x^0$ is 
$\leq \sum_{l = 1}^{\text{height}_f(x^0)} p(n,l) \leq \text{height}_f(x_0) q(n) \leq \text{height}(\prec) q(n)$.
\end{restatable}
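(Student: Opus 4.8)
The plan is to decompose the run of $\lang{A}$ into a sequence of "phases," one per level of the poset, and apply the hypothesis $\mathbb{E}\{\tau_{<l}\} \le p(n,l)$ to each phase. The key structural observation — already noted in the paragraph before the lemma — is that height is monotone non-increasing along any ascent, and more importantly along any run of a reasonable local search step $\lang{A}^1_f$: once $i \in \phi^\ominus(x)$ it stays in $\phi^\ominus$ (since $\phi^\ominus$ only grows along ascents), so $\text{height}_f(Y_t)$ is non-increasing in $t$. Hence the run naturally breaks at the (random) times $\tau^{(l)}$ at which the height first drops strictly below $l$, for $l = \text{height}_f(x^0), \text{height}_f(x^0)-1, \ldots, 1$, and the total number of steps is exactly $\tau^{(1)}$ (the first time $\overline{\phi^\ominus} \subseteq S_{<1} = \emptyset$, i.e. the first time we reach $x^*$).

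**First I would** make precise the restart/strong-Markov step. At the random time when the process first reaches height $l' < l$ (some level below $l$, not necessarily $l-1$, since a single step could drop several levels), the conditional future of $Y$ is again a run of $\lang{A}$ from the current assignment, now at height $l'$. So I would invoke the hypothesis with $l'$ in place of $l$: starting from \emph{any} assignment at height $l'$, the expected number of additional steps to drop below $l'$ is $\le p(n,l')$. Summing over the levels actually visited — which is a subset of $\{1,\ldots,\text{height}_f(x^0)\}$ — and using the tower property / optional stopping for the nested stopping times gives
\begin{equation}
\mathbb{E}\{\text{total steps}\} \;\le\; \sum_{l=1}^{\text{height}_f(x^0)} p(n,l)
\;\le\; \text{height}_f(x^0)\, q(n)
\;\le\; \text{height}(\prec)\, q(n),
\end{equation}
where the last two inequalities are just $p(n,l) \le q(n)$ and $\text{height}_f(x^0) \le \text{height}(\prec)$.

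**The main obstacle** I expect is handling the fact that a single step can decrease the height by more than one, so the "phases" are not in bijection with $\{1,\ldots,\text{height}_f(x^0)\}$; a level $l$ might be skipped entirely. The clean fix is to define, for each $l \le \text{height}_f(x^0)$, the quantity $N_l := \tau^{(l-1)} - \tau^{(l)}$ = (number of steps spent while the height equals exactly $l$, or zero if level $l$ is skipped), so that the total is $\sum_l N_l$ and $\mathbb{E}\{N_l \mid Y_{\tau^{(l)}}\} \le p(n,l)$ whenever $Y_{\tau^{(l)}}$ has height exactly $l$, and $N_l = 0$ otherwise; in either case $\mathbb{E}\{N_l \mid \mathcal{F}_{\tau^{(l)}}\} \le p(n,l)$, and linearity of expectation finishes it. I would also need to be slightly careful that $\tau_{<l}$ as defined in the statement is the \emph{first} time $\overline{\phi^\ominus(Y_t)} \subseteq S_{<l}$, which is what the hypothesis bounds, and to confirm monotonicity of height holds for the black-box step algorithm in question (it does, since $\phi^\ominus$ is monotone under any ascent, and local search steps follow ascents by assumption); if $\lang{A}$ could in principle take non-improving steps, one would instead track the weaker fact that $\phi^\ominus$ never shrinks, which still suffices. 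This is essentially a drift/level-crossing argument, and the only real content beyond bookkeeping is the monotonicity of height and the strong-Markov restart, both of which are routine given the setup.
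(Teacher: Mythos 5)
Your proposal is correct and follows essentially the same route as the paper, whose entire proof is the one-line remark that the result ``follows by induction on $\height[x^0]$ and linearity of expected value'' --- your phase decomposition at the level-crossing times, the restart argument, and the handling of skipped levels are exactly the details that one-liner elides. The only point worth noting is that your monotonicity discussion is a (welcome) elaboration the paper never makes explicit, even though it later applies the lemma to algorithms such as simulated annealing that can take downhill steps.
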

\begin{proof}
    This follows by induction on $\height[x^0]$ and linearity of expected value.
\end{proof}

\noindent We use \cref{thm:main_stepsBound} to show that conditionally-smooth fitness landscapes are efficient-for-many local search algorithms, whether they 
follow ascents (\lang{RandomAscent}; \cref{prop:RA_stepsBound}), 
occasionally step to adjacent assignments of lower fitness (\lang{SimulatedAnnealing}; \cref{prop:SA_stepsBound}), 
or even if they step to non-adjacent assignments of higher fitness (\lang{RandomJump}; \cref{prop:randomJump_steps}).

Since \cref{thm:main_stepsBound} is expressed in terms of a stochastic process, we begin by applying it to the prototypical stochastic local search algorithm: \lang{RandomAscent}~\cite{randomFitter2,simplexSurvey,randomFitter1}.
Given an assignment $x$, the step $\lang{RandomAscent}^1_f$ simply returns a fitter adjacent assignment uniformly at random.
Formally, if $Y^\text{RA}(x) \sim \lang{RandomAscent}^1_f(x)$ then:
\begin{equation}
\text{Pr}\{Y^\text{RA}(x) = x[i \mapsto \bar{x_i}]\} = \begin{cases}
\frac{1}{|\phi^+(x)|} & \text{ if } i \in \phi^+(x) \\
0 & \text{otherwise}
\end{cases}
\end{equation}

\noindent By bounding the expected number of steps to decrement height (\cref{lem:p_RA}) and  applying \cref{thm:main_stepsBound} we bound the expected total number of steps for \lang{RandomAscent}:

\begin{restatable}[\cref{app:efficientWalking}]{proposition}{propRAstepsBound}\label{prop:RA_stepsBound}
On a $\prec$-smooth landscape $f$ on $n$ bits, the expected number of steps taken by \lang{RandomAscent} to find the peak from initial assignment $x^0$ is: 
\begin{align}
\leq |\overline{\phi^\ominus(x^0)}| + \text{width}_f(x^0)(1 + \log \text{width}_f(x^0)) \binom{\text{height}_f(x^0) - 1}{2} \label{eq:RA_x0bound}\\
\leq n + \text{width}(\prec)(1 + \log \text{width}(\prec))\binom{\text{height}(\prec) - 1}{2}.
\end{align}
\end{restatable}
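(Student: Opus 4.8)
The plan is to combine the generic drift bound \cref{thm:main_stepsBound} with a per-height estimate for \lang{RandomAscent}; this estimate is the cited \cref{lem:p_RA}, which I would prove first. Concretely, I aim to show that for any $\prec$-smooth $f$ and any $x$ with $l := \height[x]$ the \lang{RandomAscent} process started at $x$ satisfies
\begin{equation}
\mathbb{E}\{\tau_{<l}\} \;\le\; \bigl|S_l \cap \overline{\phi^\ominus(x)}\bigr| \;+\; \width[x]\,(1 + \log \width[x])\,(l-2),
\label{eq:pRA_plan}
\end{equation}
with the last summand read as $0$ when $l \le 2$; this is a bound of the form $p(n,l)\le q(n)$ as required by \cref{thm:main_stepsBound}. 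Feeding \eqref{eq:pRA_plan} into \cref{thm:main_stepsBound} and summing over the at most $\height[x^0]$ height-decrements (the free set only shrinks along an ascent, so at each decrement the relevant $|S_l\cap\overline{\phi^\ominus(x)}|$ and poset width are bounded by those at $x^0$), the expected total is at most $\sum_{l}\bigl|S_l\cap\overline{\phi^\ominus(x^0)}\bigr| + \width[x^0](1+\log\width[x^0])\sum_{l=1}^{\height[x^0]}(l-2)^{+}$. Since the $S_l$ partition $[n]$ and every free index of $x^0$ sits at a level $\le\height[x^0]$, the first sum equals $|\overline{\phi^\ominus(x^0)}|$; since $\sum_{l=1}^{m}(l-2)^{+}=\binom{m-1}{2}$, the second equals $\width[x^0](1+\log\width[x^0])\binom{\height[x^0]-1}{2}$. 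This is \eqref{eq:RA_x0bound}, and the second inequality of the proposition follows because $\overline{\phi^\ominus(x^0)}\subseteq[n]$, $\width[x^0]$ and $\height[x^0]$ are the width and height of a sub-poset of $([n],\prec)$, and $w\mapsto w(1+\log w)$, $m\mapsto\binom{m-1}{2}$ are non-decreasing.

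\textbf{Structure behind \eqref{eq:pRA_plan}.} First I would make precise why, at height $l$, the level-$l$ variables behave like the sign-independent variables of a smooth landscape. Because $\downarrow i\subseteq S_{>l}$ for every $i\in S_l$ and every index of $S_{>l}$ is correct at height $l$, each free index $i\in S_l$ has $x[\downarrow i]=x^*[\downarrow i]$, so $x_i$ prefers $x^*_i$ by $\prec$-smoothness; being free, $x_i\ne x^*_i$, so $i\in\phi^+(x)$ and in fact $i\in\phi^\oplus(x)$. Flipping such an $i$ sets $x_i=x^*_i$ with $\downarrow i$ still correct, so $i$ stays correct for the rest of the ascending run; and flipping any index outside $S_l$ touches no level-$l$ coordinate and so leaves $B:=S_l\cap\overline{\phi^\ominus(x)}$ unchanged. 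Therefore, while the height is $l$, the set $B$ only shrinks, by exactly one per flip of a $B$-element, and $\tau_{<l}$ equals the number of \lang{RandomAscent} steps until every element of the initial $B$ has been flipped.

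\textbf{The estimate and the main obstacle.} It then remains to bound this stopping time. Each step picks uniformly from $\phi^+(x)=B'\sqcup(\phi^+(x)\setminus S_l)$, where $B'\subseteq B$ is the still-unflipped part; since $\phi^+(x)\subseteq\overline{\phi^\ominus(x)}$, the ``distracting'' moves $\phi^+(x)\setminus S_l$ lie in $\overline{\phi^\ominus(x)}\cap S_{<l}$, which is a union of at most $l-1$ antichains of size $\le\width[x]$. Bounding them by $(l-1)\width[x]$ and summing the geometric waiting times over $|B'|=|B|,\dots,1$ (a harmonic sum) gives $\mathbb{E}\{\tau_{<l}\}\le|B|+(l-1)\width[x](1+\log\width[x])$; the sharper $l-2$ in \eqref{eq:pRA_plan} comes from observing that while the height is $l$ the dynamics is a genuine \lang{RandomAscent} on the restriction of $f$ to $S_{\le l}$ (every improving move lies in $S_{\le l}$, and that restriction is $\prec$-smooth for the induced order), so the top distracting antichain $S_{l-1}$ is sign-independent there and is corrected as cheaply as $B$, letting it be folded into the first term. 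The hard part will be exactly this accounting of the distracting moves: showing that the improving moves strictly below level $l$ — which \lang{RandomAscent} is just as likely to take as the useful level-$l$ flips — stay confined to $O(\width[x]\cdot l)$ throughout the height-$l$ epoch and are not regenerated indefinitely, so that they cost only the $\log$ and $l-2$ factors rather than blowing up the waiting time.
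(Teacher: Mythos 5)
Your route is the paper's: invoke \cref{thm:main_stepsBound} and supply a per\nobreakdash-height drift estimate for \lang{RandomAscent}, obtained by showing that every free index in $S_l$ is at the border and is permanently corrected when flipped, that flips outside $S_l$ leave $B := S_l\cap\overline{\phi^\ominus(x)}$ untouched, and that the time to empty $B$ is a sum of geometric waiting times whose harmonic sum gives $|B| + D\,(1+\log|B|)$ for any uniform bound $D$ on the distracting moves $\phi^+(x)\setminus S_l$. This is exactly \cref{lem:p_RA} (the paper takes $D = s_{<l} = \max_{x\in X_{<l}}|\phi^+(x)|$ and then substitutes $s_{<l}\le|S_1|+\dots+|S_{l-1}|$), and the structural facts in your second paragraph are correct. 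Note also that the ``regeneration'' worry you raise at the end is moot: the geometric bound only needs $|\phi^+(x)\setminus S_l|\le D$ pointwise throughout the height-$l$ epoch, and the crude inclusion $\phi^+(x)\setminus S_l\subseteq\overline{\phi^\ominus(x)}\cap S_{<l}$ already gives $D\le(l-1)\,\width[x]$ at every such $x$.

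The genuine gap is the $(l-2)$ refinement. Your justification --- that in the restriction of $f$ to $S_{\le l}$ the antichain $S_{l-1}$ is sign-independent and can be folded into the first term --- is false. In that restriction the minimal elements of the induced order are precisely the level-$l$ indices (for $i\in S_l$ one has $\downarrow i\subseteq S_{>l}$, which is fixed correctly), whereas an index $j\in S_{l-1}$ may have $\downarrow j\cap S_l\neq\emptyset$; such a $j$ still sign-depends on uncorrected level-$l$ indices, can flip back and forth during the height-$l$ epoch, and so is a genuine distractor that cannot be charged once. (Even if it could, folding $S_{l-1}$ into the ``corrected once'' term would double-count it across consecutive epochs.) The honest version of your own accounting gives $D\le(l-1)\width[x]$ and hence a total of order $\binom{\height[x^0]}{2}$ rather than $\binom{\height[x^0]-1}{2}$ --- which is also all that the paper's \cref{lem:p_RA} yields once one substitutes $s_{<l}\le(l-1)\width[x^0]$; the binomial as stated appears to be off by one relative to the paper's own lemma. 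So the step you flag as ``the hard part'' is not in the paper's proof, would fail as proposed, and is not needed: prove \cref{lem:p_RA} as in your second paragraph and the first half of your third, and accept the $\binom{h}{2}$-type constant.
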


\noindent Good bounds are also possible for deterministic ascent-following algorithms like various history-based pivot rules~\cite{historyRules,cunningham1979,leastRecentlyEntered,zadeh} that we discuss in \cref{app:efficientHistory}.
But even if an ascent-following algorithm is not efficient on conditionally-smooth landscapes, it can become efficient through combination with \lang{RandomAscent}. 
Any ascent-following local search algorithm \lang{A} can be made into $\epsilon$-\lang{shaken-A} like this:
with probability $1 - \epsilon$ take take a step according to \lang{A}, and with probability $\epsilon$ take a step according to \lang{RandomAscent}.
Since variables with indexes in $\phi^\ominus(x)$ will never be unflipped by ascents, this combined algorithm's expected runtime is less than an $1/\epsilon$-multiple of the bound in \cref{prop:RA_stepsBound} (\cref{prop:runningtime-shaken-ascent}).

\cref{thm:main_stepsBound} also applies to algorithms that occasionally take downhill steps like simulated annealing~\cite{LocalSearch_Book1}.
Formally, if $Y^\text{SA}_{t + 1}(x) \sim \lang{SimulatedAnnealing}^1_f(x^t)$ then
\begin{equation}
\text{Pr}\{Y^\text{SA}_{t  +1} = y\} = \begin{cases}
\frac{1}{n} & \text{ if } i \in \phi^+(x) \text{ and } y = x[i \mapsto \overline{x_i}] \\
\frac{1}{n}\cdot r_t(f(x^t) - f(x^t[i \mapsto \overline{x^t_i}])) & \text{ if } i \in \phi^-(x) \text{ and } y = x[i \mapsto \overline{x_i}] \\
1 - \frac{|\phi^+(x)|}{n} - Z\frac{|\phi^-(x)|}{n} & \text{ if } y = x
\end{cases}
\end{equation}

\noindent where the downstep probability $r_t(\Delta f) \rightarrow 0$ monotonically as $t \rightarrow \infty$ for any $\Delta f > 0$
and $Z = \frac{1}{|\phi^-(x)|}\sum_{i \in \phi^-(x)} r_t(f(x^t) - f(x^t[i \mapsto \overline{x^t_i}]))$ is the average downstep probability.
A popular choice of downstep probability is $r_t(\Delta f) = \exp(\frac{\Delta f}{K(t)})$ with $K(t)$ a sequence of temperatures strictly decreasing to $0$.
But any downstep probability can be used to define a burn-in time $\tau^\alpha = \inf\{t \;|\; r_t(1) \leq \frac{\alpha}{n} \}$ that is a property of the algorithm and independent of the particular problem-instance (i.e., independent of the fitness landscape).

\begin{restatable}[\cref{app:effectiveFitnessDecreasing}]{proposition}{propSAstepsBound}\label{prop:SA_stepsBound}
On a conditionally-smooth fitness landscape $f$ on $n$ bits, the expected number of steps taken by \lang{SimulatedAnnealing} to find the peak is $\leq \tau^\alpha + n^2\frac{(\exp(\alpha) - 1)}{\alpha}$
where $\tau^\alpha = \inf\{t \;|\; r_t(1) \leq \frac{\alpha}{n} \}$.
\end{restatable}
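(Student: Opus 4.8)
The plan is to bound everything after the (problem-independent) burn-in time $\tau^\alpha = \inf\{t \mid r_t(1) \le \alpha/n\}$ with \cref{thm:main_stepsBound}, and to charge the first $\tau^\alpha$ steps directly to $\tau^\alpha$. Writing $\tau$ for the first time \lang{SimulatedAnnealing} visits the peak $x^*$, if $\tau \le \tau^\alpha$ we are already done; otherwise condition on the random assignment $Y_{\tau^\alpha}$ and apply \cref{thm:main_stepsBound} to the time-shifted process $t \mapsto Y_{\tau^\alpha+t}$. For every $t \ge \tau^\alpha$ and every integer fitness decrement $\Delta f \ge 1$, the fact that $r_t(1)\le\alpha/n$ and that the downstep probability does not increase with the size of the fitness drop gives $r_t(\Delta f) \le \alpha/n$; this single uniform estimate drives the rest of the argument. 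So it remains to produce $p(n,l)$ with $\sum_{l=1}^{\text{height}(\prec)} p(n,l) \le n^2\frac{e^\alpha-1}{\alpha}$ bounding the expected number of post-burn-in steps needed to decrement the height from $l$; concretely I aim for $p(n,l) \le n\,|S_l|\,\frac{e^\alpha-1}{\alpha}$ and then use $\sum_l |S_l| = n$.

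A short computation with \cref{def:in-out-maps} (compare the proof of \cref{prop:precSmooth_SinglePeaked}) shows that $\overline{\phi^\ominus(x)}$ is exactly the $\prec$-up-closure of $\{i \mid x_i \ne x^*_i\}$, so $\text{height}_f(x)$ is the largest level containing an incorrectly set index, and the incorrectly set indices at that top level are $\prec$-minimal among all incorrect indices, hence have a correct down-set. By \cref{def:precSmooth}, flipping any such frontier index toward $x^*$ is a strict improvement, so \lang{SimulatedAnnealing} performs exactly that flip with probability $1/n$ each, and such a flip settles the flipped index and so increases $|\phi^\ominus(x)|$ by at least $1$. This identifies $|\phi^\ominus(x)|$ (which equals $n$ iff $x = x^*$) as a progress measure that grows by at least one with probability at least $1/n$ at every step short of the peak.

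The crucial dichotomy is that $|\phi^\ominus(x)|$ can only shrink on a downstep that un-corrects a \emph{settled} index. Any improving step either corrects an index — which by the up-closure description never un-settles anything — or moves an already-correct index $i$ with an incorrect down-set away from $x^*$; but then $i$ and every index above it in $\prec$ is already unsettled, so this ``adversarial'' improving step also leaves $\phi^\ominus(x)$ untouched. Un-correcting a settled index $j$ (necessarily a strict downstep, by \cref{def:precSmooth}) destroys settledness for at most $|\uparrow j| \le n$ indices while leaving their values equal to $x^*$, so re-correcting the single index $j$ restores all of them in $O(n)$ expected steps. After burn-in each such downstep has probability at most $\alpha/n^2$ per settled index, hence at most $\alpha/n$ per step; treating them as rare interruptions that each cost an $O(n)$ re-correction excursion, the number of interruptions charged against one unit of progress is $O(\alpha)$ in expectation, and the careful accounting of the resulting geometric series yields the factor $\frac{e^\alpha-1}{\alpha}$. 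Aggregating over the at most $\text{height}(\prec)$ levels with $p(n,l) \le n\,|S_l|\,\frac{e^\alpha-1}{\alpha}$ and $\sum_l |S_l| = n$ gives the post-burn-in bound $n^2\frac{e^\alpha-1}{\alpha}$, mirroring the way \cref{lem:p_RA} feeds \cref{thm:main_stepsBound} for \lang{RandomAscent}.

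The main obstacle is exactly this last bookkeeping step: a priori the rare downsteps and the possibly-frequent ``adversarial'' improving steps could conspire to keep the walk away from the peak, and the clean way to rule this out is the dichotomy above, which makes $|\phi^\ominus|$ a genuine progress measure that is monotone off of downsteps and whose losses are both rare and cheap to repair. A minor subtlety worth isolating is ties ($f(x) = f(x[i \mapsto \bar{x}_i])$): a settled index never has a tie (\cref{def:precSmooth} gives a strict inequality there), so tie-flips occur only strictly below an incorrect index and are lateral moves at levels irrelevant to decrementing the current height — they neither shrink $\phi^\ominus$ nor block a frontier improvement, and so do not affect the bound.
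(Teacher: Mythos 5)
Your overall skeleton matches the paper's: charge the first $\tau^\alpha$ steps to the burn-in, note that afterwards every downstep is accepted with probability at most $r=\alpha/n$, and feed a bound $p(n,l)$ on the expected time to decrement height into \cref{thm:main_stepsBound}. You also correctly isolate the one genuine danger, namely downsteps that un-correct a settled index. But the quantitative heart of the argument is missing and, as sketched, incorrect. You claim that un-correcting a settled index $j$ is repaired ``in $O(n)$ expected steps'' by re-flipping $j$, because the indices it unsettles keep their $x^*$-values in the meantime. They need not: once $j$ disagrees with $x^*$, an index $k \succ j$ that still agrees with $x^*$ may have flipping \emph{away} from $x^*_k$ as a strictly improving move (\cref{def:precSmooth} is silent when the down-set is incorrect), and \lang{SimulatedAnnealing} accepts such a proposal with probability $1$. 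In the expected $\Theta(n)$ steps before $j$ is reconsidered, a constant expected number of such indices drift; after $j$ is repaired they now disagree with $x^*$ and are not restored for free, each needing its own $\Theta(n)$-step repair, during which further drift and further downsteps occur. This nesting is exactly what produces the exponential-in-$r$ growth, and your closing ``careful accounting of the resulting geometric series'' --- the only place the factor $\frac{e^\alpha-1}{\alpha}$ could appear --- is asserted rather than derived; note that a naive geometric series with ratio $O(\alpha)$ would diverge for $\alpha\geq 1$, whereas the stated bound is finite for all $\alpha$.

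The paper's route is different from your per-level claim $p(n,l)\leq n\,|S_l|\,\frac{e^\alpha-1}{\alpha}$, which is never established and is not what the paper proves. It extends $\prec$ to a total order so that every level is a singleton, and sets up a recurrence (\cref{lem:SA_stepBound}) in which the cost of a downstep at a settled level is the full expected time to re-descend through all intervening levels, each of whose descents can itself be interrupted; solving it gives $p(n,n-k)=n(1+r)^{k}$, growing with the number of settled levels rather than uniform across levels. The factor $\frac{e^\alpha-1}{\alpha}$ appears only at the very end, from summing $\sum_{k=0}^{n-1} n(1+r)^k = \frac{n^2}{\alpha}\bigl((1+\tfrac{\alpha}{n})^n-1\bigr) \leq n^2\frac{e^\alpha-1}{\alpha}$. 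To repair your proof you would need to replace the $O(n)$-repair claim with a recursion of this kind (or an equivalent potential or branching-process argument that correctly accounts for nested interruptions).
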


\noindent \lang{RandomAscent} and \lang{SimulatedAnnealing} move to adjacent assignments, changing at most one variable at a time, and so require at least a linear number of steps.
But \cref{thm:main_stepsBound} also applies to local search algorithms that jump to non-adjacent assignments.
Some such algorithms like \lang{AntipodalJump}, \lang{JumpToBest}, and \lang{RandomJump} are especially popular on semismooth fitness landscapes since flipping any combination of variables with indexes in $\phi^+(x)$ results in a fitness increasing step~\cite{HowardsRule,simplexSurvey,LongJump}.
But algorithms that take steps to non-adjacent assignments are also used in other contexts like the Kernighan-Lin heuristic for \lang{MaxCut}~\cite{KL_Original}.
Such algorithms can exploit short but wide partial orders.
For example  on $\prec$-smooth landscapes, the number of steps taken by deterministic algorithms like \lang{AntipodalJump}, \lang{JumpToBest} and Kernighan-Lin are independent of $\text{width}(\prec)$, taking at most $\text{height}(\prec)$-steps (\cref{prop:runningtime-antipodal-and-best-jump} and \ref{prop:KL_bound}). 
Picking uniformly at random which subset of improving indexes to include in a jump -- as done by the like \lang{RandomJump} algorithm (see \cref{eq:randomJump} in \cref{app:efficientJumping}) -- requires only a $\log{(\text{width}(\prec)})$-factor more steps than the deterministic jump rules:

\begin{restatable}[\cref{app:efficientJumping}]{proposition}{propRJstepsBound}\label{prop:randomJump_steps}
    Let $f$ be a conditionally-smooth landscape and $x^0$ some assignment. The expected number of steps that \lang{RandomJump} takes to find the peak is at most $(\log(\width[x^0]) + 2)\height[x^0]$. 
\end{restatable}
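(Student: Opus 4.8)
The plan is to apply the master bound of \cref{thm:main_stepsBound}, so the only real work is to bound, for each level $l$, the expected number of \lang{RandomJump} steps needed to strictly decrease the height from $l$. From the definition of \lang{RandomJump} in \cref{app:efficientJumping} I would use only two features: a step from $x$ flips some subset of $\phi^+(x)$, and each index currently in $\phi^\oplus(x)$ is included in the flipped set independently with probability $\tfrac12$ (border indices stay improving throughout a step, so a fair coin decides their fate).

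The heart of the argument is a structural claim: on a $\prec$-smooth landscape, for any $x$ with $\height[x] = l$ and any $R \subseteq \phi^+(x)$, flipping the indices of $R$ never increases the height, and it drops the height below $l$ if and only if $R \supseteq \phi^\oplus(x)$. For ``never increases'', I would observe that a correct index $i$ (i.e.\ $i \in \phi^\ominus(x)$) has its whole $\prec$-downset correct, hence set to its $x^*$-value and disjoint from the free indices $\overline{\phi^\ominus(x)} \supseteq R$; so $x$ and $x[R \mapsto \overline{x[R]}]$ agree on $\downarrow i \cup \{i\}$, $\prec$-smoothness keeps $i$ correct, and $\phi^\ominus$ only grows. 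For the ``if'' direction I would extend the computation already sketched before \cref{thm:main_stepsBound} (flipping exactly $\phi^\oplus(x)$ drives the free set into $S_{<l}$): allowing $R$ to additionally contain improving indices below level $l$ changes nothing, since every index of level $\geq l$ has its $\prec$-downset inside $S_{>l} \subseteq \phi^\ominus(x)$ and is therefore untouched by $R$, so it stays correct. For the ``only if'' direction, any $i \in \phi^\oplus(x) \setminus R$ is not flipped and has $\overline{x_i} = x^*_i$, while its (untouched) correct downset equals $x^*[\downarrow i]$; so by $\prec$-smoothness $i$ stays free, stays in $\phi^+$, stays at level $l$, and therefore remains a border index, keeping the height at $l$.

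With the claim in hand, take a run of \lang{RandomJump} that has just reached height $l$ at a state $x^{(l)}$ and let $B_t$ be the set of still-unflipped elements of $\phi^\oplus(x^{(l)})$ after $t$ further steps. By the claim, the height stays $l$ while $B_t \neq \emptyset$ and drops the first step $B_t$ becomes empty, while each step deletes from $B_t$ each of its elements independently with probability $\tfrac12$. Hence $\tau_{<l}$ is distributed as the maximum of $|\phi^\oplus(x^{(l)})|$ i.i.d.\ geometric variables with $\Pr\{G \geq t\} = 2^{-(t-1)}$. Since $\phi^\oplus(x^{(l)})$ is an antichain of the free poset and $\phi^\ominus$ only grows along the run, $|\phi^\oplus(x^{(l)})| \leq \width[x^{(l)}] \leq \width[x^0]$, and a routine tail bound for the maximum of that many i.i.d.\ geometrics gives $\mathbb{E}\{\tau_{<l}\} \leq \log(\width[x^0]) + 2$. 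Feeding $p(n,l) = q(n) = \log(\width[x^0]) + 2$ into \cref{thm:main_stepsBound} (legitimate because $\width$ never increases along the run) then yields the claimed bound $(\log(\width[x^0]) + 2)\,\height[x^0]$.

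I expect the main obstacle to be the structural claim, and within it the observation that flipping a random subset of $\phi^+(x)$ — which may also move improving indices sitting below level $l$ — neither un-borders the surviving level-$l$ indices nor creates new ones; this is precisely where the level-set accounting of \cref{def:in-out-maps} and the facts stated just before \cref{thm:main_stepsBound} do all the heavy lifting. The geometric-maximum estimate and the inequality $|\phi^\oplus(x^{(l)})| \leq \width[x^0]$ are then routine.
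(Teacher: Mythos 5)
Your proposal is correct and follows essentially the same route as the paper: the paper's proof also observes that each free level-$l$ index lies in $\phi^\oplus(x)\subseteq\phi^+(x)$ and is flipped independently with probability $\tfrac12$ per step, derives $p(n,l)=\log|S_l|+2\leq\log(\width[x^0])+2=q(n)$, and feeds this into \cref{thm:main_stepsBound}. Your write-up merely makes explicit the structural claim and the maximum-of-geometrics calculation that the paper's two-line proof leaves implicit, which is a welcome (and correct) elaboration rather than a different argument.
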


Thus, we see that many local search algorithms can efficiently find the peak in conditionally-smooth fitness landscapes in a quadratic or fewer number of steps.

\section{Inefficient local search in conditionally-smooth landscapes}
\label{sec:inefficientAlgos}
\label{sec:steepestAscent}

Finally, in this section, we want to argue that the effectiveness of many local search algorithms on conditionally-smooth landscapes is not a trivial observation.
We do this by showing that conditionality-smooth landscapes are not tractable for some algorithms that are often considered to be very good local search algorithms (but that happen to not be poly-bypass).
Specifically, we will show that oriented \VCSP{s} can represent families of landscapes where \lang{SteepestAscent} takes an exponential number of steps; and that conditionally-smooth landscapes include families of semismooth landscapes where \lang{RandomFacet} takes a superpolynomial number of steps.
In other words, conditionally-smooth landscapes are expressive enough to contain very complicated kinds of fitness landscapes.

While studying Hopfield networks, Haken and Luby \textcite{hakenSteepest} created a family of binary Boolean \VCSP{s} with exponentially long steepest ascents.
Here we show that the Haken-Luby \VCSP{} is oriented and has pathwidth $3$.
The Haken-Luby \VCSP{} has variables with indexes $\{(k,i)|k\in[n], i\in [7]\}$ where each set of seven variables $\{(k,1),(k,2),\ldots (k,7)\}$ forms a gadget. 
The gadgets form a chain by connecting $(k,7)$ to $(k-1,1)$.
The magnitude of the constraints on the $k$th module is roughly proportional to $M_k=\frac{5}{6}(6^k-6)$ and the exact constraints, with the exception of $c_{(n,1)}=(6M_n+24)K$, where $K=2n+1$ is constant for the instance, are given by \cref{fig:steepestGadget}.
For example, $c_{(k,5),(k,7)}=-(2M_k+4)K$ and $c_{(k,7),(k-1,1)}=M_kK$.
In \cref{fig:steepestGadget} we also included the direction of the constraints.

\begin{figure}[htb]
    \centering
    
    \begin{tikzpicture}[every node/.style={minimum size=30pt,font=\small}]
    \def\w{M_k} 
    \def\C{K} 
    \def\e{\epsilon_k}
    \tikzmath{\xunit = 2.5; \yunit =3.5;}
        \node[align=center,draw] (M) at (2.5*\xunit,-1*\yunit) {$M_k=\frac{5}{6}(6^k-6)$\\$\e=n+1-k$\\$\C=2n+1$};
        
        \node[draw, dashed, circle] (inv) at (-2*\xunit,0) {\tiny $k+1,7$};
        
        \node[draw, circle, label={[label distance=0cm,rotate=-30]-90:$-(6\w+24)\C$}] (ANDin) at (-\xunit,0) {$k,1$};
        \node[draw, circle, label={[label distance=0cm]90:$-(3\w+10)\C-\e$}] (AND1) at (-.2*\xunit,\yunit) {$k,2$};
        \node[draw, circle, label={[label distance=0cm]-90:$-(3\w+11)\C$}] (AND2) at (-.2*\xunit,-\yunit) {$k,3$};
    
        \node[draw, circle, label={[label distance=0.0]-180:$-(2\w+7)\C$}] (XORmid) at (\xunit,0) {$k,5$};
        \node[draw, circle, label={[label distance=0cm]90:$-(3\w+9)\C$}] (XOR1) at (\xunit,\yunit) {$k,4$};
        \node[draw, circle, label={[label distance=0cm]-90:$-(3\w+9)\C$}] (XOR2) at (\xunit,-\yunit) {$k,6$};
        \node[draw, circle, label={[label distance=0.1cm]-90:$-(\w+1)\C$}] (XORjoin) at (2.3*\xunit,0) {$k,7$};
        \node[draw, dashed, circle 
        ] (XORout) at (3*\xunit,0) {\tiny $k-1,1$};
        
        \draw[->, thick] (ANDin) -- (AND1) node [midway, above=0.1cm, sloped] {$(3\w+10)\C+2\e$};
        \draw [<-, thick](AND2) -- (ANDin) node [midway, above, sloped] {$(3\w+12)\C$};
    
        \draw[->, thick] (AND1) -- (XOR1) node [midway, above, sloped] {$(3\w+10)\C$};
        \draw[->, thick] (AND2) -- (XOR2) node [midway, above, sloped] {$(3\w+10)\C$};
    
         \draw[<-, thick] (XORmid) -- (XOR1) node [midway, above, sloped] {$(2\w+6)\C$};
        \draw[->, thick] (XOR2) -- (XORmid) node [midway, above, sloped] {$(2\w+6)\C$};
    
        \draw[->, thick] (XOR1) -- (XORjoin) node [midway, above, sloped] {$(\w+2)\C$};
        \draw[->, thick] (XOR2) -- (XORjoin) node [midway, above, sloped] {$(\w+2)\C$};
        \draw[->, thick] (XORmid) -- (XORjoin) node [midway, above, sloped] {$-(2\w+4)\C$};
    
        \draw[->, thick] (XORjoin) -- (XORout) node [midway, above, sloped] {$\w\C$};
    
        \draw[->,dashed] (inv) -- (ANDin) node [midway, above=0.2cm, sloped] {$(\overbrace{6M_{k}+25}^{M_{k+1}})\C$};

    \end{tikzpicture}
    \caption{Haken-Luby gadget with $M_k=\frac{5}{6}(6^k-6)$, $\epsilon_k=n+1-k$, and $K=2n+1$. 
    Constraints of the $k$th of $n$ gadgets are shown:
    weights of unary constraints are next to their variables and weights of binary constraints are above the edges that specify their scope.
    Arcs are oriented according to Definition~\ref{alg:VCSP_arcs}, showing that the instance is oriented.
    Dotted arcs and vertices illustrate the connection to the neighboring gadgets. 
    For the boundaries: the unary of $(n,1)$ is $(6M_n+24)K>0$, $M_1=0$ and there is no binary constraint $c_{(1,7),(0,1)}$.}
    \label{fig:steepestGadget}
\end{figure}

\begin{proposition}
    Haken-Luby \VCSP\ is an oriented \VCSP.
    \label{prop:HL_oriented}
\end{proposition}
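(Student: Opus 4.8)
The plan is to read off the arc set of every edge directly from \cref{def:VCSP_arcs}. Call an endpoint $i$ of an edge $\{i,j\}\in E(\mathcal{C})$ \emph{anchored} if for every assignment $x$ either $|c_{ij}|\le|\hat c_i(x,\{i,j\})|$ or $\text{sgn}(c_{ij})=\text{sgn}(\hat c_i(x,\{i,j\}))$. If $i$ is anchored then clause~(1) of \cref{def:VCSP_arcs} can never fire, since its condition on $\hat c_i$ fails for every $x$; hence the edge falls under case~(2), and clause~(2b) also never fires, so the edge receives at most the single arc $\arc[r]{i}{j}$ and is certainly not bidirected. Thus, by \cref{def:VCSP_dir_ori}, it suffices to produce an anchored endpoint for each of the finitely many edge types of \cref{fig:steepestGadget}. (The constraint graph contains the triangles $(k,4),(k,5),(k,7)$ and $(k,5),(k,6),(k,7)$, so \cref{lem:triangle-free-means-oriented} does not apply and this direct check is genuinely needed.)

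Two structural observations make the case analysis uniform. First, the constraint graph has maximum degree $4$, so each effective unary $\hat c_v(x,\{v,w\})=c_v+\sum_{u\in N(v)\setminus\{v,w\}}x_u c_{vu}$ takes at most $2^3=8$ distinct values as $x$ varies. Second, the weights sit on three well-separated scales: the unary and binary magnitudes are $\Theta(M_kK)$, the additive constants inside them are $\Theta(K)$, and the perturbations satisfy $0<\epsilon_k\le n<K=2n+1$, so an $\epsilon_k$-sized term can never reverse a magnitude comparison between quantities that already differ at the $K$-level — this is the only place the exact value $K=2n+1$ is used. For each edge I would take as the anchored endpoint the vertex drawn as the arc's tail, tabulate its at most $8$ effective unaries (using $M_{k+1}=6M_k+25$ when an inter-gadget neighbour contributes), and check the same pattern each time: whenever $\hat c_v$ carries the sign opposing $c_{vw}$, the worst available cancellation still leaves $|\hat c_v|$ a $\Theta(K)$ amount above $|c_{vw}|$, while every other assignment either overshoots the cancellation (so $|\hat c_v|>|c_{vw}|$ in the opposing sign) or lands on a small residue of the \emph{same} sign as $c_{vw}$.

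Concretely I would run through the eight intra-gadget edges $(k,1)$--$(k,2)$, $(k,1)$--$(k,3)$, $(k,2)$--$(k,4)$, $(k,3)$--$(k,6)$, $(k,4)$--$(k,5)$, $(k,6)$--$(k,5)$, $(k,4)$--$(k,7)$, $(k,6)$--$(k,7)$, $(k,5)$--$(k,7)$ and the inter-gadget edge $(k,7)$--$(k-1,1)$, anchoring each at its tail. As two representative cases: $(k,1)$ is anchored for $(k,1)$--$(k,2)$ because $\hat c_{(k,1)}=-(6M_k+24)K+x_{(k,3)}(3M_k+12)K+x_{(k+1,7)}(6M_k+25)K$ is negative only when $x_{(k+1,7)}=0$, in which case its magnitude is at least $(3M_k+12)K>(3M_k+10)K+2\epsilon_k=|c_{(k,1),(k,2)}|$; and $(k,5)$ is anchored for $(k,5)$--$(k,7)$ because $\hat c_{(k,5)}=-(2M_k+7)K+(x_{(k,4)}+x_{(k,6)})(2M_k+6)K$ is either negative of magnitude $\ge(2M_k+7)K>(2M_k+4)K$, or equal to $-K$ (the same sign as $c_{(k,5),(k,7)}=-(2M_k+4)K<0$), or equal to $(2M_k+5)K$ of magnitude $>(2M_k+4)K$. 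The remaining edges are dispatched identically. A complementary one-line check that the head of each arc is \emph{not} anchored then recovers exactly the directions drawn in \cref{fig:steepestGadget}, but this is not needed for the statement.

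It remains to settle the boundary cases flagged in the caption of \cref{fig:steepestGadget}. The vertex $(1,7)$ has no neighbour $(0,1)$, so the effective unaries of $(1,7)$ and its neighbours only lose a term, which cannot spoil anchoring; in gadget $1$ one has $M_1=0$, but the governing inequalities (e.g.\ $12K>10K+2\epsilon_1$, i.e.\ $K>\epsilon_1=n$) still hold; and the top unary $c_{(n,1)}=(6M_n+24)K$ is positive, so $\hat c_{(n,1)}$ is positive with magnitude $\ge(6M_n+24)K$ and $(n,1)$ is trivially anchored for both of its edges. Since every edge then has an anchored endpoint, $\mathcal{C}$ carries no bidirected arc and at most one arc per pair of variables, so it is oriented. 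The main obstacle is not conceptual but the volume of bookkeeping — roughly ten edge types, each with up to eight effective-unary evaluations, all of which must be tracked down to the $\epsilon_k$-level because of the three scales, with the inter-gadget checks relying on $M_{k+1}=6M_k+25$ — although each individual check reduces to the single assertion that the arc's tail has a unary large enough to dominate that incident edge even after the worst cancellation from its other neighbours.
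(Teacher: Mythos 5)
Your proposal is correct and takes essentially the same route as the paper: a direct, edge-by-edge verification of the finitely many constraint types in the gadget against \cref{def:VCSP_arcs}, working one or two representative effective-unary computations (your $(k,5)$--$(k,7)$ check matches the paper's worked example) and dispatching the rest as analogous. Your ``anchored endpoint'' observation is a mild but sound economy --- for orientedness one only needs to rule out the reverse and bidirected arcs at one endpoint per edge, not to establish that the drawn arc actually exists --- and the rest (the $2^{\le 3}$ backgrounds per endpoint, the $\epsilon_k<K$ scale separation, the boundary cases) is bookkeeping the paper likewise leaves implicit.
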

\begin{proof}
    It suffices to check the constraints in \cref{fig:steepestGadget} against \cref{def:VCSP_arcs}.
    For example, check the constraint with scope $S = \{(k,5),(k,7)\}$
    and weight $c_{S}=-(2M_k+4)K < 0$.
    First, look at $\hat{c}_{(k,5)}(x,S)$:
    among all partial assignments in $\{0,1\}^{\{(k,4),(k,6)\}}$, only the assignment that sets $x_{(k,4)}=1$ and $x_{(k,6)}=1$ yields a non-negative 
    $\hat{c}_{(k,5)}(11,S)=2(2M_k+6)K-(2M_k+7)K=(2M_k+5)K > 0$;
    since $|c_S| = (2M_k+4)K \not> (2M_k+5)K = \hat{c}_{(k,5)}(11,S)$ it follows that $\nse{(k,7)}{(k,5)}$.
    Second, look at $\hat{c}_{(k,7)}(x,S)$:
    the assignment that sets $x_{(k,4)}=1$,$x_{(k,6)}=0$ and $x_{(k-1,1)}=0$ yields 
    $\hat{c}_{(k,7)}(100,S)=((1-1)M_k+(2-1))K= K > 0$;
    since $|c_S| = (2M_k+4)K > K = \hat{c}_{(k,7)}(100,S)$, it follows that that $\se{(k,5)}{(k,7)}$.
    The other constraints can be checked similarly.
\end{proof}

\begin{proposition}
Haken-Luby \VCSP\ has pathwidth $3$.\footnote{
For standard definitions of pathwidth/treewidth, see \cite{parameterizedBook}.
}
\label{prop:HL_pw3}
\end{proposition}

\begin{proof}
($\Rightarrow$) Path decomposition for $k$-th gadget:
$\{(k+1,7),(k,1)\}$, $\{(k,1),(k,2),(k,3),(k,4)\}$, $\{(k,3),(k,4),(k,5),(k,6)\}$, $\{(k,4),(k,5),(k,6),(k,7)\}$, $\{(k,7),(k-1,1)\}$.

($\Leftarrow$) Contracting $(k,3)$, $(k,1)$, $(k,2)$ and $(k,4)$ yields a $K_4$ minor.
\end{proof}

\cref{prop:HL_oriented} and \ref{prop:HL_pw3} together with Haken and Luby's \textcite{hakenSteepest} proof that Haken-Luby \VCSP{s} have an exponential steepest ascent, gives us:

\begin{theorem}[Haken and Luby \textcite{hakenSteepest}]\label{cor:HakenLubyBound}
    There are oriented \VCSP{s} on $7n$ bits with constraint graphs of pathwidth $3$ such that \lang{SteepestAscent} follows an ascent of length $\geq 2^n$.
    \label{thm:steepestFail}
\end{theorem}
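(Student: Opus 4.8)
The plan is to assemble the statement as a corollary of the three facts already established about the Haken--Luby \VCSP{} of \cref{fig:steepestGadget}. The instance is a binary Boolean \VCSP{} on the $7n$ variables $\{(k,i)\mid k\in[n],\, i\in[7]\}$, so the bit count is immediate. By \cref{prop:HL_oriented}, every edge of its constraint graph receives exactly one arc, so it is oriented in the sense of \cref{def:VCSP_dir_ori}; by \cref{prop:HL_pw3}, its constraint graph has pathwidth $3$. The only remaining ingredient is Haken and Luby's \textcite{hakenSteepest} analysis showing that \lang{SteepestAscent} on this instance follows an ascent whose length is exponential in $n$, in particular $\geq 2^n$.

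To apply that analysis in our setting I would first make explicit the dictionary between the pseudo-Boolean fitness function $f(x)=\sum_{c_S\in\mathcal{C}}c_S\prod_{j\in S}x_j$ implemented by the weights of \cref{fig:steepestGadget} together with the boundary unary $c_{(n,1)}=(6M_n+24)K$, and the energy functional of the Hopfield network in \textcite{hakenSteepest}: the two agree up to an additive constant and an affine rescaling that leaves the relation ``is a strictly fitness-increasing flip'' unchanged. I would then check that the perturbation terms $\epsilon_k=n+1-k$ (hidden inside constraints such as $c_{(k,1),(k,2)}=(3M_k+10)K+2\epsilon_k$) break ties exactly as in Haken and Luby's argument, so that \lang{SteepestAscent} is well-defined on the instance and selects the same flip at each step. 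Given this, the induction on the chain of $n$ gadgets goes through verbatim: processing each successive gadget multiplies the length of the forced ascent by a constant factor at least $2$, so from the designated starting assignment the total ascent has length at least $2^n$ (indeed exponential with a larger base, since the magnitudes scale by roughly $6$).

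The main obstacle is precisely this transfer step. One has to be careful that the sign of the quadratic terms, the common multiplier $K=2n+1$, and the orientation of ``increasing fitness'' are all chosen so that an \emph{ascent} in the sense of \cref{sec:background} corresponds to the energy-decreasing moves of the original dynamics, and that expanding the Hopfield energy into the normal form $\sum c_S\prod x_j$ introduces no spurious ties that would let \lang{SteepestAscent} deviate from the intended sequence of flips. Once this correspondence is pinned down, no further work is needed: \cref{prop:HL_oriented}, \cref{prop:HL_pw3}, and the cited exponential lower bound combine to give the theorem.
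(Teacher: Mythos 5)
Your proposal matches the paper's own treatment: the theorem is obtained exactly by combining \cref{prop:HL_oriented}, \cref{prop:HL_pw3}, and Haken and Luby's cited exponential lower bound for \lang{SteepestAscent}, with the paper offering no further argument beyond this assembly. Your extra care about the dictionary between the Hopfield energy and the pseudo-Boolean fitness function (and about tie-breaking via the $\epsilon_k$ terms) is a reasonable elaboration of a step the paper leaves implicit, but it does not constitute a different route.
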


Unaware of much older Haken and Luby~\textcite{hakenSteepest}, Cohen et al.~\textcite{tw7} claimed to show pathwidth $7$ as best lower bound for a \VCSP\ with exponential steepest ascents, which was later improved to pathwidth $4$~\cite{pw4}.
Since \cref{prop:HL_pw3} shows that the Haken-Luby \VCSP\ has pathwidth $3$, this means it was the lowest pathwidth construction all along.
That Haken-Luby is oriented gives the bonus that the resulting fitness landscapes are semismooth -- something that was not shown for the other constructions~\cite{tw7,pw4}.
Very recently, Kaznatcheev and Vazquez Alferez~\textcite{pw2} and van Marle~\textcite{pw2MSc} produced a construction similar to Haken and Luby~\textcite{hakenSteepest} that reduced the bound to pathwidth $2$, 
and showed that their construction is an oriented \VCSP.
This is the lowest possible pathwidth with exponential steepest ascent because all ascents are quadratic for tree-structured \VCSP{s}~\cite{repCP}.
So some of the simplest oriented \VCSP{s} are already hard for greedy local search.

Now, we prove that conditionally-smooth landscape are intractable for \lang{RandomFacet}~\cite{randomFacetBound}, by showing that conditionally-smooth landscapes can express Matou\v{s}ek AUSOs~\cite{randomFacetBound, M94}.

\begin{definition}[G{\"a}rtner \textcite{randomFacetBound}, Matou{\v{s}}ek \textcite{M94}]
Given any $n$ parity functions $P_i: \{0,1\}^{R_i} \rightarrow \{0,1\}$
with scopes such that $i \in R_i \subseteq [i]$,
a landscape on $\{0,1\}^n$ with fitness function
$f(x) = -\sum_{i = 1}^n 2^{n - i} P_{i}(x[R_i])$
is a \emph{Matou\v{s}ek AUSO}.
\label{def:Matousek}
\end{definition}

AUSOs are often defined just in terms of their outmap ($\phi^+$), without specific fitness values, so we had to make a specific choice in \cref{def:Matousek}.
Our choice of fitness function, however, is the simplest one in terms of overall arity that can can implement the out-map of the Matou\v{s}ek AUSOs.
Specifically, Batman~\textcite{arityBSc} showed that any \VCSP{} implementing the same out-map as Matou\v{s}ek AUSOs must have non-zero constraints with scopes $R_i$.
Overall, the fitness function of a Matousek AUSO is very well behaved, in particular:

\begin{proposition}
If $f$ is a Matou\v{s}ek AUSO on $n$ bits then
for all $k \in [n]$ and $y \in \{0,1\}^{[k - 1]}$ there is a preferred assignment $b \in \{0,1\}$ such that $\forall z \in \{0,1\}^{[n] - [k]} \; f(ybz) > f(y\overline{b}z)$.
\label{prop:MatousekCombed}
\end{proposition}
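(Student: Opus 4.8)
The plan is to track exactly which parity terms $P_i$ of the fitness function can change when the $k$-th coordinate is flipped, and then exploit the geometric decay of the coefficients $2^{n-i}$ to show that the single term $P_k$ dominates and pins down the sign independently of $z$.

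First I would fix $y \in \{0,1\}^{[k-1]}$ and consider the difference $g(z) := f(y1z) - f(y0z)$ for an arbitrary $z \in \{0,1\}^{[n]-[k]}$. Proving the proposition amounts to showing that $\text{sgn}(g(z))$ takes the same nonzero value for every $z$: then we set $b = 1$ if $g(z) > 0$ and $b = 0$ if $g(z) < 0$, and in either case $f(ybz) > f(y\bar b z)$ for all $z$. Expanding $f(x) = -\sum_{i=1}^n 2^{n-i}P_i(x[R_i])$, a term survives in $g(z)$ only if flipping coordinate $k$ changes $P_i(x[R_i])$, which requires $k \in R_i$; since $R_i \subseteq [i]$, this forces $i \ge k$.

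Next I would isolate the $i = k$ term. Because $k \in R_k$ by hypothesis and $P_k$ is a parity function, flipping the $k$-th input flips the output, so $P_k\big((y1)[R_k]\big) \neq P_k\big((y0)[R_k]\big)$ and this term contributes exactly $c := -2^{n-k}\big(P_k((y1)[R_k]) - P_k((y0)[R_k])\big)$ with $|c| = 2^{n-k}$. Crucially, $R_k \subseteq [k]$, so $P_k$ reads only coordinates in $[k]$, all of which are fixed by $y$ and the value of the $k$-th bit; hence $c$ does not depend on $z$. Every remaining contributing term has index $i$ with $k \in R_i$ and $i > k$, and contributes $-2^{n-i}\delta_i$ with $\delta_i \in \{-1,0,1\}$, so the total absolute value of the remainder is at most $\sum_{i=k+1}^{n} 2^{n-i} = 2^{n-k} - 1 < |c|$.

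Putting this together, $g(z) = c + r(z)$ where $|r(z)| \le 2^{n-k} - 1$, so $|g(z)| \ge 1 > 0$ and $\text{sgn}(g(z)) = \text{sgn}(c)$ for every $z \in \{0,1\}^{[n]-[k]}$; choosing $b \in \{0,1\}$ according to $\text{sgn}(c)$ completes the argument. I do not expect a genuine obstacle here: the only care needed is in the bookkeeping of which indices $i$ satisfy $k \in R_i$ — using $R_i \subseteq [i]$ simultaneously to rule out $i < k$ and to make the $i = k$ term independent of $z$ — together with the elementary fact that a parity function flips whenever one coordinate in its scope is flipped.
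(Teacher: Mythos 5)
Your proof is correct and follows essentially the same route as the paper's: both isolate the $i=k$ parity term (independent of $z$ since $R_k\subseteq[k]$, contributing $2^{n-k}$ to the difference), discard the $i<k$ terms as unaffected by the flip, and bound the $i>k$ tail by $\sum_{i>k}2^{n-i}=2^{n-k}-1$. The only cosmetic difference is that you phrase the argument via the difference $f(y1z)-f(y0z)$ while the paper decomposes $f$ itself into three blocks and picks $b=\overline{\text{Parity}(y[R_k-\{k\}])}$ explicitly.
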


\begin{proof}
This follows from rewriting the big sum for $f$ in 3 terms:
\begin{equation}
f(xby)=
    -\Big({\scriptstyle 2^{n - (k - 1)}\sum_{i=1}^{k-1}2^{(k - 1) - i}P_i(y[R_i]))}\Big) 
    - 2^{n - k}P_k(yb[R_k])  
    - \Big({\scriptstyle 2^{n - k}\sum_{i=1}^{n - k}2^{- i}P_i(ybz[R_i])}\Big)
\end{equation}
and noting that the first summand is independent of $b$ and $z$,
the last summand can sum to at most $-(2^{n - k} - 1)$,
and that the middle summand is independent of $z$ and saves us $2^{n - k}$ in the sum if we set $b = \overline{\text{Parity}(y[R_k - \{k\}])}$.
\end{proof}

From this, it follows that:

\begin{corollary}
Matou\v{s}ek AUSOs are both conditionally-smooth and semismooth.
\label{cor:MatousekPrecSmooth}
\end{corollary}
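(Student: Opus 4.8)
The plan is to read both properties directly off Proposition~\ref{prop:MatousekCombed}, which already says that a Matou\v{s}ek AUSO is ``combed'' along the natural linear order $<$ on $[n]$, so that $\downarrow j = [j-1]$.

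\emph{The $<$-smooth part.} First I would construct the candidate optimum $x^*$ by strong induction on $j$: given $x^*_1,\dots,x^*_{j-1}$, put $y^* := x^*[[j-1]]$ and let $x^*_j := b$ be the preferred bit supplied by Proposition~\ref{prop:MatousekCombed} for $k=j$ and prefix $y^*$. Then for every assignment $x$ with $x[[j-1]] = x^*[[j-1]] = y^*$, writing $z := x[[n]-[j]]$, Proposition~\ref{prop:MatousekCombed} yields $f(y^* b z) > f(y^* \bar b z)$, i.e., $f(x[j\mapsto x^*_j]) > f(x[j\mapsto\overline{x^*_j}])$ --- which is exactly the condition of Definition~\ref{def:precSmooth} with $\prec$ the order $<$. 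Finally, Proposition~\ref{prop:precSmooth_SinglePeaked} certifies that this $x^*$ is the unique local (hence global) peak, so it is legitimately the optimum.

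\emph{The semismooth part.} By Proposition~\ref{prop:semismooth} it is enough to prove $\nrse{i}{j}$ for every pair $i\neq j$. Fix $i<j$ without loss of generality; by Definition~\ref{def:rse}, reciprocal sign epistasis between $i$ and $j$ would require a background $x$ with $\se[x]{j}{i}$, so I would show the sharper statement $\nse{j}{i}$: a smaller index never sign-depends on a larger one. Assume $\se[x]{j}{i}$ for some $x$, and set $y := x[[i-1]]$, $z := x[[n]-[i]]$. Let $b$ be the preferred bit of Proposition~\ref{prop:MatousekCombed} for $k=i$ and prefix $y$, so $f(y b z') > f(y \bar b z')$ for all $z'$. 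The first clause of Definition~\ref{def:sign-depends} for $\se[x]{j}{i}$ is $f(y\,\bar x_i\,z) > f(y\,x_i\,z)$, which forces $b = \bar x_i$. For the second clause, since $j > i$ flipping coordinate $j$ only changes the suffix: with $z' := z[j\mapsto\bar x_j]$ we have $x[\{i,j\}\mapsto\bar x_i\bar x_j] = y\,\bar x_i\,z'$ and $x[j\mapsto\bar x_j] = y\,x_i\,z'$, so the clause reads $f(y\,\bar x_i\,z') \le f(y\,x_i\,z')$. But $b = \bar x_i$ and Proposition~\ref{prop:MatousekCombed} (instantiated at $z'$) give $f(y\,\bar x_i\,z') > f(y\,x_i\,z')$, a contradiction. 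Hence $\nse{j}{i}$, so no pair has reciprocal sign epistasis and $f$ is semismooth.

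The only delicate point --- and the place to be careful --- is the above/below bookkeeping: the argument needs the \emph{dependent} index of $\se[x]{j}{i}$ to be the smaller one, so that flipping its epistatic partner $x_j$ stays inside the suffix $z'$ over which Proposition~\ref{prop:MatousekCombed} quantifies freely. This asymmetry (not merely $<$-smoothness, which in general does not imply semismoothness) is exactly what makes the semismooth half go through; everything else is an unwinding of Definitions~\ref{def:sign-depends}, \ref{def:rse} and~\ref{def:precSmooth} against Proposition~\ref{prop:MatousekCombed}.
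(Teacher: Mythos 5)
Your proposal is correct and takes essentially the same route as the paper, which states \cref{cor:MatousekPrecSmooth} without a separate proof because both halves are meant to be read off \cref{prop:MatousekCombed} exactly as you do: the inductive construction of $x^*$ gives $<$-smoothness, and the fact that the combing of coordinate $i$ holds for \emph{every} suffix $z$ rules out $\se[x]{j}{i}$ for $i<j$, hence any reciprocal sign epistasis. Your closing remark correctly identifies why the full quantification over suffixes (and not mere $<$-smoothness) is what delivers the semismooth half.
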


\noindent Given the similarity of \cref{prop:MatousekCombed} and \cref{prop:orientedVCSPcombed}, if we generalized the definition of oriented binary \VCSP{s} to arbitrary arity then Matou\v{s}ek AUSOs would be oriented.
The real power of \cref{cor:MatousekPrecSmooth} is that we can combine it with the super-polynomial lower-bound on the runtime of \lang{RandomFacet} (Theorem 4.2 from G{\"a}rtner~\textcite{randomFacetBound}) to give:

\begin{theorem}[G{\"a}rtner \textcite{randomFacetBound}]
There exist families of fitness landscapes that are both conditionally-smooth and semismooth such that \lang{RandomFacet} follows an ascent with expected length $\text{exp}(\Theta(\sqrt{n}))$.
\label{cor:RandomFacetSlow}
\end{theorem}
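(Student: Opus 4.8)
The plan is to combine \cref{cor:MatousekPrecSmooth} with the known lower bound of G\"artner on the \lang{RandomFacet} pivot rule. First I would recall that \cref{cor:MatousekPrecSmooth} already establishes that every Matou\v{s}ek AUSO is $<$-smooth (hence conditionally-smooth) and also semismooth; so the only missing ingredient is a family of Matou\v{s}ek AUSOs on which \lang{RandomFacet} is slow. This is exactly the content of Theorem~4.2 in G\"artner~\textcite{randomFacetBound}: there is a choice of the $n$ parity scopes $R_i$ (with $i \in R_i \subseteq [i]$) defining a Matou\v{s}ek AUSO for which the expected number of pivot steps of \lang{RandomFacet}, started from a worst-case vertex, is $\exp(\Theta(\sqrt n))$.

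The key steps, in order, are: (1) Quote \cref{cor:MatousekPrecSmooth} to get that the entire family of Matou\v{s}ek AUSOs lies in the intersection of the conditionally-smooth and semismooth classes, so in particular any \emph{specific} Matou\v{s}ek AUSO we pick still belongs to both classes. (2) Invoke G\"artner's Theorem~4.2 to extract the specific sub-family $\{f_n\}$ of Matou\v{s}ek AUSOs (one on each number of bits $n$) together with the initial assignments on which \lang{RandomFacet} has expected running time $\exp(\Theta(\sqrt n))$. (3) Observe that on a semismooth landscape every fitness-improving move lies on an ascent to the unique peak (\cref{prop:semismooth_shortAscents} guarantees the peak is unique and reachable, and on an AUSO \lang{RandomFacet} only ever moves to fitter vertices), so the sequence of vertices visited by \lang{RandomFacet} \emph{is} an ascent; hence the "expected number of steps" in G\"artner's bound is literally the expected length of the ascent that \lang{RandomFacet} follows. (4) Conclude that $\{f_n\}$ witnesses the statement.

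The only real subtlety — and the step I expect to require the most care — is a translation/bookkeeping issue rather than a genuine mathematical obstacle: G\"artner phrases his result in the language of unique sink orientations of cubes and the abstract \lang{RandomFacet} algorithm on them, whereas here we need it phrased for fitness landscapes and the local-search algorithm \lang{RandomFacet}. One must check that the correspondence between Matou\v{s}ek AUSOs as defined here (via the pseudo-Boolean fitness function $f(x) = -\sum_i 2^{n-i}P_i(x[R_i])$) and the orientations G\"artner analyses is faithful — i.e. the out-map $\phi^+$ induced by $f$ on each vertex coincides with the outgoing edges of G\"artner's orientation — and that the two notions of "\lang{RandomFacet} step" agree. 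Both are immediate from unpacking definitions (the $2^{n-i}$ weighting makes the $i$-th bit dominate all later bits, which is precisely what makes these orientations USOs and is what \cref{prop:MatousekCombed} exploited), so once this dictionary is in place the theorem is just a restatement of G\"artner's. A secondary point to note is that "expected length $\exp(\Theta(\sqrt n))$" is robust to the choice of starting assignment only up to the worst case, so the statement should be read (as G\"artner's is) as asserting the existence of initial assignments achieving this bound; no claim is made that every start is slow.
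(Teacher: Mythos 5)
Your proposal is correct and matches the paper's own argument: the paper likewise obtains this theorem by combining \cref{cor:MatousekPrecSmooth} (Matou\v{s}ek AUSOs are both $<$-smooth and semismooth) with the super-polynomial lower bound of Theorem~4.2 in G{\"a}rtner~\textcite{randomFacetBound}. The translation between the USO formulation and the fitness-landscape formulation that you flag is exactly the dictionary the paper relies on implicitly via \cref{prop:MatousekCombed}, so nothing further is needed.
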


\noindent Given that \lang{RandomFacet} is the best known algorithm for semismooth fitness landscapes, 
it is surprising to see it performing at its worst-case on conditionally-smooth landscapes.

\section{Conclusion and Future Work}
\label{sec:discussion}

Overall, we showed that conditionally-smooth fitness landscapes -- a polynomial-time testable structural property of \VCSP{s} that generalizes the natural notion of oriented \VCSP{s} -- are an expressive subclass of Boolean \VCSP s for which many (but not all) popular local search algorithm are both effective and efficient.
Future work could further generalize conditionally-smooth landscapes from Boolean to higher-valence domains.
This might allow us to engage with directed \VCSP s 
where each strongly connected component is of small size $D \in O(\log n)$ 
by modeling the whole connected component as one domain with $2^D$ values.
Would this class still be efficient-for-many local search algorithms?
Is local search fixed-parameter tractable when parameterized by the size of the \VCSP{'s} largest strongly connected component?

We hope that our results on conditionally-smooth fitness landscapes contribute to a fuller understanding of when local search is both effective and efficient.
This is interesting for theory as it grows our understanding of parameterized complexity, especially for \PLS.
This theory can guide us to what properties practical fitness landscapes might have in the real-world cases where local search seems to work well.
Finally, we are excited about the use of these results for theory-building in the natural sciences.
When local search is used by nature, we often do not have perfect understanding of which local search algorithm nature follows.
But we might have strong beliefs about nature's algorithm being efficient.
In this case, a good understanding of what landscapes are efficient-for-many local search algorithms can help us to reduce the set of fitness landscapes that we consider theoretically possible in nature.
This can be especially useful in fields like evolutionary biology~\cite{evoPLS,KazThesis} and economics~\cite{AlgoEconSurvey} where we have only limited empirical measurements of nature's fitness landscapes.


\newpage
\appendix
\clearpage
\markboth{APPENDICES}{APPENDICES}
\section{Properties of directed and oriented \VCSP s}
\label{sec:propVCSPs}

\begin{proposition}
    Let $\mathcal{C}$ be a binary Boolean \VCSP-instance that implements a fitness landscape $f$ then
    $\mathcal{C} \text{ is directed } \iff  f \text{ is semismooth}$.
    \label{prop:directedVCSP_semismooth}
\end{proposition}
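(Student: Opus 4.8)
The plan is to leverage the characterization of semismoothness via reciprocal sign epistasis (\cref{prop:semismooth}): $f$ is semismooth iff $\nrse{i}{j}$ for all $i,j \in [n]$. Since $\mathcal{C}$ being directed means (\cref{def:VCSP_dir_ori}) that $A(\mathcal{C})$ has no bidirected arc $\arc[b]{i}{j}$, it suffices to show the equivalence, for each pair $\{i,j\}$, between ``$\arc[b]{i}{j} \in A(\mathcal{C})$'' and ``there exists a background $x$ with $\rse[x]{i}{j}$.'' This reduces the whole proposition to a local, two-variable statement that can be checked directly against \cref{def:VCSP_arcs} and \cref{def:rse}. Note first that if $\{i,j\} \notin E(\mathcal{C})$ then flipping $x_i$ changes $f$ by $\hat c_i(x)$ regardless of $x_j$, so $i$ cannot sign-depend on $j$ in any background and there is trivially no reciprocal sign epistasis; so we may assume $\{i,j\} \in E(\mathcal{C})$.

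The key computation is to express the relevant fitness differences in terms of effective unaries. For a background assignment $x$, a direct expansion of $f(x) = \sum_{c_S} c_S \prod_{k \in S} x_k$ gives
\begin{align}
f(x[i \mapsto \bar x_i]) - f(x) &= \pm\bigl(\hat c_i(x,\{i,j\}) + x_j c_{ij}\bigr), \\
f(x[\{i,j\}\mapsto \bar x_i \bar x_j]) - f(x[j \mapsto \bar x_j]) &= \pm\bigl(\hat c_i(x,\{i,j\}) + \bar x_j c_{ij}\bigr),
\end{align}
where the signs depend only on $x_i$, and the symmetric identities hold with $i$ and $j$ interchanged. Thus $\se[x]{j}{i}$ (flipping $i$ helps in background $x$, but not after also flipping $j$) holds exactly when the quantity $\hat c_i(x,\{i,j\})$ lies, in sign, \emph{opposite} to $c_{ij}$ and with strictly smaller magnitude — i.e. $|\hat c_i(x,\{i,j\})| < |c_{ij}|$ and $\mathrm{sgn}(\hat c_i(x,\{i,j\})) \neq \mathrm{sgn}(c_{ij})$ — together with a compatibility condition on $x_j$ relative to $\mathrm{sgn}(c_{ij})$ (the flip of $i$ must help \emph{at} $x$, which pins down $x_j$). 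Working this out carefully, $\rse[x]{i}{j}$ (both $\se[x]{j}{i}$ and $\se[x]{i}{j}$) forces $x_i$ and $x_j$ to take the values making both $x_j c_{ij}$ and $x_i c_{ij}$ have sign $\mathrm{sgn}(c_{ij})$ — i.e. $x_i = x_j = [c_{ij} > 0]$ — and then reduces precisely to $|c_{ij}| > \max\{|\hat c_i(x,\{i,j\})|, |\hat c_j(x,\{i,j\})|\}$ with both effective unaries sign-opposite to $c_{ij}$. That is exactly clause (1) of \cref{def:VCSP_arcs}. Hence $\exists x\, \rse[x]{i}{j}$ iff $\arc[b]{i}{j} \in A(\mathcal{C})$.

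For the final assembly: if $\mathcal{C}$ is directed, no edge carries a bidirected arc, so by the equivalence just proved $\nrse{i}{j}$ for every pair, and \cref{prop:semismooth} gives that $f$ is semismooth. Conversely, if $f$ is semismooth then $\nrse{i}{j}$ for all pairs, so no $\arc[b]{i}{j}$ is in $A(\mathcal{C})$, i.e. $\mathcal{C}$ is directed. The main obstacle is the bookkeeping in the middle paragraph: getting the signs right in the fitness-difference identities and then verifying that the somewhat asymmetric-looking clause (1) of \cref{def:VCSP_arcs} (which only asks about $\hat c_i$ and $\hat c_j$ with matching sign conditions, and implicitly a single shared background) captures \emph{exactly} the set of backgrounds witnessing reciprocal sign epistasis — in particular that one need not separately range over the placement of $x_i,x_j$ because $\rse[x]{i}{j}$ forces them. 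A useful sanity check along the way is to confirm the overloaded notation is consistent, i.e. that the $\se{j}{i}$ arcs of \cref{def:VCSP_arcs}(2) likewise coincide with the sign-dependence relation of \cref{def:sign-depends}, which the paper already asserts is ``easy to check.''
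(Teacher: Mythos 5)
Your top-level argument is the same as the paper's: reduce to \cref{prop:semismooth} and observe that ``directed'' is by definition the absence of bidirected arcs, so everything hinges on the correspondence $\arc[b]{i}{j} \in A(\mathcal{C}) \iff \exists x\, \rse[x]{i}{j}$. The paper simply asserts this correspondence (``it is easy to check that the overloading \ldots is appropriate'') and gives a two-line proof; you attempt to actually verify it, which is commendable, but your verification contains a concrete error. Writing $s_k = 1 - 2x_k$ for the sign of the fitness change when flipping bit $k$, the four inequalities defining $\rse[x]{i}{j}$ force $s_i s_j\,\mathrm{sgn}(c_{ij}) = -1$: that is, $x_i \neq x_j$ when $c_{ij} > 0$, and $x_i = x_j$ (with \emph{both} $00$ and $11$ admissible) when $c_{ij} < 0$. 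Your claim that $\rse[x]{i}{j}$ forces $x_i = x_j = [c_{ij} > 0]$ is therefore false in both cases; e.g.\ in \cref{fig:arcDirections}(d) ($c_i = c_j = 1$, $c_{ij} = -2$) both $x = 00$ and $x = 11$ witness $\rse{i}{j}$. This does not sink the overall proof --- since $\hat{c}_i(x,\{i,j\})$ and $\hat{c}_j(x,\{i,j\})$ do not depend on $x_i,x_j$, one may always choose $x_i,x_j$ to take whichever forced values are actually required, so the existential equivalence with clause (1) of \cref{def:VCSP_arcs} still holds --- but the middle step as written is wrong and should be corrected. A smaller issue, which the paper also glosses over, is the mismatch between the strict inequality $|c_{ij}| > |\hat{c}_i(x,\{i,j\})|$ in \cref{def:VCSP_arcs} and the non-strict ``$\leq$'' in \cref{def:sign-depends}, which creates boundary cases at $|c_{ij}| = |\hat{c}_i(x,\{i,j\})|$ that your ``working this out carefully'' does not resolve.
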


\begin{proof}
    ($\Rightarrow$) By definition $\mathcal{C}$ has no bidirectional arcs. 
    Thus, for any pair of indexes $i\neq j$ we have $\nrse{i}{j}$, 
    and by \cref{prop:semismooth} the fitness landscape implemented by $\mathcal{C}$ is semismooth.
    
\noindent ($\Leftarrow$) By definition, if a fitness landscape is semismooth, its fitness function is such that for all $i\neq j$ it holds that $\nrse{i}{j}$ in all backgrounds. 
    Therefore, there are no bidirected arcs in $\mathcal{C}$.
\end{proof}

Just as the constraints imply the direction of the arcs, the presence or absence of certain arcs implies something about the constraints that generated them.
For example, let us write $j\not\leftarrow k$ if $\arc[b]{j}{k} \not\in A(\mathcal{C})$ and $\arc[l]{j}{k} \not\in A(\mathcal{C})$ then we have the following proposition:

\begin{proposition}    \label{lem:dipath_cij}
    In binary Boolean \VCSP-instance $\mathcal{C}$, if $\se{i}{j}$ and $j \not\leftarrow k$ then $|c_{ij}| > |c_{jk}|$.
\end{proposition}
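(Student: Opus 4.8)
The plan is to unpack both hypotheses using Definition~\ref{def:VCSP_arcs} and the algorithm \lang{ArcDirection}, translating the statements $\se{i}{j}$ and $j \not\leftarrow k$ into inequalities on effective unaries, and then to combine them by exhibiting a suitable witness assignment. First I would observe that since $\se{i}{j} \in A(\mathcal{C})$ and $\se{i}{j}$ is not bidirected (otherwise we'd have $\rse{i}{j}$ rather than the one-directional arc), there is an assignment $y$ with $\text{sgn}(c_{ij}) \neq \text{sgn}(\hat{c}_j(y,\{i,j\}))$ and $|c_{ij}| > |\hat{c}_j(y,\{i,j\})|$ — this is clause (2a) of Definition~\ref{def:VCSP_arcs} applied at the variable $j$, reflecting that flipping $x_j$ against its effective-unary preference can be overridden by the $c_{ij}$ constraint. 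The point is that the $\{i,j\}$ edge is ``strong enough at $j$'' in some background.

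**Using the second hypothesis.** Next I would use $j \not\leftarrow k$, which by the displayed definition means $\arc[b]{j}{k} \notin A(\mathcal{C})$ and $\arc[l]{j}{k} \notin A(\mathcal{C})$ — i.e. the edge $\{j,k\}$ is never strong enough at $j$. Reading clause (2b) of Definition~\ref{def:VCSP_arcs} (with the roles of $i,j$ there played by $k,j$ here, so that $\arc[l]{j}{k}$ corresponds to the ``$z$'' condition at vertex $j$), the absence of $\arc[l]{j}{k}$ means: for \emph{every} assignment $z$, it is \emph{not} the case that both $\text{sgn}(c_{jk}) \neq \text{sgn}(\hat{c}_j(z,\{j,k\}))$ and $|c_{jk}| > |\hat{c}_j(z,\{j,k\})|$. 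Likewise $\arc[b]{j}{k} \notin A(\mathcal{C})$ rules out the simultaneous two-sided domination. So in particular, whenever the sign-mismatch condition at $j$ holds, we must have $|c_{jk}| \leq |\hat{c}_j(z,\{j,k\})|$.

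**Combining: the key step.** The heart of the argument — and the step I expect to be the main obstacle — is to pick the right witness. Take the assignment $y$ from the first hypothesis and consider its effective unary $\hat{c}_j(y,\{i,j\})$, which includes the contribution $y_k c_{jk}$ (since $k \in N(j) \setminus \{i,j\}$, assuming $k \neq i$; the case $k = i$ would be degenerate and handled separately). Now build an assignment $z$ that agrees with $y$ on $N(j) \setminus \{i,j,k\}$ but is chosen so that $\hat{c}_j(z,\{j,k\})$ ``lines up'' to witness a near-domination of the $\{j,k\}$ edge — concretely, choose $z$ so that $\hat{c}_j(z,\{j,k\}) = \hat{c}_j(y,\{i,j\}) - y_k c_{jk} + y_i c_{ij}$, or some variant, and verify the sign mismatch $\text{sgn}(c_{jk}) \neq \text{sgn}(\hat{c}_j(z,\{j,k\}))$ holds. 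The delicate point is that $\hat{c}_j(y,\{i,j\})$ has magnitude strictly less than $|c_{ij}|$, and I want to transfer this into a bound $|c_{jk}| \leq |\hat{c}_j(z,\{j,k\})| < |c_{ij}|$ by arranging $z$ so that the relevant effective unary at $j$ relative to the $\{j,k\}$ scope is small in magnitude and sign-opposed to $c_{jk}$.

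**Closing.** Once such a $z$ is exhibited, the hypothesis $j \not\leftarrow k$ forces $|c_{jk}| \leq |\hat{c}_j(z,\{j,k\})|$; by construction the right side is controlled by $|\hat{c}_j(y,\{i,j\})| < |c_{ij}|$ (possibly after absorbing the single term $|c_{ij}|$ that differs between the two scopes — here one must be careful that the ``$\hat{c}_j$ relative to $\{j,k\}$'' does \emph{not} subtract off $c_{ij}$, so a term $y_i c_{ij}$ may appear and must be handled, perhaps by setting $z_i = 0$). I expect the cleanest route is: set $z$ to equal $y$ on all of $N(j)$ except force $z_k$ and $z_i$ to values making $\hat c_j(z,\{j,k\})$ match $\hat c_j(y,\{i,j\}) + (\text{a }c_{jk}\text{ term})$, then split into the sub-cases according to $\text{sgn}(c_{jk})$ versus $\text{sgn}(c_{ij})$. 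In every sub-case one derives $|c_{jk}| < |c_{ij}|$, completing the proof. The main obstacle is purely the bookkeeping of which constraint term is or is not excluded from each effective unary, and ensuring the sign conditions are genuinely satisfied by the chosen witness rather than merely plausible.
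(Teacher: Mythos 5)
Your setup is the right one and matches the paper's: you correctly translate $\se{i}{j}$ into a witness background $y$ with $\text{sgn}(\hat{c}_j(y,\{i,j\}))\neq\text{sgn}(c_{ij})$ and $|\hat{c}_j(y,\{i,j\})|<|c_{ij}|$, and $j\not\leftarrow k$ into a universally quantified statement over backgrounds $z$. But the decisive step is missing: you never actually exhibit the witness $z$ or verify the sign conditions, and you say as much (``or some variant'', ``ensuring the sign conditions are genuinely satisfied by the chosen witness rather than merely plausible''). That step is the entire content of the proposition, and your proposed single-witness chain $|c_{jk}|\leq|\hat{c}_j(z,\{j,k\})|<|c_{ij}|$ does not go through as stated. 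Fixing everything outside $\{i,k\}$ to agree with $y$, the only values $\hat{c}_j(z,\{j,k\})$ can take are $h$ and $h+c_{ij}$, where $h=\hat{c}_j(y,\{i,j,k\})$. When the $\se{i}{j}$ witness has $y_k=1$, the hypothesis only controls $|h+c_{jk}|<|c_{ij}|$, and $|h|$ (hence possibly $|h+c_{ij}|$ as well) can exceed $|c_{ij}|$; then the second inequality of your chain fails for both candidate backgrounds. In those configurations the conclusion must come from elsewhere: either directly from $|c_{ij}|>|h+c_{jk}|=|h|+|c_{jk}|$ when $\text{sgn}(h)=\text{sgn}(c_{jk})$, or by showing that the remaining sign pattern forces $\arc[l]{j}{k}\in A(\mathcal{C})$ and so contradicts the hypothesis. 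Your sketch does not distinguish these cases, so as written it is a plan rather than a proof.

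The paper closes exactly this gap with a sign-chasing argument you should compare against: it applies $\nse{k}{j}$ to \emph{both} backgrounds $x$ and $x[i\mapsto\bar{x}_i]$ to get $\text{sgn}(h)=\text{sgn}(h+c_{jk})$ and $\text{sgn}(h+c_{ij})=\text{sgn}(h+c_{ij}+c_{jk})$, chains these with the sign flip $\text{sgn}(h+y_kc_{jk})\neq\text{sgn}(h+y_kc_{jk}+c_{ij})$ coming from $\se{i}{j}$, and then splits on whether $\text{sgn}(c_{ij})$ agrees with $\text{sgn}(h)$. Only one branch uses the magnitude half of $j\not\leftarrow k$ (yielding $|c_{ij}|>|h|>|c_{jk}|$); the other extracts $|c_{ij}|>|c_{jk}|$ from the magnitude half of $\se{i}{j}$ alone. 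Until you carry out an analogous case analysis -- in particular, handling the case where your chosen $z$ fails the bound $|\hat{c}_j(z,\{j,k\})|<|c_{ij}|$ -- the argument is incomplete.
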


\begin{proof}
    Since $\se{i}{j} \in A(\mathcal{C})$ 
    and $\hat{c}_j(x,\{i,j\})=\hat{c}_j(x,\{i,j,k\})+c_{jk}x_k$ we have $\text{sgn}(\hat{c}_j(x,\{i,k\})+c_{jk}x_k)\neq \text{sgn}(c_{ij})$ and $|c_{ij}|>|\hat{c}_j(x,\{i,k\})+c_{jk}x_k|$.
    Since $\nse{k}{j}$ we know that for all assignments $y$ we have $\text{sgn}(\hat{c}_j(y,\{k\}))=\text{sgn}(\hat{c}_j(y,\{k\})+ c_{jk})$.
     In particular, $x$ and $x[i\to \Bar{x}_i]$ must be such assignments.
     Thus, $\text{sgn}(\hat{c}_j(x,\{i,k\})) = \text{sgn}(\hat{c}_j(x,\{i,k\}) + c_{jk})
     \neq
     \text{sgn}(\hat{c}_j(x,\{i,k\}) + c_{ij} ) = 
                 \text{sgn}(\hat{c}_j(x,\{i,k\}) +c_{ij}+c_{jk})$.
     If $\text{sgn}(c_{ij})=\text{sgn}(\hat{c}_j(x,\{i,k\}))$ then $|c_{ij}|>|c_{jk}|$.
     In the other case, $\text{sgn}(c_{ij})\neq \text{sgn}(\hat{c}_j(x,\{i,k\}))$ from which $|c_{ij}|>|\hat{c}_j(x,\{i,k\})|>|c_{jk}|$.
\end{proof}

It is important to notice that we do not have the freedom to assign the direction of arcs arbitrarily.
Rather, they are determined by the constraint weights of the \VCSP-instance.
The overall structure of the network is closely linked to the direction of the arcs:
In particular, to have a lot of arcs in opposite directions, a directed \VCSP-instance needs to have a highly clustered network.
Or, more precisely, in the contrapositive:

\begin{theorem}\label{lem:triangle-free-means-oriented}
    If a directed \VCSP-instance $\mathcal{C}$ is triangle-free then $\mathcal{C}$ is oriented.
\end{theorem}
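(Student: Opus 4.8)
I will argue the contrapositive in the form: if a directed $\mathcal{C}$ is \emph{not} oriented, then it contains a triangle. So suppose some pair $i \neq j$ carries two arcs. Since $\mathcal{C}$ is directed it has no bidirected arc, so the two arcs must be $\arc[r]{i}{j}$ and $\arc[l]{i}{j}$, and these can only arise through case (2) of \cref{def:VCSP_arcs}. Thus there is an assignment $y$ with $|c_{ij}| > |\hat{c}_j(y,\{i,j\})|$ and $\text{sgn}(c_{ij}) \neq \text{sgn}(\hat{c}_j(y,\{i,j\}))$, and an assignment $z$ with $|c_{ij}| > |\hat{c}_i(z,\{i,j\})|$ and $\text{sgn}(c_{ij}) \neq \text{sgn}(\hat{c}_i(z,\{i,j\}))$.

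The key observation is that $\hat{c}_j(x,\{i,j\}) = c_j + \sum_{k \in N(j)\setminus\{i,j\}} x_k c_{jk}$ depends only on the coordinates of $x$ indexed by $N(j)\setminus\{i,j\}$, and likewise $\hat{c}_i(x,\{i,j\})$ depends only on coordinates indexed by $N(i)\setminus\{i,j\}$. If $i$ and $j$ have \emph{no common neighbour} — i.e.\ $\{i,j\}$ lies in no triangle — then $N(j)\setminus\{i,j\}$ and $N(i)\setminus\{i,j\}$ are disjoint, so I can define a single assignment $x$ that agrees with $y$ on $N(j)\setminus\{i,j\}$, agrees with $z$ on $N(i)\setminus\{i,j\}$, and is (say) $0$ elsewhere. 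By the observation, $\hat{c}_j(x,\{i,j\}) = \hat{c}_j(y,\{i,j\})$ and $\hat{c}_i(x,\{i,j\}) = \hat{c}_i(z,\{i,j\})$.

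Plugging this $x$ into the inequalities for $y$ and $z$ gives simultaneously $|c_{ij}| > |\hat{c}_i(x,\{i,j\})|$, $|c_{ij}| > |\hat{c}_j(x,\{i,j\})|$, $\text{sgn}(c_{ij}) \neq \text{sgn}(\hat{c}_i(x,\{i,j\}))$, and $\text{sgn}(c_{ij}) \neq \text{sgn}(\hat{c}_j(x,\{i,j\}))$. Hence $|c_{ij}| > \max\{|\hat{c}_i(x,\{i,j\})|,|\hat{c}_j(x,\{i,j\})|\}$ together with the two sign conditions, so condition (1) of \cref{def:VCSP_arcs} is met and $\arc[b]{i}{j} \in A(\mathcal{C})$ — contradicting that $\mathcal{C}$ is directed. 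Therefore $i$ and $j$ must have a common neighbour $k$, and $\{i,j,k\}$ is a triangle in the constraint graph. Equivalently, a triangle-free directed $\mathcal{C}$ is oriented.

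I do not expect a genuine obstacle here; the only care needed is the bookkeeping around the three alternatives in \cref{def:VCSP_arcs} (ruling out $\arc[b]{i}{j}$ from directedness and confirming the combined $x$ falls into case (1)) and the degenerate sign cases ($\hat{c}_i$ or $\hat{c}_j$ equal to zero), which the strict inequality $|c_{ij}|>0$ in the hypotheses already handles. This also sharpens the remark following \cref{prop:unclustered_bi}: not merely a neighbour in $N(i)\cup N(j)$ but a common neighbour in $N(i)\cap N(j)$ is required to realize opposing arcs.
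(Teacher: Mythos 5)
Your proof is correct and follows essentially the same route as the paper: the paper isolates your splicing step as \cref{prop:unclustered_bi} (when $N(i)\cap N(j)=\emptyset$, witnesses for the two opposing arcs can be merged into a single background certifying $\arc[b]{i}{j}$) and then derives the theorem by contrapositive exactly as you do. The only cosmetic difference is that the paper states the merge for the full landscape-level sign-dependence relations and you phrase it directly on the arc conditions of \cref{def:VCSP_arcs}; the content is identical.
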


\noindent This theorem matters for two reasons.
First, it shows that oriented \VCSP{s} are in some sense a generalization of the tree-structured \VCSP{s} studied by Kaznatcheev, Cohen, and Jeavons~\textcite{repCP},
since tree-structured directed \VCSP{s} are triangle-free (and thus oriented by \cref{lem:triangle-free-means-oriented}).
This means that our results in \cref{app:efficientAlgos} showing that oriented \VCSP{s} are efficient-for-many local search algorithms are an extension of Kaznatcheev, Cohen, and Jeavons~\textcite{repCP}'s results that tree-structured binary Boolean \VCSP{s} are efficient-for-all local search algorithms.
Second, \cref{lem:triangle-free-means-oriented} suggests that oriented \VCSP{s} are not ``rare'' -- a large natural subclass of directed \VCSP{s} is oriented.
We prove \cref{lem:triangle-free-means-oriented} by contrapositive from a more detailed proposition:

\begin{proposition}\label{prop:unclustered_bi}
Given a fitness landscape $f$ implemented by \VCSP-instance $\mathcal{C}$ with neighbourhood $N$,
if $N(i) \cap N(j)$ is empty and both $\se[y]{i}{j}$ and $\se[x]{j}{i}$ in $f$ then $\arc[b]{i}{j} \in A(\mathcal{C})$.
\end{proposition}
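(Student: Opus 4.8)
The plan is to take the two given sign-dependences -- which may occur in \emph{different} backgrounds $x$ and $y$ -- and splice them together into a single background that certifies the bidirected arc via \cref{def:VCSP_arcs}. The first ingredient is the translation between sign-dependence and effective unaries (the ``overloading is appropriate'' remark following \cref{def:VCSP_arcs}): since $f$ is a quadratic pseudo-Boolean function, flipping bit $i$ in any background $w$ changes fitness by $(1-2w_i)\hat{c}_i(w)$, and by \cref{def:effective-unary} $\hat{c}_i(w)=\hat{c}_i(w,\{i,j\})+c_{ij}w_j$. The two conditions defining $\se[x]{j}{i}$ say that flipping $i$ strictly increases fitness at $x$ but does not increase it at $x[j\mapsto\bar{x}_j]$; plugging this into the identity above and using that $x$ and $x[j\mapsto\bar{x}_j]$ differ only in the $c_{ij}w_j$ summand, I would deduce
\[
\text{sgn}(c_{ij})\neq\text{sgn}(\hat{c}_i(x,\{i,j\}))\qquad\text{and}\qquad|c_{ij}|>|\hat{c}_i(x,\{i,j\})| ,
\]
and, running the same argument with the roles of $i$ and $j$ exchanged, from $\se[y]{i}{j}$,
\[
\text{sgn}(c_{ij})\neq\text{sgn}(\hat{c}_j(y,\{i,j\}))\qquad\text{and}\qquad|c_{ij}|>|\hat{c}_j(y,\{i,j\})| .
\]

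The second and main step produces one background witnessing both pairs of inequalities simultaneously, and this is where $N(i)\cap N(j)=\emptyset$ enters. By \cref{def:effective-unary}, $\hat{c}_i(z,\{i,j\})$ depends on $z$ only through $z[N(i)\setminus\{j\}]$ and $\hat{c}_j(z,\{i,j\})$ only through $z[N(j)\setminus\{i\}]$, and disjointness of $N(i)$ and $N(j)$ makes these two index sets disjoint. Hence I may choose an assignment $z$ with $z[N(i)\setminus\{j\}]=x[N(i)\setminus\{j\}]$ and $z[N(j)\setminus\{i\}]=y[N(j)\setminus\{i\}]$ (the remaining coordinates, including $z_i$ and $z_j$, are irrelevant). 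For this $z$ we have $\hat{c}_i(z,\{i,j\})=\hat{c}_i(x,\{i,j\})$ and $\hat{c}_j(z,\{i,j\})=\hat{c}_j(y,\{i,j\})$, so the inequalities above combine to $|c_{ij}|>\max\{|\hat{c}_i(z,\{i,j\})|,|\hat{c}_j(z,\{i,j\})|\}$ with $\text{sgn}(c_{ij})$ differing from both $\text{sgn}(\hat{c}_i(z,\{i,j\}))$ and $\text{sgn}(\hat{c}_j(z,\{i,j\}))$. This is exactly the premise of clause~(1) of \cref{def:VCSP_arcs}, so $\arc[b]{i}{j}\in A(\mathcal{C})$, as required.

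I expect the first step to be the main obstacle, not the second: one has to unwind the four fitness comparisons hidden inside $\se[x]{j}{i}$ and $\se[y]{i}{j}$ into strict magnitude inequalities on $|c_{ij}|$ while carefully tracking the signs of $1-2w_i$, of $c_{ij}$, and of the effective unaries. Once that bookkeeping is in place, the splicing step is essentially one line -- disjoint neighbourhoods decouple the $i$-side and the $j$-side of the bidirected condition, so they can be satisfied by independent parts of a single assignment. (As a sanity check on the setup, note that $\se[x]{j}{i}$ already forces $\{i,j\}\in E(\mathcal{C})$: were there no constraint $c_{ij}$, flipping $j$ would leave $\hat{c}_i$ unchanged and the two defining conditions of $\se[x]{j}{i}$ would contradict each other, so $c_{ij}$ exists and the effective unaries above are well defined.)
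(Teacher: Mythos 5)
Your proof is correct and follows essentially the same route as the paper's: both arguments splice the two backgrounds $x$ and $y$ into a single assignment $z$ agreeing with $x$ on $N(i)$ and with $y$ on $N(j)$ (possible precisely because $N(i)\cap N(j)=\emptyset$), observe that the effective unaries $\hat{c}_i$ and $\hat{c}_j$ are preserved under this splice, and conclude that $z$ certifies clause~(1) of \cref{def:VCSP_arcs}. Your extra care in unwinding the sign-dependence conditions into the effective-unary inequalities only makes explicit what the paper delegates to its ``the overloading is appropriate'' remark.
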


\begin{proof}        
Given that $N(i) \cap N(j)$ is empty, define an assignment $z$ as $z_k = x_k$ for $k \in N(i)$, $z_k = y_k$ for $k \in N(j)$, and $z_k = 0$ otherwise.
From \cref{def:effective-unary}, we have that:
$\hat{c}_i(z,\{j\})=\hat{c}_i(x,\{j\})$ and $\hat{c}_j(z,\{i\})=\hat{c}_j(y,\{i\})$.
From $\se[x]{j}{i}$ and \cref{def:VCSP_arcs} we have $|c_{ij}|>|\hat{c}_i(x,\{j\})|$ and $\text{sgn}(c_{ij})\neq \text{sgn}(\hat{c}_i(x,\{j\}))$.
Similarly, from $\se[y]{i}{j}$ and \cref{def:VCSP_arcs} we know $|c_{ij}|>|\hat{c}_j(y,\{i\})|$ and $\text{sgn}(c_{ij})\neq \text{sgn}(\hat{c}_j(y,\{i\}))$.
It follows that $|c_{ij}|>|\hat{c}_i(z,\{j\})|$ with $\text{sgn}(c_{ij})\neq \text{sgn}(\hat{c}_i(z,\{j\}))$
and $|c_{ij}|>|\hat{c}_j(z,\{i\})|$ with $\text{sgn}(c_{ij})\neq \text{sgn}(\hat{c}_j(z,\{i\}))$.  
Thus $z$ is a certificate that $\arc[b]{i}{j}\in A(\mathcal{C})$.
\end{proof}

\begin{proof}[Proof of \cref{lem:triangle-free-means-oriented}]
In a triangle-free graph, $N(i) \cap N(j)$ is empty if $i \neq j$. 
Since $\mathcal{C}$ is directed, $\arc[b]{i}{j} \not\in A(\mathcal{C})$.
By \cref{def:VCSP_arcs,prop:unclustered_bi} either $i \rightarrow j \not\in A(\mathcal{C})$ or $j \rightarrow i \not\in A(\mathcal{C})$.
\end{proof}

The relationship between arc direction and overall structure of the graph is not just local:

\begin{proposition}\label{prop:no-cycles}
    An oriented binary Boolean \VCSP-instance $\mathcal{C}$ has no directed cycles --
    its constraint graph is a directed acyclic graph (DAG).
\end{proposition}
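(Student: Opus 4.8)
The plan is to argue by contradiction: suppose the constraint graph of an oriented $\mathcal{C}$ contains a directed cycle $\se{i_1}{i_2} \rightarrow \se{i_2}{i_3} \rightarrow \cdots \rightarrow \se{i_m}{i_1}$, where (since $\mathcal{C}$ is oriented) each arc is a single $\se{\cdot}{\cdot}$, and no chord of the cycle is bidirected. I would first reduce to a \emph{minimal} such cycle, so that the only arcs among $\{i_1,\ldots,i_m\}$ are those of the cycle itself (any chord, being non-bidirected, would point one way and let us shortcut to a shorter cycle). The goal is then to show that along a minimal oriented cycle the constraint magnitudes must strictly decrease all the way around, which is impossible.

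The key tool is \cref{lem:dipath_cij}: if $\se{i}{j}$ and $j \not\leftarrow k$ then $|c_{ij}| > |c_{jk}|$. First I would check that along our minimal cycle the hypothesis $j \not\leftarrow k$ is met at each vertex: at vertex $i_t$ we have the incoming arc $\se{i_{t-1}}{i_t}$ and the outgoing arc $\se{i_t}{i_{t+1}}$. We need $i_t \not\leftarrow i_{t+1}$, i.e., neither $\arc[b]{i_t}{i_{t+1}}$ nor $\arc[l]{i_t}{i_{t+1}}$ is present. The bidirected arc is excluded because $\mathcal{C}$ is directed. The arc $\arc[l]{i_t}{i_{t+1}}$, which is the same as $\se{i_{t+1}}{i_t}$, is excluded because $\mathcal{C}$ is oriented and we already have $\se{i_t}{i_{t+1}} \in A(\mathcal{C})$ — at most one arc per pair. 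Hence \cref{lem:dipath_cij} applies with $i = i_{t-1}$, $j = i_t$, $k = i_{t+1}$, giving $|c_{i_{t-1}i_t}| > |c_{i_t i_{t+1}}|$ for every $t$ around the cycle.

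Chaining these inequalities around the full cycle yields $|c_{i_m i_1}| > |c_{i_1 i_2}| > |c_{i_2 i_3}| > \cdots > |c_{i_{m-1} i_m}| > |c_{i_m i_1}|$, a strict inequality of a quantity with itself — contradiction. Therefore no directed cycle exists, and the constraint graph is a DAG.

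The main obstacle I anticipate is the bookkeeping in the reduction to a minimal cycle: one must be careful that deleting a chord to shorten the cycle cannot create a bidirected edge on the new cycle (it cannot, since all edges in the constraint graph of a directed $\mathcal{C}$ are single- or double-arced but never bidirected by definition of directed), and that a minimal cycle genuinely has no chords at all (otherwise the chord, being a directed arc, splits the cycle into two, one of which is a strictly shorter directed cycle). Once minimality is in hand, the verification that $j \not\leftarrow k$ at each vertex is immediate from the definitions of \emph{directed} and \emph{oriented}, and the rest is the telescoping contradiction above. A small point to state carefully is the $m = 2$ case — a pair of variables with arcs in both directions — but that is excluded directly by the definition of oriented, so the interesting case is $m \geq 3$.
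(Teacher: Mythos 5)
Your proof is correct and follows essentially the same route as the paper: verify that orientedness gives $j \not\leftarrow k$ at each step, apply \cref{lem:dipath_cij} to get strictly decreasing constraint magnitudes along the cycle, and derive the contradiction $|c_{i_m i_1}| > |c_{i_m i_1}|$. The only difference is that your reduction to a minimal (chordless) cycle is unnecessary overhead -- the hypothesis of \cref{lem:dipath_cij} involves only the two consecutive edges $\{i,j\}$ and $\{j,k\}$ of the walk, so chords are irrelevant and the paper simply applies the lemma along an arbitrary directed walk.
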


\begin{proof}
    Take any directed walk $W= a \to b \to \ldots \to i \to j \to k \to \ldots \to r \to s$ in $\mathcal{C}$. 
    Since $\mathcal{C}$ is oriented and $\arc[r]{j}{k} \in A(\mathcal{C})$ that means $j \not\leftarrow k$.
    Thus, by \cref{lem:dipath_cij} the magnitude of the binary constraints decreases along the walk, i.e. $|c_{ab}| > \ldots > |c_{ij}| > |c_{jk}| > \ldots >|c_{rs}|$.
    So $W$ has no cycles.
\end{proof}

\noindent Much like \cref{lem:triangle-free-means-oriented}, this proposition lets us views oriented \VCSP{s} as a kind of generalization or extension of tree-structured \VCSP{s}: moving from undirected acyclicity of trees to the directed acyclicity of DAGs.
That oriented \VCSP s have acyclic constraint graphs (\cref{prop:no-cycles}) can be also combined with the concept of a \emph{combed face}:

\begin{definition}
Given any set $R \subseteq [n]$, we will say that an assignment $x$ is in the face $\{0,1\}^R y$ with background partial assignment $y \in \{0,1\}^{[n] \setminus R}$ (and write $x \in \{0,1\}^R y$) if $x[[n] \setminus R] = y$.
We will say that a face $\{0,1\}^R y$ is \emph{combed} along dimension $i \in R$ with preferred assignment $x^*_i(y)$ if for all $x \in \{0,1\}^R y$ we have $f(x[i \mapsto x^*_i(y)]) > f(x[i \mapsto \overline{x^*_i(y)}])$.
\label{def:combed}
\end{definition}

\noindent In other words, a face is combed if there exists a dimension such that all fitness increasing steps point in the same direction along that dimension -- as if it was hair that was combed.
This gives us a combing property of oriented \VCSP{s}:

\begin{proposition}
If strict ordered set $([n],\prec)$ is a transitive closure of the oriented \VCSP{'s} constraint DAG then for each $j \in [n]$ with $\downarrow j = \{i \; |\; i \prec j\}$, $\uparrow j = \{k \; | \; j \preceq k \}$ and any partial assignment $y \in \{0,1\}^{\downarrow j}$ the face $\{0,1\}^{\uparrow j} y$ is combed along $j$. 
\label{prop:orientedVCSPcombed}
\end{proposition}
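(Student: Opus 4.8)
The plan is to fix $j \in [n]$ and prove directly that the face $\{0,1\}^{\uparrow j} y$ is combed along $j$, i.e.\ that there is a single preferred value $x^*_j(y) \in \{0,1\}$ so that flipping coordinate $j$ toward $x^*_j(y)$ strictly increases $f$ for every $x$ in the face. The natural tool is \cref{alg:sign-independence} and the notion of effective unary: I want to show that $x_j$ is conditionally sign-independent given that the coordinates in $\downarrow j$ are frozen to $y$, i.e.\ $\lang{ConditionallySignIndependent}(\mathcal{C}, j, \downarrow j, y) = \textsc{True}$. Once that holds, the sign of the effective unary $\hat{c}_j(x, \downarrow j)$ is constant over all $x$ in the face and has magnitude exceeding the total magnitude of the constraints that could flip it, so the preferred direction $x^*_j(y) = \mathbb{1}[\hat{c}_j(y, \downarrow j) > 0]$ is well-defined and works uniformly — which is exactly the combed condition of \cref{def:combed}.

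First I would unwind what can go wrong: a neighbour $k$ of $j$ with $k \notin \downarrow j$ is, by definition of $\downarrow j$ as the down-set, not strictly below $j$ in $\prec$; since $\mathcal{C}$ is oriented there is at most one arc between $j$ and $k$, and because $k \not\prec j$ that arc cannot point into $j$ in a way that would make $j$ sign-depend on $k$. More carefully, if $j$ sign-depended on some free neighbour $k$ in some background, then $\arc[l]{j}{k}$ or $\arc[b]{j}{k}$ would be in $A(\mathcal{C})$ — i.e.\ $\arc[b]{k}{j}$ or $\arc[r]{k}{j}$ by the symmetry of the bidirected notation — which by the definition of the transitive closure would put $k \prec j$, contradicting $k \notin \downarrow j$. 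So no free neighbour of $j$ can flip $j$'s preference, and the effective unary $\hat{c}_j(x, \downarrow j) = c_j + \sum_{i \in N(j) \setminus \downarrow j} x_i c_{ji}$ never changes sign as $x$ ranges over the face; moreover it is never zero (if it were zero for some $x$ in the face, one could flip a single contributing neighbour to make $x_j$'s preference reverse, again forcing sign-dependence). Combined: $\hat{c}_j(x, \downarrow j)$ has one fixed nonzero sign on the face, so setting $x^*_j(y) = 1$ iff this sign is positive makes $f(x[j \mapsto x^*_j(y)]) - f(x[j \mapsto \overline{x^*_j(y)}]) = \pm\hat{c}_j(x,\downarrow j)$ strictly positive for every $x \in \{0,1\}^{\uparrow j} y$, which is the claim.

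The main obstacle is making the bridge between ``$j$ sign-depends on a free neighbour $k$'' and ``the arc between $k$ and $j$ forces $k \prec j$'' fully rigorous — i.e.\ that the arc directions computed by \cref{def:VCSP_arcs} really do detect every background in which sign-dependence occurs, and that the transitive closure $\prec$ genuinely contains every such arc. This is essentially the content of the remark after \cref{def:VCSP_arcs} that ``the overloading of the sign-dependence and reciprocal sign epistasis symbols is appropriate,'' so I would invoke that correspondence explicitly, together with \cref{def:VCSP_arcs} and the fact that $\prec$ is defined as the transitive closure of the arc DAG, to close the gap. A secondary subtlety is the non-vanishing of $\hat{c}_j$ on the face: I would handle it by the same flip-one-neighbour argument used above, noting that a single free neighbour $k$ with $c_{jk} \neq 0$ and $\hat{c}_j = 0$ would witness $\arc[l]{j}{k} \in A(\mathcal{C})$. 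Everything else — reading off $x^*_j(y)$ from the sign and checking the strict inequality $f(x[j\mapsto x^*_j(y)]) > f(x[j \mapsto \overline{x^*_j(y)}])$ — is a one-line computation from $f(x) = \sum_{c_S} c_S \prod_{l \in S} x_l$.
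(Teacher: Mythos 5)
Your proposal is correct and follows essentially the same route as the paper's (two-line) proof: orientedness plus the transitive-closure definition of $\prec$ force every arc between $j$ and a variable outside $\downarrow j$ to point away from $j$, so $j$ cannot sign-depend on any free coordinate and its preferred value is determined by $y$ alone. You simply spell out in more detail (via the effective unary $\hat{c}_j$ and the tie/zero case) what the paper leaves implicit.
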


\begin{proof}
    Any arc between $j$ and $k$ for $k\in\uparrow j$ is of the form $\se{j}{k}$.
    So the variable $x_j$ does not sign depend on any $x_k$ with $k\in\uparrow j$, and $f(x[j \mapsto x^*_j(y)]) > f(x[j \mapsto \overline{x^*_j(y)}])$ for any $x\in\{0,1\}^{\uparrow j}$.
\end{proof}

\noindent Thus, the preferred assignment of a variable $x_j$ depends only on the assignments of variables $x_i$ with $i$ lower than $j$ in the partial order (i.e., $i \prec j$).
The preferred assignment of $x_j$ is conditionally independent of all other variables conditional on variables $x_i$ with $i \prec j$ being fixed to any value. 
This is closely related to recursively combed AUSOs~\cite{GGL20,randomFacetBound,M06}:

\begin{definition}[\textcite{GGL20}]
A semismooth fitness landscape $f$ on $\{0,1\}^n$ is \emph{recursively combed} if:
(a) $f$ is a smooth landscape, or (b) $f$ is combed along some dimension $i \in [n]$ and both faces $\{0,1\}^{[n]\setminus \{i\}} 0$ and $\{0,1\}^{[n]\setminus \{i\}} 1$ are recursively combed.
\label{def:rcAUSO}
\end{definition}

\noindent The only distinction between the above definition versus landscapes implemented by oriented \VCSP s is that the combing of the faces $\{0,1\}^{[n]\setminus \{i\}} 0$ and $\{0,1\}^{[n]\setminus \{i\}} 1$ might be along different dimensions in a recursively combed AUSO but have to be combed along a common set of dimensions for oriented \VCSP s.

For the efficiency of local search, however, the above distinction 
is without a difference.
We can abstract beyond recursively combed AUSOs by loosening Definition~\ref{def:rcAUSO} by replacing the 
recursive condition (b) that both faces $\{0,1\}^{[n]\setminus \{i\}} 0$ and $\{0,1\}^{[n]\setminus \{i\}} 1$ are recursed on 
by the condition that only $\{0,1\}^{[n]\setminus \{i\}} x^*_i$ is recursed on -- 
where $x^*_i$ is the preferred assignment of the combed dimension $i$.
We saw this stated iteratively as \cref{def:combed}.
\section{Recognizing various subclasses of \VCSP{s} and fitness landscapes}
\label{sec:recognitionAlgs}

Subclasses of \VCSP s are more useful when there is a recognition algorithm that can answer for any specific \VCSP-instance if that instance is in the (sub)class or not.
In this section, we provide recognition algorithms for \VCSP s that implement smooth landscapes, directed and oriented \VCSP s, and \VCSP s that implement $\prec$-smooth landscapes.
The first and last of these algorithms run in polynomial time.
But the algorithm for recognizing directed and oriented \VCSP s, or -- more generally -- the algorithm for determing the set of arcs of a \VCSP-instance $\mathcal{C}$ has parameterized complexity of $O(2^{\Delta(\mathcal{C})}|\mathcal{C}|)$ where $\Delta(\mathcal{C})$ is the max degree of the constraint graph of $\mathcal{C}$.
We argue that this is the best possible through a pair of reductions from \lang{SubsetSum}.

All the algorithms in this section will rely on computing the effective unary $\hat{c}_i(x,R)$ from \cref{def:effective-unary}.
If all our constraint weights are in $[-2^l,2^l]$ then this can be done in time $O(|R||\mathcal{C}|l)$.
But we will not worry about these implementation specific details and treat the computation of $\hat{c}_i(x,R)$ as one time-step.
Similarly with other implementation-specific operations like seeing if $c_{ij} \in \mathcal{C}$ or checking if $i \in N_\mathcal{C}(j)$ and basic arithmetic operations on weights and indexes -- we will treat all of these as one timestep.

\subsection{Recognizing \VCSP s that implement smooth landscapes}
\label{sec:smooth_rec}

For recognizing smooth landscapes we will describe a slightly more general algorithm.
The algorithm \lang{ConditionallySignIndependent}$(\mathcal{C},i,S,y)$ will check if $i$ is conditionally sign independent once the variables with indexes in $S$ have their values set to $y[S]$.
This is equivalent to checking if $i$ is sign-independent in the face $\{0,1\}^{[n] \setminus S}y$ or checking if the face $\{0,1\}^{[n] \setminus S}y$ is combed along dimension $i$ (recall \cref{def:combed}).

For a variable $x_i$ to be sign-independent, its unary must dominate over the binary constraints that it participates in -- its unary must have big magnitude.
For example, suppose that $c_i > 0$ then $x_i$ will prefer to be $1$ if all of its neighbours are $0$.
For $x_i$ to be sign independent, this preference must not change as any $x_j$ with $j \in N(i)$ change to $1$s.
Since $x_j$ with $c_{ij} < 0$ are the only ones that can lower $x_i$'s preference for $1$, this becomes equivalent to checking if
$c_i > \sum_{j \in N(i) \text{ s.t. } c_{ij} < 0}|c_{ij}|$.
There exists a combination of $x_j$s that flip $x_i$'s preference if and only if this inequality is violated.
To move from this particular example to the general case, we only need to replace $c_i$ by $\hat{c}_i = \hat{c}_i(y,N(i)\backslash S)$ and consider $c_{ij}$s with $\text{sgn}(\hat{c}_i) \neq \text{sgn}(c_{ij})$ and $j \in N(i) \setminus S$.
We encode this is \cref{alg:sign-independence}.

\printAlgSI*

This algorithm runs in linear time or even just $\Delta(\mathcal{C})$ time-steps.
To check if a landscape is smooth, run \lang{ConditionallySignIndependent}$(\mathcal{C},i,\emptyset,0^n)$ on every $i$ and return the conjunction of their outputs (in a total of $\Delta(\mathcal{C})n$ time-steps).

\subsection{Recognizing directed and oriented \VCSP s}
\label{sec:complexity_arcs}

At the highest level, the overarching idea for checking if a \VCSP\ is directed or oriented can feel like it is not too fundamentally different from the check for smoothness.
Instead of checking each variable, however, we need to check each edge to determine its `direction'.
This is straightforward given that the definition of arcs (\cref{def:VCSP_arcs}) is already expressed in terms of a comparison of $|\hat{c}_i(x,\{i,j\})|$ and $|c_{ij}|$.
The issue arises in that instead of finding a big combination of $|c_{ij}|$s -- as in the smoothness check -- here, we need to find an $x$ that makes $|\hat{c}_i(x,\{i,j\})|$ small.
And this flip of direction makes all the difference for the computational complexity.

For example, if we want to check the potential direction of an edge $\{i,j\} \in E(\mathcal{C})$ with $c_{ij} > 0$ then we need to find an argument $x$ such that $|\hat{c}_j(x,\{i,j\})|$ is minimized while keeping $\hat{c}_j(x,\{i,j\}) \leq 0$.
Now, if we want to check if the edge should be directed as $\arc[r]{i}{j}$ then we just need to check if $c_{ij} > |\hat{c}_j(x,\{i,j\})|$.
Unfortunately, this minimization over $x$ cannot be done quickly in the general case since it is hiding \lang{SubsetSum} within it (as we will see in \cref{prop:rse_NPC} and \cref{prop:doublese_NPC}).
Some speed up is possible, however, since we do not need to vary candidate assignments over all of $\{0,1\}^n$, but can just focus on $x \in \{0,1\}^{N(i) \setminus \{j\}}0^{n + 1 - |N(i)|}$ 
(i.e., over the face spanned by the neighbours of $i$) 
since $\hat{c}_j(x,\{i,j\}) = \hat{c}_j(y,\{i,j\})$ if $x[N(i) \setminus \{j\}] = y[N(i) \setminus \{j\}]$.
It is this observation that will enable us to get a complexity parameterized by the max degree $\Delta(\mathcal{C}) \geq |N(i)|$.
With this intuition in mind, we can give the algorithm that implements the definition of arc directions (\cref{def:VCSP_arcs}) as \cref{alg:VCSP_arcs_min}.

\begin{algorithm}[h]
\caption{\lang{ArcDirection}($\mathcal{C},i,j$):}
\label{alg:VCSP_arcs_min}
\begin{algorithmic}[1]
\Require \VCSP-instance $\mathcal{C}$ and variable indexes $i$ and $j$.
\State Set $\hat{c} \gets \min_{x \in \{0,1\}^{N(i) \cup N(j) \setminus \{i,j\}}} \max_{k \in \{i,j\}} |\hat{c}_k(x,\{i,j\})|$ s.t. $\text{sgn}(\hat{c}_i(x,\{i,j\}) \neq \text{sgn}(c_{ij})$ and $\text{sgn}(\hat{c}_j(x,\{i,j\}) \neq \text{sgn}(c_{ij})$
\State \algorithmicif\ $\hat{c} < |c_{ij}|$ \algorithmicthen\ \textbf{return} $\{\arc[b]{i}{j}\}$
\State $A \gets \{\}$
\State Set $\hat{c}_i \gets \min_{y \in \{0,1\}^{N(i) \setminus \{j\}}} |\hat{c}_i(y,\{i,j\})|$ s.t. $\text{sgn}(\hat{c}_i(y,\{i,j\}) \neq \text{sgn}(c_{ij})$
\State Set $\hat{c}_j \gets \min_{z \in \{0,1\}^{N(j) \setminus \{i\}}} |\hat{c}_j(z,\{i,j\})|$ s.t. $\text{sgn}(\hat{c}_j(z,\{i,j\}) \neq \text{sgn}(c_{ij})$
\State \algorithmicif\ $\hat{c}_i < |c_{ij}|$ \algorithmicthen\ $A \gets A \cup \{\arc[r]{i}{j}\}$
\State \algorithmicif\ $\hat{c}_j < |c_{ij}|$ \algorithmicthen\ $A \gets A \cup \{\arc[l]{i}{j}\}$
\State \textbf{return} $A$
\end{algorithmic}
\end{algorithm}

Given the brute force minimization for $\hat{c}$, $\hat{c}_i$, and $\hat{c}_j$ in \cref{alg:VCSP_arcs_min}, the resulting runtime is $2^{|N(i)| - 1} + 2^{|N(j)| - 1} + 2^{|N(i)| + |N(j)| - 2}$ time-steps or $O(2^{2(\Delta(\mathcal{C}) - 1)})$ time-steps in the worst-case.
To find the whole set of arcs -- and thus be able to decide if $\mathcal{C}$ is directed or oriented or not -- we need to repeat \lang{ArcDirection}($\mathcal{C},i,j$) for every $\{i,j\} \in E(\mathcal{C})$, requiring $O(2^{\Delta(\mathcal{C})}|\mathcal{C}|)$ time-steps.
Unfortunately, finding the direction of the arcs of a \VCSP\ cannot be done significantly faster than \cref{alg:VCSP_arcs} because determining the direction of arcs is \NP-hard by reduction from \lang{SubsetSum}.
\begin{figure}
    \centering
    \begin{subfigure}[b]{0.49\textwidth}
    \begin{tikzpicture}
        \node[draw, circle, minimum size=4pt, label=$3$] (center) at (0,0) {$c$};
        
        \foreach \x [count = \xi] in {1,2}{
            \node[draw, circle, minimum size=4pt, label=left:$1$] (outer\xi) at (-2.5,\xi-2) {$\x$};
            
            \draw (center) -- (outer\xi) node [near end,above,sloped, fill=white] (w\x){$4a_\x$};
        
            }
        \node[draw, circle, minimum size=4pt, label=left:$1$] (outern) at (-2.5, 2) {$n$};
        
        \draw (center) -- (outern) node [near end,above,sloped, fill=white] (wn){$4a_n$};
        
        \path (outer2) -- (outern) node [midway, fill=white] {$\vdots$};
        
        \node[draw, circle, minimum size=4pt, label=$1$] (outerl) at (2,0) {$l$};
        \node[draw, circle, minimum size=4pt, label=below:$-1$] (outer0) at (0,-2.5) {$0$};
        \draw (center) -- (outer0) node [midway, above, sloped, fill=white] {$-4T-2$};
        \draw (center) -- (outerl) node [midway, above, sloped, fill=white] {$-2$};
    \end{tikzpicture}
    \subcaption{\VCSP-instance for \cref{prop:rse_NPC}}
    \label{fig:reductionDirected}
    \end{subfigure}
    \hfill
    \begin{subfigure}[b]{0.49\textwidth}
        \begin{tikzpicture}
        \node[draw, circle, minimum size=4pt, label=left:$3$] (center) at (0,0) {$c$};
        
        \node[draw, circle, minimum size=4pt, label=right:$1$] (outerl) at (3,0) {$l$};
        
        \node[draw, circle, minimum size=4pt, label=above:$M+1$] (outer1) at (0, 2) {$1$};
        \draw (center) -- (outer1) node [midway,above,sloped, fill=white] (wn){$4a_1$};
        \draw (outerl) -- (outer1) node [near end,above,sloped, fill=white] (wn){$-4a_1$};

        \node[draw, circle, minimum size=4pt, label=above:$M+1$] (outern) at (3, 2) {$n$};
        \draw (center) -- (outern) node [near end,above,sloped, fill=white] (wn){$4a_n$};
        \draw (outerl) -- (outern) node [midway,above,sloped, fill=white] (wn){$-4a_n$};
        
        \path (outer1) -- (outern) node [midway, fill=white] {$\cdots$};

        \node[draw, circle, minimum size=4pt, label=below:$ M +1$] (outer0) at (1.5,-2.5) {$0$};
        \draw (outer0) -- (center) node [midway, below, sloped, fill=white] {$-4T-2$};
        \draw (outerl) -- (outer0) node [midway, above, sloped, fill=white] {$4T+2$};
        \draw (center) -- (outerl) node [midway, above, sloped, fill=white] {$-2$};
    \end{tikzpicture}
    \subcaption{\VCSP-instance for \cref{prop:doublese_NPC}}
    \label{fig:reductionOriented}
    \end{subfigure}
    \caption{The \VCSP\ instance $\mathcal{C}$ in the reductions from positive \lang{SubsetSum} of \cref{prop:rse_NPC} and \cref{prop:doublese_NPC}. 
    The weight of the unary constraints is given near the nodes, whilst the weight of the binary constraints appears on the edges.
    In both (a) and (b) the variables $0,1,\ldots, n$ are sign-independent.
    In (a) the \lang{SubsetSum} instance is encoded in the binary constraints of variables $0,1,\ldots n,c$, and $\ell$ sign-depends on $c$. Only the edge $\{c,l\}$ can be bidirected, and whether or not it is depends on the answer to the \lang{SubsetSum} instance.
    In (b) the encoding of \lang{SubsetSum} is also on the binary constraints, but now $l$ sign-depends on $c$ only when $x_0=x_1=\ldots=x_n=0$, and $c$ sign depends on $l$ if and only if \lang{SubsetSum} is a yes-instance.
    }
\end{figure}
\begin{proposition}
Given a \VCSP\ instance $\mathcal{C}$ and a pair of indexes $i$ and $j$ the question ``is $\rse{i}{j} \in A(\mathcal{C})$?" is \NP-complete.
\label{prop:rse_NPC}
\end{proposition}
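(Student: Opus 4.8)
The plan is to establish the two directions of NP-completeness separately. Membership in NP is immediate: given a purported certificate, namely a partial assignment $x \in \{0,1\}^{N(i) \cup N(j) \setminus \{i,j\}}$, we can in polynomial time compute $\hat{c}_i(x,\{i,j\})$ and $\hat{c}_j(x,\{i,j\})$ and verify the three inequalities of case (1) of \cref{def:VCSP_arcs} that witness $\arc[b]{i}{j} \in A(\mathcal{C})$ (recall $\rse{i}{j}$ is our overloaded notation for $\arc[b]{i}{j}$). So the work is all in the hardness reduction.

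For NP-hardness I would reduce from positive \lang{SubsetSum}: given positive integers $a_1,\dots,a_n$ and a target $T$, decide whether some subset sums to exactly $T$. The instance to build is the one drawn in \cref{fig:reductionDirected}: a ``center'' variable $c$ with unary $c_c = 3$, ``item'' variables $1,\dots,n$ each with unary $1$ and binary constraint $c_{\{c,k\}} = 4a_k$, a variable $0$ with unary $-1$ and binary $c_{\{c,0\}} = -4T-2$, and a variable $\ell$ with unary $1$ and binary $c_{\{c,\ell\}} = -2$. The first things to check are the easy structural facts asserted in the figure caption: each of $0,1,\dots,n$ has its unary dominating (in fact strictly, since each binary incident to it is a multiple of $4$ while... wait — need $|c_k|=1 > $ sum of opposing binaries, which fails). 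Let me restate: the intended reading is that variables $0,1,\dots,n$ are \emph{sign-independent as targets} — i.e. for the purpose of \cref{def:VCSP_arcs} applied to the edge $\{c,\ell\}$, we only ever need to vary $c$ and the shared neighbourhood; variables $1,\dots,n,0$ are the free "knobs" sitting in $N(c)$, and the point is that the only edge that could possibly be bidirected is $\{c,\ell\}$ — one checks that every edge $\{c,k\}$ for $k \in \{0,1,\dots,n\}$ is directed (only one of $\arc[r]{}{}$, $\arc[l]{}{}$ appears) because the unary of $k$ is too small to ever exceed $|c_{\{c,k\}}|$, so $\ell$-side conditions can never both hold. I would state and verify this as a short lemma.

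The heart of the reduction is then the equivalence: $\arc[b]{c}{\ell} \in A(\mathcal{C})$ if and only if the \lang{SubsetSum} instance is a yes-instance. Unwinding \cref{def:VCSP_arcs}(1) for the edge $\{c,\ell\}$ with $c_{\{c,\ell\}} = -2 < 0$: we need an assignment $x$ with $|c_{\{c,\ell\}}| = 2 > |\hat{c}_c(x,\{c,\ell\})|$ and $2 > |\hat{c}_\ell(x,\{c,\ell\})|$, both effective unaries positive. The $\ell$-side is automatic: $\hat{c}_\ell(x,\{c,\ell\}) = c_\ell = 1$ since $\ell$'s only neighbour is $c$, so $0 < 1 < 2$ always holds. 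The $c$-side effective unary, with the item and zero variables set, is $\hat{c}_c(x,\{c,\ell\}) = c_c + \sum_{k \in S} 4a_k + x_0(-4T-2) = 3 + 4\sum_{k \in S} a_k - x_0(4T+2)$ where $S = \{k : x_k = 1\}$. We need this to lie strictly in $(0,2)$, i.e. to equal $1$ (it is odd when $x_0=1$... in general it is $\equiv 3 \pmod 4$ when $x_0=0$ and $\equiv 1 \pmod 4$ when $x_0=1$, so the only value in $\{1\}$ forces $x_0 = 1$), giving $3 + 4\sum_{k\in S} a_k - 4T - 2 = 1$, i.e. $\sum_{k \in S} a_k = T$. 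Conversely any subset summing to $T$ yields such an $x$. This completes the reduction; I would also note the construction is polynomial-size (weights are $O(T + \max a_k)$, hence polynomial in the input of positive \lang{SubsetSum}, which is weakly NP-hard — this suffices since we only need NP-hardness, not strong NP-hardness).

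The main obstacle I anticipate is the bookkeeping in the structural lemma — carefully arguing that for \emph{every} edge other than $\{c,\ell\}$ the bidirectedness condition (1) of \cref{def:VCSP_arcs} fails, so that "$\rse{i}{j} \in A(\mathcal{C})$ for some $i,j$" is genuinely controlled by the \lang{SubsetSum} answer on the single edge $\{c,\ell\}$. This requires checking that the large item-weights $4a_k$ can never be "cancelled" down to magnitude $< |c_{\{c,k\}}|$ by the tiny unaries and the single other relevant weight; it is routine but must be done for each edge type. A secondary subtlety is ensuring the sign conditions ($\text{sgn}(c_{\{c,\ell\}}) \neq \text{sgn}(\hat{c}_c)$ and $\neq \text{sgn}(\hat{c}_\ell)$) are met by the candidate assignment — here they are, since $c_{\{c,\ell\}} = -2 < 0$ while both effective unaries are forced positive — but this should be stated explicitly rather than glossed.
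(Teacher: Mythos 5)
Your proposal is correct and follows essentially the same route as the paper: NP membership via the background assignment as certificate, and NP-hardness via the same star-shaped reduction from positive \lang{SubsetSum} with the equivalence $\arc[b]{c}{\ell}\in A(\mathcal{C}) \iff \exists S\subseteq[n]$ with $\sum_{k\in S}a_k=T$. Your write-up actually supplies more of the arithmetic (the mod-$4$ argument forcing $\hat{c}_c=1$ and $x_0=1$) than the paper, which merely asserts the observation; the structural lemma about other edges that you worry about is not needed for this proposition (the pair $\{c,\ell\}$ is part of the input), only for the later corollaries on recognizing directed/oriented instances.
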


\begin{proof}
It is clear that this problem is in \NP{} as a background assignment $x$ with reciprocal sign epistasis for $i$ and $j$ acts as a certificate. 
We will show \NP-completeness by reduction from positive \lang{SubsetSum}.
Consider an instance $\mathcal{S}$ of positive \lang{SubsetSum} consisting of $n$ positive integers $a_1,\ldots,a_n$ and a positive integer target-sum $T$.
Based on this, we will create a \VCSP-instance $\mathcal{C}$ on $n + 3$ variables with indexes given by $\{c,l,0,1,\ldots,n\}$.
Its constraint graph will be a star with $c$ at the center.
The unary constraints will be given by $w_c = 3$ and $w_l = w_1 = \ldots = w_n = 1$ and $w_0 = -1$.
Binary constraints will be given by $w_{cl} = -2$, $w_{c0} = -4T - 2$, and $w_{ci} = 4a_i$.
See \cref{fig:reductionDirected} for illustration.
Looking at any partial assignment to $x_0,x_1,\ldots x_n$ we can see that $\se{c}{l}$.
Our proposition follows from the observation that:
$\arc[b]{c}{l} \in A(\mathcal{C}) \iff \exists S \subseteq [n] \text{ s.t } \sum_{i \in S} a_i = T$
\end{proof}

This question remains difficult even if we have a directed \VCSP-instance and want to know if a pair of variables have arcs in both directions.

\begin{proposition}
Given a \VCSP\ instance $\mathcal{C}$ and a pair of indexes $i$ and $j$ the question ``are both $\se{i}{j}$ and $\se{j}{i}$ in $A(\mathcal{C})$?" is \NP-complete.
\label{prop:doublese_NPC}
\end{proposition}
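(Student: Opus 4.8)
My plan: establish membership in \NP{} by the routine certificate --- a pair of background assignments, one witnessing $\se{i}{j}$ and one witnessing $\se{j}{i}$ in the sense of \cref{def:VCSP_arcs} --- and then prove \NP-hardness by a reduction from positive \lang{SubsetSum} built on the star gadget of \cref{fig:reductionOriented}. Given positive integers $a_1,\dots,a_n$ and a positive target $T$, I would build the \VCSP-instance $\mathcal{C}$ on variables $\{c,l,0,1,\dots,n\}$ with unaries $c_c=3$, $c_l=1$, and $c_0=c_1=\dots=c_n=M+1$ for $M:=4\bigl(T+\sum_i a_i\bigr)+2$, and with binaries $c_{cl}=-2$, $c_{c0}=-4T-2$, $c_{l0}=4T+2$, and, for each $i\in[n]$, the pair $c_{ci}=4a_i$, $c_{li}=-4a_i$. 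This instance has size polynomial in the input, and the question asked of it (taking the proposition's indices to be $i=c$ and $j=l$) is whether both $\se{c}{l}$ and $\se{l}{c}$ lie in $A(\mathcal{C})$.

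Correctness rests on three claims, all understood modulo the \emph{never-bidirected} point settled below. First, $M$ is chosen large enough that \lang{ConditionallySignIndependent}$(\mathcal{C},k,\emptyset,0^n)$ returns \textsc{True} for each $k\in\{0,1,\dots,n\}$ --- one only needs $|c_k|=M+1$ to beat the magnitude of the single oppositely-signed binary at $k$, which is $4a_k$ for $k\in[n]$ and $4T+2$ for $k=0$ --- so none of those variables lies on an arc and the only edge whose direction is in doubt is $\{c,l\}$, which is precisely where the \lang{SubsetSum} instance is encoded. Second, the defining condition of $\arc[r]{c}{l}$ (clause~(2a) of \cref{def:VCSP_arcs}) is met at the all-zero background, since $\hat{c}_l(0^{n+1},\{c,l\})=c_l=1$, so $|c_{cl}|=2>1$ with $\text{sgn}(c_{cl})\neq\text{sgn}(1)$; hence $\se{c}{l}\in A(\mathcal{C})$ unconditionally. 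Third, writing $S=\{i\in[n]:x_i=1\}$ and noting that $\hat{c}_c(x,\{c,l\})=3-(4T+2)x_0+4\sum_{i\in S}a_i$ is always odd, the defining condition of $\arc[l]{c}{l}=\se{l}{c}$ (clause~(2b)) holds iff some background makes this quantity equal to $1$; a case split shows this is impossible for $x_0=0$ and, for $x_0=1$, amounts to $\sum_{i\in S}a_i=T$. Thus $\se{l}{c}\in A(\mathcal{C})$ iff the \lang{SubsetSum} instance is a yes-instance, and so ``both $\se{c}{l}$ and $\se{l}{c}$ in $A(\mathcal{C})$'' is equivalent to the same.

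The step I expect to be the main obstacle is verifying that $\{c,l\}$ is \emph{never} bidirected, since a priori this quantifies over all $2^{n+1}$ backgrounds. The lever is the ``$+2$'' offsets in $c_{c0}=-4T-2$ and $c_{l0}=4T+2$: a background $x$ could certify $\arc[b]{c}{l}$ only if the two (always odd) values $\hat{c}_c(x,\{c,l\})$ and $\hat{c}_l(x,\{c,l\})$ both equal $1$ there, but $\hat{c}_c(x,\{c,l\})=1$ forces $x_0=1$ and $\sum_{i\in S}a_i=T$, and then $\hat{c}_l(x,\{c,l\})=1+(4T+2)-4T=3\neq1$. So no background is bidirected --- in particular the all-zero witness for $\se{c}{l}$ does not itself certify $\arc[b]{c}{l}$, since there $\hat{c}_c=3$ --- clause~(2) of \cref{def:VCSP_arcs} governs the edge $\{c,l\}$ throughout, and the reduction goes through.
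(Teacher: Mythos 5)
Your proposal is correct and follows essentially the same route as the paper: the identical split-graph gadget with unaries $3,1,M{+}1$ and binaries $c_{cl}=-2$, $\pm(4T{+}2)$ on variable $0$, and $\pm 4a_i$ on variable $i$, with $\se{c}{l}$ witnessed at the all-zero background and $\se{l}{c}$ equivalent to hitting the target sum. You merely supply the arithmetic (the parity/oddness argument and the never-bidirected check) that the paper leaves as an ``observation,'' and you pick a larger $M$ than the paper's $M=T+\sum_i a_i$ --- a safe choice, since the paper's value is only needed (and is in fact borderline) for the sign-independence claims used in \cref{cor:orientedTesting_coNPC}, not for the proposition itself.
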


The proof is similar to \cref{prop:rse_NPC}, with slight changes to the resulting \VCSP-instance.

\begin{proof}
For membership in \NP{} a pair of assignments $x,y$, with $\se{i}{j}$ and $\se{j}{i}$ respectively, act as the certificate.
As with the proof of \cref{prop:rse_NPC}, we will show \NP-hardness by reduction from positive \lang{SubsetSum}.
Consider an instance $\mathcal{S}$ of positive \lang{SubsetSum} consisting of $n$ positive integers $a_1,\ldots,a_n$ and a positive integer target-sum $T$.
Let $M = T + \sum_{i = 1}^n a_i$.
Based on this, we will create a \VCSP-instance $\mathcal{C}'$ on $n + 3$ variables with indexes given by $\{c,l,0,1,\ldots,n\}$.
Its constraint graph will be a complete split graph with a clique $\{c,l\}$ and independent set $\{0,1,\ldots,n\}$.
The unary constraints will be given by $w_c = 3$, $w_l = 1$ and $w_0 = w_1 = \ldots = w_n = M + 1$.
Binary constraints will be given by $w_{cl} = -2$, $-w_{c0} =  w_{l0} = 4T + 2$, and $-w_{ci} = w_{li} = -4a_i$.
See \cref{fig:reductionOriented} for illustration.
If we look at the partial assignment $x_0 = x_1 = \ldots = x_n = 0$, we can see that $\arc[r]{c}{l}\in A(\mathcal{C'})$.
Our proposition follows from the observation that:
$\arc[l]{c}{l} \in A(\mathcal{C'}) \iff \exists S \subseteq [n] \text{ s.t } \sum_{i \in S} a_i = T$
\end{proof}

Given that subexponential time algorithms for subset sum are not known,
it makes sense to simplify \cref{alg:VCSP_arcs_min} by replacing the minimizations by brute-force search over the neighbours of $i$ and $j$.
We use this variant the main text (\cref{sec:recognize_doVCSP}) and restate it here as \cref{alg:VCSP_arcs}:

\printAlgArcs*

The individual edge hardness reuslts of \cref{prop:rse_NPC,prop:doublese_NPC} also give us corollaries about the difficulty of checking if a \VCSP-instance is directed or oriented.
The complexity of checking either oriented or directed follows from \cref{prop:rse_NPC}:

\begin{corollary}
Given a \VCSP\ instance $\mathcal{C}$ the questions of ``is $\mathcal{C}$ directed?" or ``is $\mathcal{C}$ oriented?" are both \coNP-complete.
\label{cor:semismoothTesting_coNPC}
\end{corollary}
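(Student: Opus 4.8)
The plan is to establish membership in \coNP\ first, and then \coNP-hardness by reusing the star gadget from the proof of \cref{prop:rse_NPC}.

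\textbf{Membership.} I would show that the \emph{complement} of each question is in \NP. For ``is $\mathcal{C}$ directed?'', the complement is ``$\mathcal{C}$ has a bidirected arc'', for which a short certificate is a pair of indices $i,j$ together with a background assignment $x$ satisfying the inequalities of case~(1) of \cref{def:VCSP_arcs}; this is checkable in polynomial time (it is exactly the \NP-certificate used in \cref{prop:rse_NPC}). For ``is $\mathcal{C}$ oriented?'', the complement is the disjunction ``$\mathcal{C}$ is not directed'' \emph{or} ``some pair $i,j$ has both $\se{i}{j}$ and $\se{j}{i}$ in $A(\mathcal{C})$''; the first disjunct is in \NP\ as above, and for the second the certificate is the pair $i,j$ together with two assignments $y,z$ witnessing $\se{i}{j}$ and $\se{j}{i}$ respectively (the certificate from \cref{prop:doublese_NPC}). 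A disjunction of \NP\ predicates is still in \NP, so both questions lie in \coNP.

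\textbf{Hardness.} I would reduce from the complement of positive \lang{SubsetSum}, which is \coNP-complete. Given a positive \lang{SubsetSum} instance $\mathcal{S}$, construct the star-shaped \VCSP-instance $\mathcal{C}$ exactly as in the proof of \cref{prop:rse_NPC}, where it is shown that $\arc[b]{c}{l} \in A(\mathcal{C})$ iff $\mathcal{S}$ is a yes-instance. The only extra observation needed is that $\{c,l\}$ is the \emph{only} edge that could be bidirected: every other edge is $\{c,i\}$ with $i \in \{0,1,\dots,n\}$ sign-independent, so there is no arc of the form $\se{c}{i}$ and hence no bidirected arc on $\{c,i\}$. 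Therefore $\mathcal{C}$ is directed iff $\mathcal{S}$ is a no-instance, so deciding ``is $\mathcal{C}$ directed?'' decides the complement of \lang{SubsetSum}, giving \coNP-hardness. For the oriented question, note that the constraint graph of $\mathcal{C}$ is a star, hence triangle-free, so by \cref{lem:triangle-free-means-oriented} $\mathcal{C}$ is directed iff it is oriented; the same reduction then shows ``is $\mathcal{C}$ oriented?'' is \coNP-hard as well.

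Combining the two parts gives \coNP-completeness of both questions. The only mildly delicate point is confirming that no edge other than $\{c,l\}$ can be bidirected in the gadget, which follows immediately from the sign-independence of the leaves $0,1,\dots,n$; everything else is routine bookkeeping built on \cref{prop:rse_NPC}, \cref{prop:doublese_NPC}, and \cref{lem:triangle-free-means-oriented}.
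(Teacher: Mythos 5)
Your proposal is correct and follows essentially the same route as the paper: coNP membership via the certificates from \cref{prop:rse_NPC} and \cref{prop:doublese_NPC}, and hardness via the same star gadget, with the key observation that the sign-independence of the leaves forces $\{c,l\}$ to be the only edge that can be bidirected or doubly directed. The only cosmetic difference is that you invoke \cref{lem:triangle-free-means-oriented} to equate directed and oriented on the star, where the paper argues this directly; both are fine.
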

\begin{proof}
    In the construction of \cref{prop:rse_NPC}, the variables $0,1,\ldots, n$ are all sign independent as the unary and binary constraints have the same sign. 
    Therefore the only edge that has the potential to be bidirected is $\{c,\ell\}$, for which we know $\se{c}{\ell}$ and \lang{SubsetSum}($\mathcal{S}$) establishes the other direction:
    If \lang{SubsetSum}($\mathcal{S}$) is false, then $\mathcal{C}$ is oriented (and therefore directed). 
    If \lang{SubsetSum}($\mathcal{S}$) is true, then $\mathcal{C}$ is not even directed as it contains $\rse{c}{\ell}$.
    Therefore, the same construction (and the same certificate) shows that the questions "is $\mathcal{C}$ not directed?" and "is $\mathcal{C}$ not oriented?" are both \NP-complete.
    This establishes the \coNP-completeness of the complement languages "is $\mathcal{C}$ directed?" and "is $\mathcal{C}$ oriented?".
\end{proof}

Since the constructed \VCSP-instance $\mathcal{C}'$ in the proof of \cref{prop:doublese_NPC} is oriented if the \lang{SubsetSum}($\mathcal{S}$) is false and directed (but not oriented) if \lang{SubsetSum}($\mathcal{S}$) is true, the complexity of checking if a directed \VCSP-instance is oriented follows:

\begin{corollary}
    For directed \VCSP-instance $\mathcal{C}$ the question ``is $\mathcal{C}$ oriented?" is \coNP-complete.
    \label{cor:orientedTesting_coNPC}
\end{corollary}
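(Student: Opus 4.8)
The plan is to establish membership in $\coNP$ and $\coNP$-hardness separately, recycling the reduction already built for \cref{prop:doublese_NPC}, in close parallel with the proof of \cref{cor:semismoothTesting_coNPC}. For \textbf{membership}, I would argue that the complementary promise problem --- ``given a directed $\mathcal{C}$, is $\mathcal{C}$ \emph{not} oriented?'' --- is in $\NP$. Under the promise that $\mathcal{C}$ is directed, $\mathcal{C}$ fails to be oriented precisely when some pair of indices $i \neq j$ has both $\arc[r]{i}{j} \in A(\mathcal{C})$ and $\arc[l]{i}{j} \in A(\mathcal{C})$. A polynomial-size certificate is such a pair $i,j$ together with backgrounds $x,y \in \{0,1\}^n$ witnessing $\se[x]{i}{j}$ and $\se[y]{j}{i}$ (i.e., $j$ sign-depends on $i$ in $x$ and $i$ sign-depends on $j$ in $y$); by the appropriateness of the overloading of the sign-dependence symbol and \cref{def:VCSP_arcs}, these backgrounds certify the two opposite arcs. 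Checking each sign-dependence via \cref{def:sign-depends} only requires evaluating the implemented quadratic pseudo-Boolean $f$ at a constant number of assignments, which is polynomial in $|\mathcal{C}|$. Hence ``is $\mathcal{C}$ oriented?'' restricted to directed instances is in $\coNP$.

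For \textbf{$\coNP$-hardness}, I would reduce from the complement of positive \lang{SubsetSum} --- a $\coNP$-complete language, since positive \lang{SubsetSum} is $\NP$-complete. From a \lang{SubsetSum} instance $\mathcal{S}$ with positive integers $a_1,\dots,a_n$ and target $T$, build the \VCSP{}-instance $\mathcal{C}'$ exactly as in the proof of \cref{prop:doublese_NPC}. Two points must be checked. First, $\mathcal{C}'$ is directed for \emph{every} $\mathcal{S}$, so it is a legal input to the promise problem: the only edge that could carry two opposite arcs or a bidirected arc is the clique edge $\{c,\ell\}$ (the edges incident to the independent-set vertices $0,1,\dots,n$ cannot, as in the analysis of \cref{prop:doublese_NPC}), and \cref{def:VCSP_arcs} rules out $\arc[b]{c}{\ell}$ regardless of $\mathcal{S}$. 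Second, the same analysis gives $\arc[r]{c}{\ell} \in A(\mathcal{C}')$ unconditionally, and $\arc[l]{c}{\ell} \in A(\mathcal{C}')$ if and only if there is $S \subseteq [n]$ with $\sum_{i\in S} a_i = T$. Consequently $\mathcal{C}'$ has two arcs between $c$ and $\ell$ exactly when $\mathcal{S}$ is a yes-instance, i.e., $\mathcal{C}'$ is oriented exactly when $\mathcal{S}$ is a no-instance of \lang{SubsetSum}. Since $\mathcal{C}'$ is computable from $\mathcal{S}$ in polynomial time, this is a valid many-one reduction establishing $\coNP$-hardness; together with the membership above, $\coNP$-completeness follows.

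The one non-routine ingredient --- the \textbf{main obstacle} --- is the pair of claims about $\mathcal{C}'$: that no edge other than $\{c,\ell\}$ can ever acquire a second (or a bidirected) arc, and that $\arc[l]{c}{\ell}$ appears precisely on yes-instances of \lang{SubsetSum}. Both are exactly what the construction and arc-direction computation in \cref{prop:doublese_NPC} already deliver, so they may be invoked here directly; the remaining ingredients --- that the complement of an $\NP$-complete language is $\coNP$-complete and that the reduction runs in polynomial time --- are immediate and mirror the argument of \cref{cor:semismoothTesting_coNPC}.
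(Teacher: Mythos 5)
Your proposal is correct and follows essentially the same route as the paper: reuse the $\mathcal{C}'$ construction from \cref{prop:doublese_NPC}, observe that only the edge $\{c,\ell\}$ can carry arcs in both directions, and conclude that $\mathcal{C}'$ is oriented exactly when \lang{SubsetSum}($\mathcal{S}$) is a no-instance. The one detail the paper spells out that you defer to ``the analysis of \cref{prop:doublese_NPC}'' is \emph{why} $\mathcal{C}'$ is always directed: $\se{c}{\ell}$ holds only in the background $x_0=\cdots=x_n=0$, and in that background $\nse{\ell}{c}$, so no single background can witness $\rse{c}{\ell}$.
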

\begin{proof}
    In the construction of \cref{prop:doublese_NPC} all variables with indexes in $[n]\cup \{0\}$ are sign-independent.
    Therefore only the edge $\{c,l\}$, for which we know $\se{c}{l}$ has the potential to be directed in both directions. 
    Note that $\nrse{c}{l}$ since we have $\se{c}{l}$ only when $x_0=x_1=\ldots=x_n=0$ but under this partial assignment $\nse{l}{c}$.
    \lang{SubsetSum} determines whether some other partial assignment exists for which $\se{l}{c}$.
\end{proof}

\subsection{Recognizing \VCSP s that implement conditionally-smooth landscapes}
\label{sec:precSmooth_rec}

As we generalize from oriented to conditionally-smooth, the recognition task becomes easier.
The overall approach is similar to checking if a \VCSP\ is smooth (i.e., $\emptyset$-smooth) with the difference being that subsequent calls to \lang{ConditionallySignIndependent}$(\mathcal{C},i,S,y)$ adjust the set $S$ and background assignment $y$ based on previous calls.
This gives the \lang{ConditionallySmooth}$(\mathcal{C})$ algorithm (\cref{alg:precSmoothCheck}).

\printAlgPrecSmooth*

To analyze \cref{alg:precSmoothCheck} it is helpful to have a recursive variant of the iterative definition of conditionally smooth given in \cref{def:precSmooth}.
For this, we note:

\begin{proposition}\label{prop:precsmooth_recursive}
Given a conditionally-smooth fitness landscape $f$ there must be a non-empty set $S \subseteq [n]$ such that for all $i \in S$ $i$ is sign-independent in $f$ with preferred asignment $x^*_i$.
Further, given any such set $S$, $f$ restricted to $\{0,1\}^{[n] \setminus S}x^*[S]$ must be conditionally-smooth.
\end{proposition}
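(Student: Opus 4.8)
The plan is to unwind the definition of conditionally-smooth. Fix a strict partial order $\prec$ on $[n]$ and an assignment $x^*$ witnessing that $f$ is $\prec$-smooth (\cref{def:precSmooth}); by \cref{prop:precSmooth_SinglePeaked} this $x^*$ is the unique local (hence global) peak of $f$. For the first claim, I would take $S$ to be the set of $\prec$-minimal indices. Since $[n]$ is finite and non-empty, $S \neq \emptyset$. For $i \in S$ we have $\downarrow i = \emptyset$, so the premise ``$x[\downarrow i] = x^*[\downarrow i]$'' of \cref{def:precSmooth} holds for every $x$, giving $f(x[i \mapsto x^*_i]) > f(x[i \mapsto \overline{x^*_i}])$ for all $x$ — which is precisely the statement that $i$ is sign-independent with preferred assignment $x^*_i$.

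For the second claim, let $S$ be an arbitrary non-empty set of sign-independent indices, with preferred assignments $x^*_i$ for $i \in S$. The one point needing an argument is that these preferred assignments agree with the peak: if $i$ preferred $b$ but $x^*_i = \overline{b}$, then flipping bit $i$ at $x^*$ would strictly raise fitness, contradicting that $x^*$ is a peak. Consequently $x^*[S]$ is exactly the partial assignment sending each $i \in S$ to its preferred value, and the face $F := \{0,1\}^{[n]\setminus S} x^*[S]$ contains $x^*$. I would then propose $\prec' := \prec \cap \big(([n]\setminus S) \times ([n]\setminus S)\big)$, the restriction of $\prec$ to $[n]\setminus S$ (still a strict partial order), as a witness that $f$ restricted to $F$ is $\prec'$-smooth with optimum $x^*[[n]\setminus S]$.

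The verification is bookkeeping about down-sets: writing $\downarrow' j = \{i \in [n]\setminus S : i \prec j\}$ for $j \in [n]\setminus S$, we have $\downarrow j = \downarrow' j \,\sqcup\, (\downarrow j \cap S)$; for $x \in F$ with $x[\downarrow' j] = x^*[\downarrow' j]$, the equality $x[S] = x^*[S]$ forces $x[\downarrow j \cap S] = x^*[\downarrow j \cap S]$ and hence $x[\downarrow j] = x^*[\downarrow j]$, so $\prec$-smoothness of $f$ yields $f(x[j \mapsto x^*_j]) > f(x[j \mapsto \overline{x^*_j}])$; since $j \notin S$ both sides remain in $F$, so the inequality concerns $f$ restricted to $F$. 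This establishes $\prec'$-smoothness, hence conditional smoothness of the restriction. The only subtle step — the ``main obstacle,'' such as it is — is the peak-agreement observation above: without it, the face $\{0,1\}^{[n]\setminus S} x^*[S]$ named in the statement need not contain $x^*$, and reusing $\prec$ as the witness would not be justified. Everything else reduces to \cref{prop:precSmooth_SinglePeaked} and the elementary behaviour of down-sets under restriction of a partial order.
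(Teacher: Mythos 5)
Your proof is correct; the paper actually states this proposition without any proof of its own, and your argument (take the $\prec$-minimal indices for existence, then restrict $\prec$ to $[n]\setminus S$ for the recursive step) is exactly the unwinding of \cref{def:precSmooth} that the authors evidently intend. Your explicit verification that the preferred value of each sign-independent $i\in S$ must agree with the peak $x^*_i$ (via \cref{prop:precSmooth_SinglePeaked}) is a genuine and necessary step for an \emph{arbitrary} such $S$, and is a worthwhile detail the paper glosses over.
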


By using the observation that smooth landscapes are conditionally-smooth as the base case, \cref{prop:precsmooth_recursive} provides us with an alternative recursive definition of conditionally-smooth landscapes.

Returning to \cref{alg:precSmoothCheck}, notice that if $\mathcal{C}$ implements a conditionally-smooth landscape then by \cref{prop:precsmooth_recursive} we have the loop invariant that $\{0,1\}^{[n] \setminus S}x^*[S]$ is a conditionally-smooth landscape and that thus $T$ must be non-empty 
(hence the `\textbf{return} \textsc{False}' condition on \cref{alg:precSmoothCheck_line:false} if $T$ is empty).
Since we must have $T \subseteq [n] \setminus S$ with $|T| \geq 1$ for each $S$, that means $|S|$ will grow by at least one index at each iteration of the while loop.
So the algorithm calls \lang{ConditionallySignIndependent} at most $\binom{n}{2}$ times for an overall runtime of $\binom{n}{2}\Delta(\mathcal{C})$ timesteps.

Notice that \cref{alg:precSmoothCheck} could also be interpreted as a local search algorithm that follows an ascent defined by each bit $x_i$ that is flipped from $0$ to $1$ in repeated iterations of \cref{alg:precSmoothCheck_line:step}.
Since each $x_i$ is flipped at most once and only if it disagrees with the unique local peak $x^*$ of the $\prec$-smooth landscape implemented by $\mathcal{C}$.
That means this ascent is direct and takes a number of steps equals to the Hamming-distance-to-peak $h_f(0^n) = ||x^* - 0^n||_H$.
This means that at least some local search algorithms are efficient for conditionally-smooth landscapes.
As we will see in \cref{app:efficientAlgos}, this is one of many efficient local search algorithms for conditionally-smooth landscapes where the others will work even if they do not have access to the \VCSP-instance $\mathcal{C}$ that implements the landscape.
\section{Efficient local search in conditionally-smooth landscapes}
\label{app:efficientAlgos}

Given a fitness landscape $f$, local search starts at an initial assignment $x^0$ and steps to further assignments $x^1, x^2, \ldots , x^T$ reaching an assignment $x^T$ that is a local peak.
For an arbitrary local search algorithm \lang{A} this can be summarized as \cref{alg:local-search} where $\lang{A}^1_f(x)$ is a single step of \lang{A} from $x$ on $f$ and $\lang{A}^t_f(x)$ is short-hand for calling $\lang{A}^1_f$ $t$-times: $A^t_f(x) = \underbrace{A^1_f(A^1_f(\ldots A^1_f(x) \ldots ))}_{t \text{ times}}$.

\begin{algorithm}[htb]
\caption{Local Search Algorithm \lang{A}$(f,x^0)$:}\label{alg:local-search}
    \begin{algorithmic}[1]
    \Require A fitness landscape $f$, and a starting assignment $x^0$.
    \Ensure A local peak $x^*$.
    \State Initialize $x\gets x^0$.
        \While{$(\phi^+(x)\neq\emptyset)$}
        \State $x\gets \lang{A}_f^1(x)$
        \EndWhile
        \State \textbf{return} $x$
    \end{algorithmic}    
\end{algorithm}

To show that conditionally-smooth fitness landscapes are efficient-for-many local search algorithms our analysis uses a partition of the $n$ indexes of a conditionally-smooth landscape on $n$ dimensions.
But in order to formally define these partitions and relate them to an assignment $x$ we need to refine our definition of in- and out-maps in a way that is specific to conditionally-smooth landscapes:
\defInOutMap*

\noindent Note that 
given any ascent $x^0,x^1, \ldots , x^T$, we have $\phi^\ominus(x^0) \subseteq \phi^\ominus(x^1) \subseteq \ldots \phi^\ominus(x^T) = [n]$.
We refer to the set $\overline{\phi^\ominus(x)} := [n]-\phi^\ominus(x)$ as the \emph{free indices} at $x$.

Given a $\prec$-smooth landscape $f$, we will let the height of $f$ be the $\text{height}(\prec)$ of the poset $([n],\prec)$ and the width of $f$ be the $\text{width}(\prec)$ of the poset $([n],\prec)$.
We also partition $[n]$ into $\text{height}(\prec)$-many \emph{level sets}.
Specifically, given a variable index $i \in [n]$, define the upper set of $i$ as $\uparrow i = \{j \;|\; i \preceq j \}$.
Then, we partition $[n]$ into $\text{height}(\prec)$-many \emph{level sets} defined as $S_l = \{ i \;|\; \text{height}(\uparrow i, \prec) = l\}$.
Additionally we define $S_0=\emptyset$, and we let $S_{<l}=\bigcup_{k=0}^{l-1}S_k$.

For an assignment $x$, define $\text{height}_f(x)$ (and $\text{width}_f(x)$) as the height (and width) of the poset $(\overline{\phi^\ominus(x)},\prec)$.

Note that for $x$ with $\text{height}(x)=l$, we have that for $l' > l$, $S_{l'} \subseteq \phi^\ominus(x)$, $S_l \subseteq \phi^\ominus(x) \cup \phi^\oplus(x)$, and, for $x\neq x^*$, $\phi^\oplus(x) \cap S_l\neq\emptyset$.
In other words, if $x$ is at height $l$ then all the variables with index at level $l' > l$ are set correctly, all free indexes at level $l$ are at the border at $x$, and at least one variable with index at level $l$ is free for $l\neq 0$.
This means that if $\height[x]=l$ then $\overline{\phi^\ominus(x[\phi^\oplus(x) \mapsto \overline{x[\phi^\oplus(x)]}])} \in S_{<l}$.
With this terminology in place, we can state our main theorem of this section:
\mainStepsBound*

\begin{proof}
    This follows by induction on $\height[x^0]$ and linearity of expected value.
\end{proof}

As is exemplified in \cref{sec:steepestAscent}, one of the main pitfalls of local search that leads to exponentially long ascents, is the near-exhaustive exploration of subfaces of the cube that are far from the peak. 
Such exploration corresponds to changing the same subset of variables many times, for each flip of a variable outside the subset. 
Because of the hierarchical structure of conditionally-smooth landscapes, this near-exhaustive search is avoidable. 
There are many ways to evade getting stuck changing the same number of variables; be that via randomness, as is the case of ascent-following or ascent-biased random local search; or by keeping track of variables that have already been changed as is the case for history-based local search; or by flipping many variables at a time as non-adjacent local search does.

\subsection{Ascent-following random local search}
\label{app:efficientWalking}

Given that \cref{thm:main_stepsBound} is expressed in terms of a stochastic process, it makes sense to first apply it to the prototypical stochastic local search algorithm: \lang{RandomAscent}~\cite{simplexSurvey,randomFitter1,randomFitter2}.
Given an assignment $x$, the step $\lang{RandomAscent}^1_f$ simply returns a fitter adjacent assignment uniformly at random.
Or, slightly more formally, if $Y^\text{RA}(x) \sim \lang{RandomAscent}^1_f(x)$ then:

\begin{equation}
\text{Pr}\{Y^\text{RA}(x) = x[i \mapsto \bar{x_i}]\} = \begin{cases}
\frac{1}{|\phi^+(x)|} & \text{ if } i \in \phi^+(x) \\
0 & \text{otherwise}
\end{cases}
\label{eq:RA}
\end{equation}

\noindent Now, we bound the expected number of steps to decrement height:

\begin{lemma}
Let $X_{<l}=\{x| \text{height}(x)<l\}$ and $s_{<l} = \max_{x\in X_{<l}} |\phi^+(x)|$. 
Then 
$p(n,l) = |S_l| + (1 + \log |S_l|) s_{<l}$
satisfies the condition from \cref{eq:drift_time_bound} of \cref{thm:main_stepsBound} for \lang{RandomAscent}.
\label{lem:p_RA}
\end{lemma}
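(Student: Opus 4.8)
The plan is to show that, while the process is at height $l$, the number $k_t := |\phi^\oplus(Y_t)\cap S_l|$ of still-incorrect bits at level $l$ evolves exactly like the number of uncollected coupons in a coupon-collector process, and that $\tau_{<l}$ is the first time this number reaches $0$. First I would record the structural facts about a height-$l$ assignment $x$: every index of level $>l$ is correct, so every incorrect level-$l$ index has all its $\prec$-predecessors correct and hence lies in $\phi^\oplus(x)$; thus $\phi^\oplus(x)\cap S_l$ is precisely the set of incorrect level-$l$ bits. Moreover every $\prec$-predecessor of an index of level $\ge l$ has level $>l$ (from $j\prec a\Rightarrow\operatorname{height}(\uparrow j)>\operatorname{height}(\uparrow a)$), so flipping a level-$<l$ bit changes neither the value nor the correctness of any level-$\ge l$ bit. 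Consequently $k_t$ is non-increasing, drops by exactly $1$ when a \lang{RandomAscent} step flips a level-$l$ bit and is unchanged when it flips a level-$<l$ bit, and the height falls below $l$ precisely when $k_t=0$ (using $\overline{\phi^\ominus(x[\phi^\oplus(x)\mapsto\overline{x[\phi^\oplus(x)]}])}\subseteq S_{<l}$). This reduces the lemma to bounding the expected time for $k_t$ to hit $0$.

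Next I would compute the one-step success probability. At a height-$l$ assignment $x$ the improving set splits as $\phi^+(x)=(\phi^\oplus(x)\cap S_l)\sqcup(\phi^+(x)\cap S_{<l})$: indices of level $>l$ are in $\phi^\ominus(x)\subseteq\phi^-(x)$, and a correct level-$l$ bit is never improving since all its predecessors are correct and $\prec$-smoothness pins down its preferred value. Hence, assuming $|\phi^+(x)\cap S_{<l}|\le s_{<l}$ for every height-$l$ $x$, whenever $k_t=k\ge 1$ the next step flips a level-$l$ bit (decrementing $k$) with probability $k/|\phi^+(Y_t)|\ge k/(k+s_{<l})$. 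Since $k_t$ is monotone non-increasing and $k_0=|\phi^\oplus(x^0)\cap S_l|\le|S_l|$, summing the geometric waiting times between successive decrements (equivalently, the variable-drift theorem) gives $\mathbb{E}\{\tau_{<l}\}\le\sum_{k=1}^{|S_l|}\frac{k+s_{<l}}{k}=|S_l|+s_{<l}H_{|S_l|}\le|S_l|+(1+\log|S_l|)s_{<l}=p(n,l)$. The companion bound $p(n,l)\le q(n)$ required by \cref{eq:drift_time_bound} follows from $|S_l|\le n$ and $s_{<l}\le\max_x|\phi^+(x)|\le n$, so $q(n):=n(2+\log n)$ works.

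The main obstacle is the inequality $|\phi^+(x)\cap S_{<l}|\le s_{<l}$ at a height-$l$ assignment $x$. The natural move is to pass to the height-$<l$ assignment $\tilde x:=x[\,\phi^\oplus(x)\cap S_l\mapsto x^*\,]$ obtained by correcting the incorrect level-$l$ bits, which lies in $X_{<l}$ so $|\phi^+(\tilde x)|\le s_{<l}$, and to argue $\phi^+(x)\cap S_{<l}\subseteq\phi^+(\tilde x)$. The subtlety is that correcting a level-$l$ ancestor can \emph{destroy} an improving move at a lower level that was improving only because that ancestor was set incorrectly, so this inclusion is not immediate. I would handle it by restricting $f$ to the face $\{0,1\}^{S_{\le l}}x^*[S_{>l}]$ — which is $\prec$-smooth with $S_l$ as its sign-independent top level by (iterated) \cref{prop:precsmooth_recursive}, so a short induction on $n$ reduces to the case $l=\operatorname{height}(\prec)$ — and then track how the improving indices strictly below the top level change as the incorrect top-level (hence mutually $\prec$-incomparable, hence non-interfering) bits are corrected one at a time along an ascent from $x$ to $\tilde x$. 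Making this exchange argument airtight — equivalently, establishing directly that the improving indices below level $l$ at a height-$l$ point never outnumber those at the best height-$<l$ point — is the one genuinely non-routine step; the rest is bookkeeping plus a standard coupon-collector estimate.
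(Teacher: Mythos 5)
Your argument is structurally identical to the paper's proof: the paper also tracks $z(x)=|S_l\cap\overline{\phi^\ominus(x)}|$, observes that the probability of decrementing it is $z/|\phi^+(x)|=z/(z+|\phi^+(x)\setminus S_l|)$, lower-bounds this by $z/(z+s_{<l})$, and sums the geometric waiting times to get $|S_l|+(1+\log|S_l|)s_{<l}$. Everything up to your ``main obstacle'' is exactly the paper's proof. The difference is that the paper silently asserts $|\phi^+(x)\setminus S_l|\le s_{<l}$ inside that inequality, whereas you correctly isolate it as the one step requiring an argument.

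Your suspicion about that step is warranted: it is not only unproven but false, so the exchange argument you hope to make airtight does not exist. Take $\prec$ to be the total order $v_h\prec v_{h-1}\prec\cdots\prec v_1$ (so $S_l=\{v_l\}$ and $|S_l|=1$), $x^*=1^h$, and define $f(x)=C+\sum_{j=1}^{h-1}3^{j}\prod_{k=j}^{h-1}x_{v_k}$ on the face $x_{v_h}=1$ and $f(x)=\epsilon\sum_{j=1}^{h-1}x_{v_j}$ on the face $x_{v_h}=0$, with $C$ large and $\epsilon$ small. This is $\prec$-smooth, and on the face $x_{v_h}=1$ every flip of an unpinned variable is an exact tie, so $|\phi^+|\le 1$ there and $s_{<h}=1$. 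Yet at $x=0^h$ (which has height $h$) all $h$ coordinates are improving, so $|\phi^+(x)\setminus S_h|=h-1\gg s_{<h}$, and the expected number of \lang{RandomAscent} steps until $v_h$ is flipped is $(h+1)/2$, which exceeds the claimed bound $|S_h|+(1+\log|S_h|)s_{<h}=2$ once $h\ge 4$. The underlying reason is that the improving indices below level $l$ at a height-$l$ point live on a face where $\prec$-smoothness imposes no constraints at all, so they are not controlled by the behaviour of $\phi^+$ on $X_{<l}$; your proposed repair via $\tilde x$ or via restricting to $\{0,1\}^{S_{\le l}}x^*[S_{>l}]$ cannot work because $x$ already lies in that face. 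The lemma (and both proofs) are rescued by replacing $s_{<l}$ with the trivial bound $|S_{<l}|=|S_1|+\cdots+|S_{l-1}|\ge|\phi^+(x)\setminus S_l|$, which is precisely the value substituted for $s_{<l}$ in the proof of \cref{prop:RA_stepsBound}, so the downstream runtime bound for \lang{RandomAscent} is unaffected.
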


\begin{proof}
Consider any $x$ with $\text{height}(x)=l$, define $z(x) = |S_l \cap \overline{\phi^\ominus(x)}|$.
The expected time to decrease $z(x)$ is $\frac{1}{p^+(x)}$ where $p^+(x) = \frac{|S^l \cap \overline{\phi^\ominus(x)}|}{|\phi^+(x)|} = \frac{z(x)}{z(x) + |\phi^+(x) \setminus S_l|} \geq \frac{z(x)}{z(x) + s_{<l}}$.
From this we get:
\begin{equation}
\mathbb{E}\{\tau_{<l}(\omega)\} \leq \sum_{z = 1}^{|S_l|} \frac{z + s_{<l}}{z} = |S_l| + s_{<l}\sum_{z = 1}^{|S_l|} \frac{1}{z} \leq |S_l| + (1 + \log |S_l|) s_{<l}
\end{equation}
\end{proof}

The bound from \cref{lem:p_RA} on the expected number of steps to decrement height can be fed carefully into \cref{thm:main_stepsBound} to bound the expected total number of steps for \lang{RandomAscent}:
\propRAstepsBound*

\begin{proof}\label{proof:prop:RA_stepsBound}
To get the bound in \cref{eq:RA_x0bound}, note that \lang{RandomAscent} from $x^0$ will be confined to the face $\{0,1\}^{\overline{\phi^\ominus(x^0)}}x^0[\phi^\ominus(x)]$, so we can redefine all $S_l$s with respect to that face.
Under that definition, $h = \height[x^0]$ and $|S_1| + \ldots + |S_h| = |\overline{\phi^\ominus(x^0)}|$, $|S_l| \leq \width[x^0]$, and $s_{<l} = |S_1| + \ldots + |S_{l - 1}|$.
\end{proof}

The true power of \lang{RandomAscent} is not just its simplicity, but also its ability to be combined with any other ascent-following algorithm $A$ to make that algorithm efficient on conditionally-smooth landscapes.
Specifically, consider any local search algorithm $A$ that follows ascents.
The $\epsilon$-\lang{shaken}-$A$ algorithm works as follows:
With probability $1 - \epsilon$ it takes a step according to $A$, but with
probability $\epsilon$ it takes a step according to \lang{RandomAscent} (i.e., takes a random fitness increasing step).
This gives an obvious upper-bound on the runtime:

\begin{proposition}\label{prop:runningtime-shaken-ascent}
    On a $\prec$-smooth landscape $f$ on $n$ bits, the expected number of steps taken by $\epsilon$-\lang{shaken}-$A$ to find the peak from initial assignment $x^0$ is: 
\begin{align}
\leq \frac{1}{\epsilon}\Bigg(|\overline{\phi^\ominus(x^0)}| + \text{width}_f(x^0)(1 + \log \text{width}_f(x^0)) \binom{\text{height}_f(x^0) - 1}{2}\Bigg) \\
\leq \frac{1}{\epsilon}\Bigg(n + \text{width}(\prec)(1 + \log \text{width}(\prec))\binom{\text{height}(\prec) - 1}{2}\Bigg).
\end{align}
\label{prop:Shaken_Bound}
\end{proposition}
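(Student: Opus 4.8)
The plan is to re-run the height/level-set analysis used for \lang{RandomAscent}, applying \cref{thm:main_stepsBound} with the per-level step bound $p(n,l)$ of \cref{lem:p_RA} replaced by $\tfrac{1}{\epsilon}p(n,l)$. First I would note that $\epsilon$-\lang{shaken}-$A$ follows an ascent: both the $A$-steps (by the standing assumption that $A$ follows ascents) and the \lang{RandomAscent}-steps strictly increase fitness. Hence, as observed after \cref{def:in-out-maps}, $\phi^\ominus$ is monotone non-decreasing along the entire trajectory, so the run stays inside the face $\{0,1\}^{\overline{\phi^\ominus(x^0)}}x^0[\phi^\ominus(x^0)]$ and we may (re)define all level sets $S_l$ relative to that face exactly as in the proof of \cref{prop:RA_stepsBound}. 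In particular, while the height stays equal to $l$, the quantity $z(x)=|S_l\cap\overline{\phi^\ominus(x)}|$ is non-increasing and reaches $0$ precisely when the height drops below $l$.

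Next I would bound $\mathbb{E}\{\tau_{<l}\}$ for $\epsilon$-\lang{shaken}-$A$ by splitting the trajectory into \emph{rounds}, each round being a (possibly empty) block of $A$-steps terminated by one \lang{RandomAscent}-step. Each round has an independent $\mathrm{Geometric}(\epsilon)$ length with mean $1/\epsilon$, independent of which assignments are visited, so by Wald's identity $\mathbb{E}\{\tau_{<l}\}=\tfrac{1}{\epsilon}\mathbb{E}\{R\}$, where $R$ is the number of \lang{RandomAscent}-steps taken before the height decrements. It then remains to show $\mathbb{E}\{R\}\le p(n,l)$, which is exactly the bound proved in \cref{lem:p_RA}: whenever a \lang{RandomAscent}-step is taken from an assignment $x$ of height $l$, its effect is that of a genuine \lang{RandomAscent}-step on $x$, so $z$ decreases by at least one with probability at least $\tfrac{z(x)}{z(x)+s_{<l}}$; since $z$ never increases and the interspersed $A$-steps can only decrease it further, the expected number of \lang{RandomAscent}-steps spent at each value of $z$ is at most as large as in the pure-\lang{RandomAscent} analysis, and summing over the values $1,\dots,|S_l|$ recovers $\mathbb{E}\{R\}\le|S_l|+(1+\log|S_l|)s_{<l}=p(n,l)$.

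Finally I would feed $\tfrac{1}{\epsilon}p(n,l)$ into \cref{thm:main_stepsBound} in place of $p(n,l)$ and repeat verbatim the summation carried out in the proof of \cref{prop:RA_stepsBound} (now restricted to the face $\{0,1\}^{\overline{\phi^\ominus(x^0)}}x^0[\phi^\ominus(x^0)]$): the expected number of steps is at most $\tfrac{1}{\epsilon}\sum_{l=1}^{\height[x^0]}p(n,l)$, which that computation bounds by $\tfrac{1}{\epsilon}\big(|\overline{\phi^\ominus(x^0)}|+\width[x^0](1+\log\width[x^0])\binom{\height[x^0]-1}{2}\big)$, and then by the $x^0$-free quantity on the second displayed line.

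I expect the one genuinely delicate point to be the middle step: arguing that the interleaved $A$-steps neither undo progress nor slow the \lang{RandomAscent}-steps. The first half is handled by monotonicity of $\phi^\ominus$ (hence of $z$) along ascents; the second half uses that the per-step success probability $\tfrac{z}{z+s_{<l}}$ is increasing in $z$, so "jumping ahead" in $z$ via an $A$-step can only help. The surrounding bookkeeping — that the rounds really do have i.i.d.\ $\mathrm{Geometric}(\epsilon)$ lengths independent of the visited assignments, that the height decrements almost surely (since \lang{RandomAscent}-steps occur infinitely often and each has bounded-below progress probability), and that $\mathbb{E}\{R\}<\infty$ so Wald applies — is routine and need only be noted.
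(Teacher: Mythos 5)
Your proposal is correct and matches the paper's intent exactly: the paper gives no formal proof of this proposition (it is presented as ``an obvious upper-bound''), justified only by the remark that ascent-following $A$-steps never unflip indices in $\phi^\ominus$, so the runtime is a $1/\epsilon$-multiple of the \lang{RandomAscent} bound from \cref{prop:RA_stepsBound} --- which is precisely the round/Wald decomposition you formalize. One small wording quibble: the reason interleaved $A$-steps cannot hurt is not that the success probability $\tfrac{z}{z+s_{<l}}$ is increasing in $z$ (skipping ahead \emph{lowers} it), but rather, as you also say, that skipping values of $z$ only removes nonnegative terms from the sum $\sum_{z=1}^{|S_l|}\tfrac{z+s_{<l}}{z}$.
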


\subsection{Ascent-biased random local search}\label{app:effectiveFitnessDecreasing}

\noindent\cref{thm:main_stepsBound} also applies to algorithms that do not always go uphill, but occasionally take downhill steps like simulated annealing.

\lang{SimulatedAnnealing} considers a bit to flip uniformly at random, if the considered flip increases fitness then it is accepted but the considered flip decreases fitness then it is only accepted with a probability that decreases at each step~\cite{LocalSearch_Book1}.
More formally, if $Y^\text{SA}_{t + 1}(x) \sim \lang{SimulatedAnnealing}^1_f(x^t)$ then

\begin{equation}
\text{Pr}\{Y_{t  +1} = y\} = \begin{cases}
\frac{1}{n} & \text{ if } i \in \phi^+(x) \text{ and } y = x[i \mapsto \overline{x_i}] \\
\frac{1}{n}\exp(\frac{f(x^t) - f(x^t[i \mapsto \overline{x^t_i}])}{K(t)}) & \text{ if } i \in \phi^-(x) \text{ and } y = x[i \mapsto \overline{x_i}] \\
1 - \frac{|\phi^+(x)|}{n} - Z\frac{|\phi^-(x)|}{n} & \text{ if } y = x
\end{cases}
\label{eq:SA}
\end{equation}

\noindent where $Z = \frac{1}{|\phi^-(x)|}\sum_{i \in \phi^-(x)} \exp(\frac{f(x^t) - f(x^t[i \mapsto \overline{x^t_i}])}{K(t)})$ is the average downstep probability and $K(t)$ is a sequence of temperatures decreasing toward $0$ (known as the \emph{annealing schedule}).
Sometimes the probability of a downward step $r_t(\Delta f) = \exp(\frac{-\Delta f}{K(t)})$ is replaced by other functional forms and other annealing schedules.
However, they always go monotonically to $r_t(\Delta f) \rightarrow 0$ as $t \rightarrow \infty$ for any $\Delta f > 0$ and if $a < b$ then $r(a) \geq r(b)$.

\begin{lemma}
For a $\prec$-smooth fitness landscape on $n$ bits with total order $\prec$, if the probability of accepting a downstep is bounded by $r$ in $\lang{SimulatedAnnealing}^1_f$ then $p(n,l) = n(1 + r)^{n - l}$ satisfies the condition from \cref{eq:drift_time_bound} of \cref{thm:main_stepsBound}.
\label{lem:SA_stepBound}
\end{lemma}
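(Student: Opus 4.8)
The plan is to collapse the problem to a one‑dimensional hitting time along the chain $\prec$ and then bound that hitting time by a carefully amortised recursion.

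\textbf{Reduction along the chain.} Since $\prec$ is a total order, write $[n]=\{a_1\prec a_2\prec\cdots\prec a_n\}$. First I would show, using $\prec$-smoothness, that $\phi^\ominus(x)=\{a_1,\dots,a_{k(x)}\}$ where $k(x)$ is the length of the longest prefix $a_1,\dots,a_{k(x)}$ on which $x$ agrees with the peak $x^*$: for $j\le k(x)$ the down-set of $a_j$ is correct, so flipping $a_j$ strictly decreases $f$ and $a_j\in\phi^-(x)$, whereas $a_{k(x)+1}$ has a correct down-set but the wrong value so $a_{k(x)+1}\in\phi^+(x)$, and every later index fails the $\prec$-condition because $a_{k(x)+1}\notin\phi^-(x)$. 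Hence $\height[x]=n-k(x)$, so any $x$ with $\height[x]=l$ has $k(x)=m:=n-l$, and $\tau_{<l}$ is exactly the first time the process has corrected the whole window $W=\{a_1,\dots,a_{m+1}\}$, i.e.\ the first time $Y_t$ agrees with $x^*$ on all of $a_1,\dots,a_{m+1}$. Because flipping a bit outside $W$ cannot change agreement on $W$, it suffices to track the disagreement set $D_t\subseteq W$ of bits of $W$ on which $Y_t$ differs from $x^*$; we have $D_0=\{a_{m+1}\}$ and $\tau_{<l}=\inf\{t:D_t=\emptyset\}$.

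\textbf{Drift of the disagreement set.} Let $a_{j_1}=\min_\prec D_t$. Since $a_1,\dots,a_{j_1-1}$ all agree with $x^*$, $\prec$-smoothness gives $a_{j_1}\in\phi^+(Y_t)$, so flipping $a_{j_1}$ is an uphill move that is always accepted; thus with probability $\ge 1/n$ one gets $D_{t+1}=D_t\setminus\{a_{j_1}\}$. The only ways to enlarge $D_t$ are (i) flipping some $a_j$ with $j<j_1$, which by $\prec$-smoothness is a downhill move, accepted with probability $\le r$ and contributing $\le r/n$ each, or (ii) flipping some $a_j\in W\setminus D_t$ with $j_1<j\le m+1$, whose sign is unconstrained (its down-set contains the incorrect $a_{j_1}$) and which therefore contributes $\le 1/n$ each; all remaining flips leave $D_t$ unchanged or can only shrink it.

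\textbf{The recursion.} For $1\le j\le m+1$ let $T_j$ be the worst case, over $\prec$-smooth landscapes on $n$ bits and over configurations in which $a_j$ is the $\prec$-least incorrect bit of $W$, of the expected number of $\lang{SimulatedAnnealing}$ steps until the $\prec$-least incorrect bit of $W$ is $\prec$-strictly above $a_j$. The drift facts give $T_j\le n+\sum_{k<j}r_k\,U_{k\to j}$ with $r_k\le r$, where $U_{k\to j}$ is the expected time to climb the least incorrect bit from $a_k$ back past $a_j$ after a backslide. The key structural observation is that as soon as the current least-incorrect bit is corrected, every agreeing bit lying $\prec$-below the next incorrect bit becomes ``safe'' — its down-set is now correct, so it can only be re-flipped by a downhill move of probability $\le r$ — so re-pollution of the ``danger zone'' is always paid for by downhill flips; amortising over which subsets of $\{a_1,\dots,a_m\}$ are ever re-polluted then telescopes the bound to $T_j\le n(1+r)^{\,j-1}$. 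In particular $\mathbb{E}\{\tau_{<l}\}\le T_{m+1}\le n(1+r)^{m}=n(1+r)^{\,n-l}=:p(n,l)$, and since $\prec$ is a total order $l\le n$ gives $p(n,l)\le n(1+r)^{n}=:q(n)$, which is precisely the hypothesis of \cref{eq:drift_time_bound}.

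\textbf{Main obstacle.} The naive ``crawl back one front-position at a time, assuming the danger zone is fully re-polluted'' estimate yields only $T_j\le n\prod_{k=1}^{j-1}(1+kr)$, which is too weak (it is super-polynomial even after the burn-in that makes $r\le\alpha/n$). The real work is the amortised control of the $U_{k\to j}$: one must show that the aggregate extra work caused by the unconstrained ``dangerous'' flips above the front is not $\prod(1+kr)$ but telescopes to $n(1+r)^{m}$ — intuitively, a configuration with exactly $s$ extra incorrect bits in $W$ is reachable only through $s$ downhill flips and is left within $O(n)$ steps, so such configurations contribute $\le nr^{s}$ in aggregate and $n\sum_{s}\binom{m}{s}r^{s}=n(1+r)^{m}$. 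Making this accounting rigorous — rather than the pessimistic level-by-level recursion — is the heart of the argument.
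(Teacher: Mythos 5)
Your reduction to the chain is correct and matches the paper's setup: for a total order, $\phi^\ominus(x)$ is exactly the longest $\prec$-prefix on which $x$ agrees with $x^*$, so $\tau_{<l}$ is the hitting time of the event that the window $W=\{a_1,\dots,a_{m+1}\}$ is fully correct, and your drift analysis (the least incorrect bit is always an accepted uphill flip; a correct bit below the front can only be corrupted by a downhill flip accepted with probability $\le r$) is also right. But the proposal is not a proof: you explicitly defer ``the heart of the argument,'' and the completion you sketch fails. You claim that a configuration with $s$ extra incorrect bits in $W$ is reachable only through $s$ downhill flips and hence contributes $\le n r^s$ in aggregate, giving $n\sum_s\binom{m}{s}r^s=n(1+r)^m$. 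This contradicts your own observation (ii): an agreeing bit $a_j\in W$ lying $\prec$-above the current front has an incorrect down-set, so $\prec$-smoothness places no constraint on its sign, and in a worst-case $\prec$-smooth landscape flipping it \emph{away} from $x^*$ can be an \emph{uphill} move, chosen and accepted with probability $1/n$ rather than $r/n$. A single accepted backslide at $a_j$ can therefore be followed by corruption of all of $a_{j+1},\dots,a_m$ at unit rate, so a configuration with $s$ extra incorrect bits may be charged only one factor of $r$, and the $r^s$ accounting---which is precisely the step that needs an idea---collapses.

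For comparison, the paper's proof never looks inside the window: it treats the prefix length (the height) as the sole state variable, defines $p(n,h)$ as a worst case over all states of that height, charges a backslide to level $j-1$ the sum of the per-level recovery times, and packages this into a recurrence for a cumulative potential $P(n,k)$. Your observation that the pessimistic level-by-level recursion only yields $n\prod_{k}(1+kr)$ is, however, a legitimate and valuable one: the recurrence $p_{k+1}=n+r\sum_{j=0}^{k}(j+1)p_j$ that this accounting actually produces solves to $p_k=n\prod_{i=1}^{k}(1+ir)$, not to $n(1+r)^k$, so the discrepancy you flag is real and is not resolved by the argument you give---nor, as far as I can tell, by the recurrence manipulation in the paper's own proof, whose step from $p_{k+1}=n+rP(n,k)$ to $P(n,k+1)=n+(1+r)P(n,k)$ is inconsistent with the weighted definition $P(n,k)=\sum_{j=0}^{k}(j+1)p(n,n-j)$. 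Either a genuinely amortised argument (one that also handles uphill corruption above the front) or a corrected potential must be supplied before the bound $n(1+r)^{n-l}$ can be considered established.
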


\begin{proof}
Let $P(n,k) = \sum_{j = 0}^k (j + 1)p(n,n-j)$.
Notice that:
\begin{align}
p(n,n-(k+1)) & = 1 +  \underbrace{\frac{n - 1 - r(k + 1)}{n}}_{\text{probability flip doesn't change level}}p(n,n - (k + 1)) \nonumber\\
& \quad \quad + r\underbrace{\frac{P(n,k+1) - p(n,n - (k + 1))}{n}}_{\text{expected time to reach } X_{<l} \text{ given flip in } \phi^{\ominus}(x)} \\
& = n + rP(n,k)
\end{align}
satisfies \cref{eq:drift_time_bound} of \cref{thm:main_stepsBound}.
From this, we get $P(n,k+1) = n + (1 + r)P(n,k)$ and $P(n,0) = n$, which is a recurrence solved by $P(n,k) = \frac{n}{r}((1 + r)^{k + 1} - 1)$.
And thus $p(n,n - k) = n(1 + r)^k$
\end{proof}

As with \lang{RandomAscent} before it, the bound from \cref{lem:SA_stepBound} on the expected number of steps to decrement height can be fed carefully into \cref{thm:main_stepsBound} to bound the expected total number of steps for \lang{SimulatedAnnealing}:

\propSAstepsBound*

\begin{proof}\label{proof:prop:SA_stepsBound}
First wait for the algorithm to take $\tau^\alpha$ steps to reduce $r_t(1) \leq \frac{\alpha}{n}$.
If $f$ is $\prec'$-smooth then extend $\prec'$ to a total order $\prec$, use this with \cref{lem:SA_stepBound} to get an upper bound on the remaining steps:
\begin{align}
\sum_{k = 0}^{n - 1} p(n,n - k) & = \sum_{k = 0}^{n - 1} n(1 + r)^k
= n\frac{1 - (1 + r)^n}{1 - (1 + r)} = n\frac{(1 + r)^n - 1}{r} \\
& = \frac{n^2}{\alpha}((1 + \frac{\alpha}{n})^n - 1) \leq n^2\frac{(\exp(\alpha) - 1)}{\alpha}.
\end{align}
\end{proof}

\subsection{History-based local search}\label{app:efficientHistory}

The algorithms considered so far use randomness to avoid following exponentially long paths to the peak in conditionally-smooth landscapes. 
But randomness is not the only thing that can help local search.
Some local-search algorithms popular in the linear programming literature keep a history of the variables they have altered in their search \cite{historyRules}.
They then use this history to avoid flipping the same subset of variables many times whilst there are still untried variables, which works well in conditionally-smooth landscapes. 
Below we give examples of the variety of history-based rules that can be devised to yield different local search algorithms.

These algorithms all keep a history array $B$ where $B(i)$ stores information about variable $x_i$.
In some cases, the algorithms track information about $x_i=0$ separately form $x_i=1$, doubling the size of the history array with $B(i-)$ tracking $x_i=0$ and $B(i+)$ tracking $x_i=1$.
Distinguishing the history of $x_i=0$ from that of $x_i=1$ will give different algorithmic behavior than tracking simply changes to the variable $x_i$ (see \cite{historyRules} for illustration).
In the examples that follow, we describe a step of the algorithm and de information stored in the array $B$:

\lang{ZadehsRule}~\cite{zadeh} and \lang{LeastUsedDirection}~\cite{historyRules} select the improving bit-flip that has been used the least often thus far. 
For \lang{ZadehsRule} $B(i+)$ stores the number of times $x_i$ has flipped from $x_i=0$ to $x_i=1$, and $B(i-)$ the number of times it has changed from $x_i=1$ to $x_i=0$. \lang{LeastUsedDirection} on the other hand only stores, in $B(i)$, the number of times $x_i$ was flipped.  
The \lang{LeastRecentlyBasic} \cite{cunningham1979} selects the improving bit-flip that was flipped the least-recently.
Here $B(i+)$ stores the iteration number at which $x_i$ was last $1$ and $B(i-)$ at which $x_i$ was last $0$. 
\lang{LeastRecentlyEntered} \cite{leastRecentlyEntered} selects the improving bit-flip such that the new value of the selected bit was held by that bit least-recently, among all bits with an improving flip.
Here $B(i+)$ also stores the iteration number when $x_i$ was last $1$. 
\lang{LeastIterationsInBasis} \cite{historyRules} select the improving bit-flip, such that the new value of the selected bit has been held by that bit the least number of iterations.
Here $B(i+)$ is the number of iterations $x_i$ has been $1$.
Finally, the \lang{LeastRecentlyConsidered}~\cite{cunningham1979} works by fixing an ordering $v_1,\ldots, v_{2n}$ of the $2n$ possible flips. If the last used flip is $v_i$, \lang{LeastRecentlyConsidered} selects the improving flip $s$ that first appears in the sequence $v_{i+1}, v_{i+2}, ..., v_{2n}, v_1, ..., v_{i-1}$, then updated the history array as $B(i) \gets{} ((B(i) - B(s) - 1) \mod{} 2n) + 1)$).

We now prove that, starting at assignment $x^0$, \lang{ZadehsRule} flips the border $\phi^{\oplus}(x^0)$ before any variable $x_i$ can flip three times.
The proof can easily be modified to prove the same for any of the history-based algorithms mentioned above.
\begin{lemma}\label{app:zadehborder}
    Let $x^0,x^1,\ldots x^t$ be the ascent followed by \lang{ZadehsRule} starting from $x^0$.
    If variable $x_i$ flipped three times in this sequence, then any $\phi^\oplus(x^0)\subseteq\phi^\ominus(x^t)$.
\end{lemma}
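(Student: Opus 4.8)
The plan is to argue by contradiction against the defining property of \lang{ZadehsRule}: at every step it selects an improving flip whose direction-specific usage counter $B$ is minimal. Suppose that along the ascent $x^0,\dots,x^t$ the variable $x_i$ is flipped three times, the third flip occurring at step $s'\le t$, and suppose (for contradiction) that some border index $k\in\phi^\oplus(x^0)$ is never flipped in $x^0,\dots,x^{s'-1}$. I will show that at configuration $x^{s'-1}$ there is an improving flip on coordinate $k$ whose counter is still $0$, whereas the flip Zadeh's rule actually takes has counter $1$ — a contradiction — so in fact every index of $\phi^\oplus(x^0)$ is flipped before $x_i$'s third flip, and a short monotonicity argument then upgrades this to $\phi^\oplus(x^0)\subseteq\phi^\ominus(x^t)$.

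The first ingredient is a pair of facts about a border index $k\in\phi^\oplus(x^0)$ of a $\prec$-smooth landscape. \textbf{(i)} While $k$ has not yet been flipped it stays improving, and its only fitness-increasing move on coordinate $k$ is ``towards $x^*_k$''. Indeed $k\in\phi^\oplus(x^0)$ forces every $j\prec k$ to be correct at $x^0$ (in $\phi^\ominus(x^0)$); since $\phi^\ominus$ is non-decreasing along any ascent (the note after \cref{def:in-out-maps}), every $j\prec k$ is correct at each $x^s$; by the $\prec$-minimal-disagreement argument behind \cref{prop:precSmooth_SinglePeaked} (localized to $\{m\mid m\preceq j\}$), being correct at $x^s$ forces agreement with $x^*$ there, so $x^s[\downarrow k]=x^*[\downarrow k]$, and since $k$ is unflipped also $x^s_k=x^0_k=\overline{x^*_k}$; $\prec$-smoothness then gives $f(x^s[k\mapsto x^*_k])>f(x^s)$, i.e.\ $k\in\phi^+(x^s)$. \textbf{(ii)} The first time $k$ is flipped it must be flipped to $x^*_k$, after which all $m\preceq k$ lie in $\phi^-$, so $k\in\phi^\ominus$, and monotonicity of $\phi^\ominus$ keeps $k$ correct thereafter; in particular any border index of $x^0$ that ever gets flipped lies in $\phi^\ominus(x^t)$.

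Now the counting step. If $x_i$ is flipped three times its value trace is $b,\bar b,b,\bar b$, so the third flip repeats the \emph{direction} of the first; hence the flip selected by \lang{ZadehsRule} at step $s'$ had its direction-specific counter equal to $1$ at that point (it is incremented from $1$ to $2$). On the other hand, by fact (i) the never-flipped $k$ still offers an improving flip at $x^{s'-1}$, and its counter in that direction is $0$, since $k$ has never been flipped in any direction. Because $0<1$ \emph{strictly}, tie-breaking is irrelevant and \lang{ZadehsRule} would have preferred $k$'s flip to $x_i$'s third flip — contradiction. Hence every $k\in\phi^\oplus(x^0)$ is flipped before step $s'$, and fact (ii) gives $\phi^\oplus(x^0)\subseteq\phi^\ominus(x^{s'-1})\subseteq\phi^\ominus(x^t)$.

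I expect the only real friction to be fact (i)'s equivalence between ``$k$ is correct at $x$'' ($k\in\phi^\ominus(x)$) and ``$x$ agrees with $x^*$ on $\{m\mid m\preceq k\}$'' — this is exactly the $\prec$-minimal-disagreement observation underlying \cref{prop:precSmooth_SinglePeaked}, and once it is available everything else is bookkeeping. In particular the same argument adapts verbatim to the sibling history rules (\lang{LeastUsedDirection}, and the recency/iteration variants) once ``counter $0$ versus counter $1$'' is replaced by the corresponding ``never used versus used at least once'' comparison on the appropriate history entry $B(\cdot)$.
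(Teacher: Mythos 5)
Your proof is correct and follows essentially the same route as the paper's: both argue that a never-flipped border index remains improving with a zero history counter, while the third flip of $x_i$ necessarily reuses a direction whose counter is already $1$, contradicting the minimality built into \lang{ZadehsRule}, and then use the monotonicity of $\phi^\ominus$ along ascents to conclude. The only difference is presentational — you spell out the equivalence between ``correct'' and ``agrees with $x^*$ on the down-set'' that the paper leaves implicit.
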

\begin{proof}
    From $x^0,x^1,\ldots, x^t$ being an ascent, it follows that 
    $\phi^{\oplus}(x^0)\subseteq \phi^{+}(x^{t})\cup \phi^{\ominus}(x^{t})$.
    To show that $\phi^{\oplus}(x^0)\cap\phi^+(x^t)=\emptyset$ we will show that if an arbitrary variable $x_i$ flips three times, then for all $j\in\phi^+(x^t)$ either $B(j+)=1$ or $B(j-)=1$ at the time of the third flip of $x_i$.
    This is sufficient because any $i\in \phi^\oplus(x^0)$ is also in $\phi^+(x^t)$ if and only if it does not flip between time $0$ and $t$ (i.e. at time $t$ we have $B(i+)=B(i-)=0$).    
    Suppose without loss of generality that $t$ is the step at which $x_i$ flipped the third time and that this flip was of the kind $x^t=x^{t-1}[i\mapsto 1]$.
    Then at time $t-1$ we have $B(i+)=1$.
    Since $x_i$ was flipped we know for all variables $k\in\phi^+(x^{t-1})$ at time $t-1$ it was the case that $B(k+)\geq 1$ if $x^{t-1}_k=0$ and $B(k-)\geq 1$ if $x^{t-1}_k=1$.
    Since only $B(i+)$ is updated at time $t$,
    it follows that for all $j\in \phi^\oplus(x^0)\cap\phi^{+}(x^t)$ either $B(j+)=1$ or $B(j-)=1$. 
    In other words $\phi^\oplus(x^0)\subseteq\phi^\ominus(x^t)$.
\end{proof}

\begin{lemma}\label{lem:HB_stepsBound}
        Let $X_{<l}=\{x \; | \; \text{height}(x)<l\}$ and $s_{<l} = \max_{x\in X_{<l}} |\phi^+(x)|$. 
        Then 
    $p(n,l) = 2 s_{<(l+1)}$
    satisfies the condition from \cref{eq:drift_time_bound} of \cref{thm:main_stepsBound} for \lang{ZadehsRule}.
\end{lemma}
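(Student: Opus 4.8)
The plan is to exploit that \lang{ZadehsRule} is deterministic, so $\tau_{<l}$ is a fixed number, and to prove the sharper statement that, starting from any assignment $x^0$ with $\height[x^0]=l$, \lang{ZadehsRule} reaches an assignment of height $<l$ within $2s_{<(l+1)}$ steps; since $s_{<(l+1)}\le\max_x|\phi^+(x)|\le n$, this makes the condition from \cref{eq:drift_time_bound} hold with $q(n)=2n$. First I would record the monotonicity facts for the ascent $x^0,x^1,\ldots$ produced by \lang{ZadehsRule}: because $\phi^\ominus$ can only grow along an ascent, $\height[x^t]$ is non-increasing, and it stays $\ge l$ until step $\tau_{<l}$, so $\height[x^t]=l$ for every $t<\tau_{<l}$. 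Hence $|\phi^+(x^t)|\le s_{<(l+1)}$ for those $t$, the indices in $S_{>l}$ are already correct at $x^0$ and stay so, and every variable that \lang{ZadehsRule} flips before $\tau_{<l}$ lies in $\overline{\phi^\ominus(x^0)}$.

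Next I would invoke \cref{app:zadehborder}, applied to the sub-ascent from $x^0$: if some variable is flipped three times in $x^0,\ldots,x^t$ then $\phi^\oplus(x^0)\subseteq\phi^\ominus(x^t)$. Combining this with $S_{>l}\subseteq\phi^\ominus(x^0)\subseteq\phi^\ominus(x^t)$ and with the fact that the free level-$l$ indices are all at the border, so $S_l\cap\overline{\phi^\ominus(x^0)}\subseteq\phi^\oplus(x^0)\subseteq\phi^\ominus(x^t)$, yields $S_l\cup S_{>l}\subseteq\phi^\ominus(x^t)$, i.e.\ $\overline{\phi^\ominus(x^t)}\subseteq S_{<l}$ and the height has already dropped. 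Contrapositively, before step $\tau_{<l}$ no variable is flipped more than twice, so $\tau_{<l}\le 2|V|$, where $V$ is the set of variables flipped at least once before $\tau_{<l}$.

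It then remains to prove the key inequality $|V|\le s_{<(l+1)}$, which I would do by producing a single assignment $z$ with $\height[z]<l+1$ and $V\subseteq\phi^+(z)$. The natural candidate is the assignment whose $i$-th coordinate equals the value $x_i$ held immediately before its last flip along the ascent, so that $z$ agrees with $x^0$ on the variables flipped an even number of times or not at all, and with $\overline{x^0}$ on those flipped exactly once. Any $i\in S_l\cap V$ is at the border of $x^0$ and all its $\prec$-predecessors lie in $S_{>l}$, because a $\prec$-predecessor of a level-$l$ index must sit at a strictly higher level; hence those predecessors are correct throughout the ascent and $x_i$ has the well-defined preferred direction $x^*_i$ at every step, so $i$ is flipped exactly once (to $x^*_i$), enters $\phi^\ominus$ and never moves again, giving $z_i=x^0_i\ne x^*_i$ and $i\in\phi^+(z)$. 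For $i\in V$ of level $<l$, the last flip of $i$ was fitness-increasing at $x^{t_i-1}$, and by the time of that flip \cref{app:zadehborder} and the growth of $\phi^\ominus$ have forced the relevant higher-level indices above $i$ to their correct values; I would use $\prec$-smoothness to transfer the improving direction of $i$ from $x^{t_i-1}$ to $z$ and conclude $i\in\phi^+(z)$. Finally, correcting the free level-$l$ indices of $z$ only deletes them from $\phi^+$, and the resulting assignment has all of $S_l\cup S_{>l}$ correct, hence height $<l$; so $|\phi^+(z)|\le s_{<(l+1)}$ and $\tau_{<l}\le 2|V|\le 2s_{<(l+1)}=p(n,l)$.

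The main obstacle is this last paragraph: choosing the witness $z$ and verifying $V\subseteq\phi^+(z)$ using only the defining property of a $\prec$-smooth landscape (that a variable's improving direction is pinned down once its down-set matches $x^*$). The level-$l$ indices are immediate, but the delicate case is the lower-level indices whose down-sets have not yet stabilised at the moment they are flipped; handling them cleanly requires combining \cref{app:zadehborder} with the monotonicity of $\phi^\ominus$ to argue that \lang{ZadehsRule}'s final flip of such an index occurs only after the higher-level indices above it have already been corrected.
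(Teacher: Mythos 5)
Your first three paragraphs reproduce the paper's proof, which consists of exactly the two ingredients you start from: \lang{ZadehsRule} only makes improving flips, and \cref{app:zadehborder} forces $\phi^\oplus(x^0)$ into $\phi^\ominus$ before any variable can be flipped a third time, so the height drops first and $\tau_{<l}\le 2|V|$ where $V$ is the set of variables flipped. The paper stops there and treats the link to $s_{<(l+1)}$ as immediate; your fourth paragraph is an attempt to supply the missing step $|V|\le s_{<(l+1)}$, and that is where the genuine gap sits. The witness $z$ (each coordinate set to its value just before its last flip) does not obviously satisfy $V\subseteq\phi^+(z)$: the lemma concerns black-box $\prec$-smooth landscapes, where the only control you have over improving directions is that $i$ must prefer $x^*_i$ once $x[\downarrow i]=x^*[\downarrow i]$. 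An improving flip of $i$ observed at $x^{t_i-1}$ therefore transfers to the different assignment $z$ only when $\downarrow i$ is already correct at both assignments. This holds for the border indices, but a flipped index $i$ at level $l'<l$ can have $\prec$-predecessors at levels strictly between $l'$ and $l$ that need not be correct or at the border at $x^0$, and neither \cref{app:zadehborder} nor the monotonicity of $\phi^\ominus$ forces them to be corrected before $i$'s last flip; the landscape is then free to make flipping $i$ non-improving at $z$. You flag this yourself as the main obstacle, and it cannot be waved away: without it the inequality $|V|\le s_{<(l+1)}$, and hence the stated value of $p(n,l)$, is not established. (To be fair, the paper's own two-sentence proof does not spell this step out either.)

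Two smaller points. First, your parenthetical description of $z$ is inverted: a variable flipped exactly once holds its $x^0$-value immediately before that only flip, so $z_i=x^0_i$, whereas a variable flipped twice holds $\overline{x^0_i}$ before its second flip, so $z_i=\overline{x^0_i}$. Second, the claim that every free level-$l$ index ``is flipped exactly once (to $x^*_i$)'' is correct and is the clean part of your argument -- it is exactly the observation that makes $\tau_{<l}\le 2|V|$ work, since an unflipped border variable keeps both of its history counters at $0$ and stays improving, so \lang{ZadehsRule} never selects a flip whose directional counter exceeds $0$ before the border is exhausted.
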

\begin{proof}
    This follows from the fact that the \lang{ZadehsRule} only flips variables with indices in $\phi^+(x)$ for any visited $x$ (i.e. only makes improving flips) and from the fact that by \cref{app:zadehborder}, before any variable $x_i$ is flipped a third time, all variables with indices in $S_l\cap \phi^\oplus(x)$ will have flipped, for any starting $x$ with $\text{height}(x)=l$.  
\end{proof}

\begin{proposition}\label{prop:HB_stepsBound}
    On conditionally-smooth fitness landscape $f$ on $n$ bits, the expected number of steps taken by \lang{ZadehsRule} is $\leq 2\width[\prec]\height[\prec]$
\end{proposition}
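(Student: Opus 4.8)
The plan is to feed the per-level bound from \cref{lem:HB_stepsBound} into the master bound of \cref{thm:main_stepsBound}. First I would fix a $\prec$-smooth landscape $f$ on $n$ bits realizing the conditionally-smooth structure, along with its level-set partition $[n] = S_1 \cup \ldots \cup S_{\text{height}(\prec)}$ and the quantities $s_{<l} = \max_{x : \text{height}(x) < l} |\phi^+(x)|$. Since \lang{ZadehsRule} only ever makes improving flips, from any reachable $x$ we have $\phi^+(x) \subseteq \overline{\phi^\ominus(x)}$, so $|\phi^+(x)| \le |\overline{\phi^\ominus(x)}|$; and whenever $\text{height}(x) < l+1$ the free indices are confined to $S_{<(l+1)}$, giving the crude but sufficient bound $s_{<(l+1)} \le |S_{<(l+1)}| \le l \cdot \text{width}(\prec)$. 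Hence the function $p(n,l) = 2 s_{<(l+1)}$ of \cref{lem:HB_stepsBound} satisfies $p(n,l) \le 2 l \,\text{width}(\prec) =: q_l(n)$, and in particular $p(n,l) \le 2\,\text{height}(\prec)\,\text{width}(\prec) =: q(n)$ for every $l$, so the hypotheses of \cref{thm:main_stepsBound} hold.

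Next I would invoke \cref{thm:main_stepsBound} itself. Strictly speaking that theorem is stated for a stochastic process $Y_t \sim \lang{A}^t_f(x)$, but \lang{ZadehsRule} is deterministic (its state includes the history array $B$), so we simply view it as a degenerate stochastic process; all the expectations collapse to deterministic counts and the induction on $\text{height}_f(x^0)$ in the proof of \cref{thm:main_stepsBound} goes through verbatim — the only input needed is that the number of steps to decrement height from any height-$l$ configuration is at most $p(n,l)$, which is exactly what \cref{lem:HB_stepsBound} gives via \cref{app:zadehborder} (before any variable is flipped a third time, every index in $S_l \cap \phi^\oplus(x)$ has already been flipped, so height strictly decreases within that many improving steps). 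This yields a total step count of at most $\sum_{l=1}^{\text{height}_f(x^0)} p(n,l) \le \sum_{l=1}^{\text{height}(\prec)} 2 l\, \text{width}(\prec)$.

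Finally I would simplify: $\sum_{l=1}^{\text{height}(\prec)} 2 l\, \text{width}(\prec) = \text{width}(\prec)\,\text{height}(\prec)(\text{height}(\prec)+1) \le 2\,\text{width}(\prec)\,\text{height}(\prec)^2$ — but this is worse than the claimed $2\,\text{width}(\prec)\,\text{height}(\prec)$, so the genuinely sharp step, and the one I expect to be the main obstacle, is a better accounting of $s_{<(l+1)}$. The point is that \cref{app:zadehborder} shows that once \lang{ZadehsRule} has driven the configuration below level $l$, all border indices that were fixed are \emph{permanently} in $\phi^\ominus$, so the free indices never re-expand; summing $|S_l|$ over all levels gives $\sum_l |S_l| = |\overline{\phi^\ominus(x^0)}| \le n$, and each such index is flipped at most twice before being locked. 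Thus the correct global count is $\le 2\sum_{l} |S_l| \le 2n \le 2\,\text{width}(\prec)\,\text{height}(\prec)$, using $n \le \text{width}(\prec)\,\text{height}(\prec)$ (Mirsky/Dilworth: the poset $([n],\prec)$ is covered by $\text{height}(\prec)$ antichains each of size $\le \text{width}(\prec)$). So the proof I would write first establishes $n \le \text{width}(\prec)\text{height}(\prec)$, then argues via \cref{app:zadehborder} that \lang{ZadehsRule} flips each variable at most twice total, and concludes the bound $2\,\text{width}(\prec)\,\text{height}(\prec)$ directly, with \cref{thm:main_stepsBound} and \cref{lem:HB_stepsBound} available as the structural backbone.
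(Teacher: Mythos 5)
Your first paragraph faithfully follows the paper's intended route (feed \cref{lem:HB_stepsBound} into \cref{thm:main_stepsBound}), and your honest accounting of it is correct: since $\phi^+(x)$ need not be an antichain of $([n],\prec)$, the only safe bound is $s_{<(l+1)} \le |S_{\le l}| \le l\,\width[\prec]$, which after summing over levels gives a bound of order $\width[\prec]\height[\prec]^2$ rather than $2\width[\prec]\height[\prec]$. The paper's own proof is exactly the one-liner you started from --- ``feed \cref{lem:HB_stepsBound} and $q(n)=\width[\prec]$ into \cref{thm:main_stepsBound}'' --- which implicitly asserts $s_{<(l+1)} \le \width[\prec]$, i.e.\ $|\phi^+(x)| \le \width[\prec]$ for every reachable $x$. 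That inequality holds for the border $\phi^\oplus(x)$ (which \emph{is} an antichain) but not for $\phi^+(x)$: a smooth landscape regarded as $\prec$-smooth for a total order has $\width[\prec]=1$ yet can have $|\phi^+(0^n)|=n$. So you have located a genuine weakness in the paper's argument rather than missed its idea.

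The gap is in your proposed repair. The claim that ``each such index is flipped at most twice before being locked'' is not delivered by \cref{app:zadehborder}: that lemma only says that \emph{if} some variable is flipped a third time, then the border of the \emph{initial} assignment $x^0$ has by then been absorbed into $\phi^\ominus$; it does not forbid third flips, and it says nothing about the borders of later phases, where the history array is no longer all zeros (the lemma's proof explicitly uses $B(i+)=B(i-)=0$ at time $0$, and the counters are never reset). A variable at a low level of $\prec$ only becomes border --- and hence permanently locked --- once all of its predecessors are correct; until then it can in principle be re-flipped in every phase, and Zadeh's counters only guarantee that its usage stays within one of the usage of the currently pending border variables, whose own counters can grow from phase to phase. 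Tracking this carefully again yields a per-variable flip count of order $\height[\prec]$ and a total of order $\width[\prec]\height[\prec]^2$, matching your first computation. So the final step $\le 2n \le 2\width[\prec]\height[\prec]$ rests on an unproved ``at most two flips per variable'' claim; either that claim needs its own cross-phase counter argument, or the stated bound should be weakened to the $\width[\prec]\height[\prec]^2$ form you derived. (Your use of the level-set partition to get $n \le \width[\prec]\height[\prec]$ is fine and does not even need Mirsky, since the $S_l$ are already the required antichain decomposition.)
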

\begin{proof}
    Follows from feeding the bound of \cref{lem:HB_stepsBound} and $q(n)=\width[\prec]$ into \cref{thm:main_stepsBound}
\end{proof}

It is not hard to adapt \cref{prop:HB_stepsBound} to work for \lang{LeastUsedDirection},  
\lang{LeastRecentlyBasic}, 
\lang{LeastRecentlyEntered},  
\lang{LeastIterationsInBasis}, and \lang{LeastRecentlyConsidered}. 

\subsection{Non-adjacent local search}
\label{app:efficientJumping}

The algorithms that we considered in \cref{app:efficientWalking}, \cref{app:effectiveFitnessDecreasing}, and \cref{app:efficientHistory} transitioned, at each step, from a current assignment to an assignment that is adjacent to it.
We call any algorithm which allows a transition from $x$ to $y$ where $y$ differs from $x$ at more than one variable index a ``jumping algorithm''. 
Jumping algorithms like \lang{AntipodalJump}, \lang{JumpToBest}, and \lang{RandomJump} are especially popular on semismooth fitness landscapes where flipping any combination of variables with indexes in $\phi^+(x)$ results in a fitness increasing jump~\cite{HowardsRule,simplexSurvey,LongJump}.
But jumping algorithms are also used in other contexts like the Kernighan-Lin heuristic for \lang{MaxCut}~\cite{KL_Original}.

Let us start our analysis with the determistic jump rules.
\lang{AntipodalJump} and \lang{JumpToBest} are the easiest to define by
$\lang{AntipodalJump}^1_f(x)  = x[\phi^+(x) \mapsto \overline{x[\phi^+(x)]}]$ and $\lang{JumpToBest}^1_f(x)  = \argmax_{y \in \{0,1\}^{\phi^+(x)}x[\phi^-(x)]} f(y)$.
Notice that $\lang{JumpToBest}^1_f$ is not efficiently computable in general, but we are not interested in the internals of steps but just in the number of steps (i.e, number of applications of $\lang{JumpToBest}^1_f$).
For this, it is important to notice that $\lang{JumpToBest}^1_f$ always flips the border:

\begin{proposition}\label{prop:jump-best-flips-border}
    Define $\phi^\text{JtB}(x)$ such that $\lang{JumpToBest}_f^1(x) = x[\phi^\text{JtB}(x) \mapsto \overline{x[\phi^\text{JtB}(x)]}]$
    then $\phi^\oplus(x)\subseteq \phi^\text{JtB}(x)$.
\end{proposition}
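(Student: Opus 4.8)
The plan is to argue by contradiction: if some border index of $x$ is left unflipped by $\lang{JumpToBest}^1_f$, I would produce a strictly fitter assignment inside the very face over which $\lang{JumpToBest}$ optimizes, contradicting optimality. Write $y^\ast := \lang{JumpToBest}^1_f(x)$, so that $y^\ast$ maximizes $f$ over $\{0,1\}^{\phi^+(x)}x[\phi^-(x)]$ and $\phi^\text{JtB}(x)\subseteq\phi^+(x)$ is exactly the set of coordinates it flips (since $y^\ast$ agrees with $x$ off $\phi^+(x)$).

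First I would record a preliminary claim about $\prec$-smooth landscapes (\cref{def:precSmooth}): if $f$ is $\prec$-smooth with optimum $x^\ast$ and $i\in\phi^\oplus(x)$ (border, in the sense of \cref{def:in-out-maps}), then $x[\downarrow i]=x^\ast[\downarrow i]$ and $x_i=\overline{x^\ast_i}$. For the first part, suppose some $j\prec i$ has $x_j\neq x^\ast_j$ and take such a $j$ that is $\prec$-minimal; then $x[\downarrow j]=x^\ast[\downarrow j]$, so $\prec$-smoothness at $j$ gives $f(x[j\mapsto x^\ast_j])>f(x[j\mapsto\overline{x^\ast_j}])=f(x)$, i.e.\ $j\in\phi^+(x)$, which contradicts the border condition $\downarrow i\subseteq\phi^-(x)$ (recall $\phi^+(x)$ and $\phi^-(x)$ partition $[n]$). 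For the second part, apply $\prec$-smoothness at $i$ to $x$ itself, using $x[\downarrow i]=x^\ast[\downarrow i]$: since $i\in\phi^+(x)$ means flipping $x_i$ strictly helps while $\prec$-smoothness says $x^\ast_i$ is the better of the two values, the current value must be the worse one, so $x_i=\overline{x^\ast_i}$.

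Now the main step. Suppose for contradiction $i\in\phi^\oplus(x)$ but $i\notin\phi^\text{JtB}(x)$; then $y^\ast_i=x_i=\overline{x^\ast_i}$ by the claim. Because $\downarrow i\subseteq\phi^-(x)$ and $y^\ast$ agrees with $x$ on $\phi^-(x)$, we get $y^\ast[\downarrow i]=x[\downarrow i]=x^\ast[\downarrow i]$, so $\prec$-smoothness at $i$ applied to $y^\ast$ yields $f(y^\ast[i\mapsto x^\ast_i])>f(y^\ast[i\mapsto\overline{x^\ast_i}])=f(y^\ast)$. Since $i\in\phi^+(x)$, flipping coordinate $i$ does not change the $\phi^-(x)$-coordinates, so $y^\ast[i\mapsto x^\ast_i]$ still lies in $\{0,1\}^{\phi^+(x)}x[\phi^-(x)]$, contradicting maximality of $y^\ast$. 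Hence every border index is flipped, i.e.\ $\phi^\oplus(x)\subseteq\phi^\text{JtB}(x)$. The only non-immediate step, and the place where care is needed, is the $\prec$-minimal counterexample argument in the preliminary claim together with keeping straight that flipping coordinates in $\phi^+(x)$ keeps one inside the optimization face; the rest is bookkeeping.
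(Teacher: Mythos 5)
Your proposal is correct and follows essentially the same route as the paper: the paper's proof observes that every $j\in\phi^\oplus(x)$ is combed toward $\overline{x_j}$ throughout the face fixed on $\phi^\ominus(x)$ and then invokes maximality of $\lang{JumpToBest}^1_f(x)$ in (a subface of) that face, which is exactly your argument phrased directly rather than by contradiction. Your preliminary claim (that a border index has $x[\downarrow i]=x^*[\downarrow i]$ and $x_i=\overline{x^*_i}$) is precisely the justification the paper leaves implicit, so no further changes are needed.
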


\begin{proof}
    For any assignment $z\in \{0,1\}^{\overline{\phi^\ominus(x)}}x[\phi^\ominus(x)]$ and every $j\in\phi^\oplus(x)$ we have $f(z[j\mapsto \overline{x_j}])>f(z[j\mapsto x_j])$.
    Since $y := \lang{JumpToBest}_f^1(x)$ is a local maximum in this face, we must have $y_j = \overline{x_j}$ and so $\phi^\oplus(x) \subseteq \phi^\text{JtB}(x)$.
\end{proof}

Given $\phi^\oplus(x) \subseteq \phi^\text{JtB}(x) \subseteq \phi^+ \subseteq \overline{\phi^\ominus(x)}$, we conclude that $q(n) = 1$ satisfies \cref{eq:drift_time_bound} from \cref{thm:main_stepsBound} for both \lang{JumpToBest} and \lang{AntipodalJump}, so:

\begin{proposition}\label{prop:runningtime-antipodal-and-best-jump}
    Let $f$ be a conditionally-smooth landscape then \lang{AntipodalJump} and \lang{JumpToBest} take at most $\text{height}_f(x^0)$ steps to find the peak from an initial assignment $x^0$. 
\end{proposition}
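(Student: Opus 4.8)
The plan is to invoke \cref{thm:main_stepsBound} with the trivial bound $p(n,l)=q(n)=1$, so the whole proposition reduces to showing that one jump strictly decreases $\height[x]$ (for any current $x\neq x^*$). Concretely, each step of either algorithm sends $x$ to $x'=x[\Phi(x)\mapsto\overline{x[\Phi(x)]}]$ for a flipped set $\Phi(x)$ with $\phi^\oplus(x)\subseteq\Phi(x)\subseteq\phi^+(x)$: for \lang{AntipodalJump} this is immediate since $\Phi(x)=\phi^+(x)\supseteq\phi^\oplus(x)$, and for \lang{JumpToBest} it is \cref{prop:jump-best-flips-border} together with $\phi^\text{JtB}(x)\subseteq\phi^+(x)$ (only $\phi^+(x)$-coordinates can move). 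So it suffices to prove the deterministic claim: if $\height[x]=l\geq 1$ and $\phi^\oplus(x)\subseteq\Phi\subseteq\phi^+(x)$, then $\height[x']<l$ for $x'=x[\Phi\mapsto\overline{x[\Phi]}]$. Granting this, $\tau_{<l}\equiv 1$, and \cref{thm:main_stepsBound} gives a total of at most $\sum_{l=1}^{\height[x^0]}1=\height[x^0]$ steps.

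The first thing I would set up is a description of $\phi^\ominus$ in terms of the optimum $x^*$: in a $\prec$-smooth landscape, $i\in\phi^\ominus(x)$ iff $x_j=x^*_j$ for every $j\preceq i$. The ``if'' direction applies $\prec$-smoothness at each $j\preceq i$ (using $\downarrow j\subseteq\{k:k\preceq i\}$ to get $x[\downarrow j]=x^*[\downarrow j]$, then concluding $j\in\phi^-(x)$); for ``only if'', a $\prec$-minimal disagreement $j\preceq i$ still has $x[\downarrow j]=x^*[\downarrow j]$, so $\prec$-smoothness forces $j\in\phi^+(x)$, contradicting $i\in\phi^\ominus(x)$. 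Writing $D(x)=\{j:x_j\neq x^*_j\}$, this says $\overline{\phi^\ominus(x)}=\uparrow D(x)$ (the up-closure of $D(x)$), and since the height of an up-set equals the largest level among its elements, $\height[x]=\max\{k:S_k\cap D(x)\neq\emptyset\}$.

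Next I would prove the deterministic claim. Put $l=\height[x]$; since $D(x')=D(x)\triangle\Phi$ it is enough to see $D(x')\cap S_k=\emptyset$ for every $k\geq l$. For $k>l$: $D(x)\cap S_k=\emptyset$ by the definition of $l$, and $\Phi\subseteq\phi^+(x)\subseteq\overline{\phi^\ominus(x)}=\uparrow D(x)$ forces every index of $\Phi$ to a level $\leq l$, hence $\Phi\cap S_k=\emptyset$. For $k=l$ one checks $D(x)\cap S_l=\Phi\cap S_l$: if $i\in D(x)\cap S_l$, every $j\prec i$ sits at a strictly larger level, so $j\notin D(x)$; thus $x$ agrees with $x^*$ on $\downarrow i$, and $\prec$-smoothness places $i$ in $\phi^+(x)$ and every $j\prec i$ in $\phi^-(x)$, i.e.\ $i\in\phi^\oplus(x)\subseteq\Phi$. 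Conversely, if $i\in\Phi\cap S_l\subseteq\phi^+(x)$ had $x_i=x^*_i$, then ($x$ again agreeing with $x^*$ on $\downarrow i$) $\prec$-smoothness would give $f(x)>f(x[i\mapsto\overline{x_i}])$, contradicting $i\in\phi^+(x)$; so $i\in D(x)$. Hence $D(x')\cap S_k=\emptyset$ for all $k\geq l$, i.e.\ $\height[x']\leq l-1$.

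I expect the only genuine obstacle to be the generality of this last claim -- namely that the jump may flip strictly more than the border $\phi^\oplus(x)$, whereas the observation recorded before \cref{thm:main_stepsBound} (that $\overline{\phi^\ominus(x[\phi^\oplus(x)\mapsto\overline{x[\phi^\oplus(x)]}])}\subseteq S_{<l}$) only covers flipping the border exactly. The resolution is that the surplus $\Phi\setminus\phi^\oplus(x)$ lies entirely below level $l$: since $S_l\subseteq\phi^\ominus(x)\cup\phi^\oplus(x)$ and $\Phi\subseteq\phi^+(x)$ is disjoint from $\phi^\ominus(x)$, no index of $\Phi\setminus\phi^\oplus(x)$ is at level $l$, and (being in $\uparrow D(x)$) none is at level $>l$ either; so those extra flips cannot recreate a disagreement at level $\geq l$. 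Once the claim is in hand, the bound $\height[x^0]$ is immediate from \cref{thm:main_stepsBound}.
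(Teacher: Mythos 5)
Your proposal is correct and follows essentially the same route as the paper: both arguments reduce to the containment $\phi^\oplus(x)\subseteq\Phi(x)\subseteq\phi^+(x)\subseteq\overline{\phi^\ominus(x)}$ (via \cref{prop:jump-best-flips-border} for \lang{JumpToBest}) and then feed $q(n)=1$ into \cref{thm:main_stepsBound}. The paper simply asserts that these containments yield the height decrement, whereas you explicitly verify the intermediate claim that flipping any set sandwiched between the border and the out-map strictly lowers $\height[x]$ -- a detail the paper leaves implicit (its recorded observation covers only flipping the border exactly), and your level-set bookkeeping closes that gap cleanly.
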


The third deterministic jump-rule that we want to analyze is the Kernighan-Lin heuristic~\cite{KL_Original,PLS,PLS_Survey}.
Here, we will also establish that $q(n) = 1$ is a bound, but it will require more effort.
But first, to define the Kernighan-Lin heuristic, we need to define the KL-neighbourhood $\lang{KLN}_f(x)$ in \cref{alg:KLN}.

\begin{algorithm}[htb]
\caption{$\lang{KLN}_f(x)$:}
\begin{algorithmic}[1]
    \Require Fitness function $f$ and assignment $x$.
    \State Initialize $S \gets [n]$ and $y \gets x$ and $\text{KLN} \gets \emptyset$
    \While {$S \neq \emptyset$} 
    \State Let $i \in S$ be lowest index such that $\forall j \in S \quad f(y[i \mapsto \overline{y_i}]) \geq f(y[j \mapsto \overline{y_j}])$
    \State $\text{KLN}(x) \gets \text{KLN} + \{ y[i \mapsto \overline{y_i}] \}$
    \State $y \gets y[i \mapsto \overline{y_i}]$
    \State $S \gets S - \{ i \}$ 
    \EndWhile
    \State \Return $\text{KLN}$
\end{algorithmic}
\label{alg:KLN}
\end{algorithm}

Where for analysis, we will number the neighbours in $\lang{KLN}_f(x)$ as $y^1,...,y^n$ based on the order that they were added to the set and define $\Delta_i = f(y^i) - f(y^{i - 1})$ (where $y^0 = x$).
Now, we can define $\lang{KerninghanLin}^1_f(x) = \argmax_{y \in \lang{KLN}_f(x)} f(y)$.
Unlike $\lang{JumpToBest}^1_f$, the $\argmax$ in $\lang{KerninghanLin}^1_f$ can always be computed efficiently.
This is one of the reasons that the \lang{KernighanLin} was presented as a heuristic for solving \lang{MaxCut}~\cite{KL_Original} -- which is \PLS-complete in the weighted version.
It was also idealized and widely studied in the literature on polynomial local search, including in the more general case of \wSAT~\cite{PLS,PLS_Survey}.

If, as with \lang{JumpToBest}, we define $\phi^\text{KL}(x)$ such that $\lang{KernighamLin}^1_f(x)$ returns $x[\phi^\text{KL}(x) \mapsto \overline{x[\phi^\text{KL}(x)]}$ then it has the following properties:

\begin{proposition}\label{prop:KLH-flips-sources}
Given a fitness landscape $f$, assignment $x$ and $y^i \in \lang{KLN}_f(x)$,
define $S^i$ such that $y^i = x[S^i \mapsto \overline{x[S^i]}]$, let $k$ be the first index such that $\Delta_k > 0$ but $\Delta_{k + 1} \leq 0$.
If $f$ is conditionally-smooth then $\phi^\oplus(x) \subseteq S^k \subseteq \phi^\text{KL}(x)$.
If $f$ is also semismooth then $S^k = \phi^\text{KL}(x) \subseteq \overline{\phi^\ominus(x)}$.
\end{proposition}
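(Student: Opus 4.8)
First I would translate $\prec$-smoothness into language tailored to the greedy loop of $\lang{KLN}_f(x)$. Fix a $\prec$ witnessing $\prec$-smoothness with optimum $x^*$, and call an index $v$ \emph{correct at $z$} when $z_v=x^*_v$ and \emph{wrong} otherwise. Unwinding \cref{def:precSmooth} gives $\phi^\oplus(x)=\{\,j : x[\downarrow j]=x^*[\downarrow j]\text{ and }x_j\neq x^*_j\,\}$ (the $\prec$-minimal wrong indices) and $\phi^\ominus(x)=\{\,i : \text{every }j\preceq i\text{ is correct at }x\,\}$, so $\overline{\phi^\ominus(x)}$ is exactly the upper set of the wrong indices. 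Write $v_1,v_2,\dots$ for the coordinates flipped by the greedy loop, so $S^i=\{v_1,\dots,v_i\}$ and $\Delta_i=f(y^i)-f(y^{i-1})$ is the maximum, over coordinates not yet flipped, of the gain at $y^{i-1}$. Put $D=\bigcup_{j\in\phi^\oplus(x)}\downarrow j$; every element of $D$ is correct at $x$, and $D\cap\phi^\oplus(x)=\emptyset$.

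The engine is the claim: \emph{whenever $\phi^\oplus(x)\not\subseteq S^{i-1}$, no element of $D$ has been flipped yet, and $\Delta_i>0$.} I would prove it by a minimal bad step. Suppose step $t$ is the first at which the greedy flips some $i'\in D$ while $\phi^\oplus(x)\not\subseteq S^{t-1}$; minimality forces every earlier step to flip outside $D$, so $S^{t-1}\cap D=\emptyset$, and $\phi^\oplus(x)\not\subseteq S^m$ for all $m\le t-1$. Pick $j\in\phi^\oplus(x)\setminus S^{t-1}$: since $\downarrow j\subseteq D$ is untouched, $y^{t-1}[\downarrow j]=x^*[\downarrow j]$ and $y^{t-1}_j=x_j\neq x^*_j$, so $\prec$-smoothness makes flipping $j$ strictly fitness-increasing at $y^{t-1}$; hence so is the greedy's choice $i'$. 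But $i'$ is correct at $y^{t-1}$ (it is in $D$, correct at $x$, and unflipped), so $\prec$-smoothness forces some $i''\prec i'$ wrong at $y^{t-1}$; picking $j_0\in\phi^\oplus(x)$ with $i'\prec j_0$ gives $i''\prec j_0$, i.e.\ $i''\in D$ is correct at $x$ yet wrong at $y^{t-1}$, which is impossible since $S^{t-1}\cap D=\emptyset$. This contradiction proves the claim; its first half, applied with any $i$ having $\phi^\oplus(x)\not\subseteq S^{i-1}$, exhibits the strictly improving available flip $j$, so $\Delta_i>0$.

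From the claim, $\phi^\oplus(x)\subseteq S^k$ (else $\Delta_{k+1}>0$, against the choice of $k$); and if $m$ is the first index with $\phi^\oplus(x)\subseteq S^m$, then $\Delta_1,\dots,\Delta_m>0$, which minimality of $k$ propagates to $\Delta_1,\dots,\Delta_k>0$. Hence $f(y^0)<\dots<f(y^k)$, so the $\argmax$ defining $\lang{KernighanLin}^1_f(x)$ is attained at some $k^*\ge k$, giving $S^k\subseteq S^{k^*}=\phi^\text{KL}(x)$. A second finite-descent argument (a greedy flip at a step $i\le k$ of a coordinate $v$ correct at $x$ has $\Delta_i>0$, hence a wrong predecessor $i''\prec v$ at $y^{i-1}$; if $i''$ is wrong at $x$ then $v$ is free, otherwise $i''$ was flipped at an earlier positive-gain step and we descend the poset) shows $S^k\subseteq\overline{\phi^\ominus(x)}$. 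This settles the conditionally-smooth case.

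The main obstacle is the semismooth refinement $\phi^\text{KL}(x)=S^k$, i.e.\ $f(y^k)=\max_i f(y^i)$. Since $f$ strictly increases up to $y^k$, and (adjacent vertices of a semismooth landscape have distinct fitness, so $\Delta_j\neq0$) $\Delta_{k+1}<0$, it suffices to show $\Delta_t<0\Rightarrow\Delta_{t+1}<0$. Suppose instead $\Delta_t<0<\Delta_{t+1}$ and set $a=v_t$, $b=v_{t+1}$. In the background $y^t$: let $g$ be the gain of flipping $b$ at $y^{t-1}$ and $\delta=\Delta_{t+1}-g$ the amount by which flipping $a$ changes $b$'s gain; greediness at step $t$ gives $g\le\Delta_t<0$, hence $\delta=\Delta_{t+1}-g\ge\Delta_{t+1}-\Delta_t>|\Delta_t|$. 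Checking \cref{def:sign-depends} directly then yields $\se[y^t]{a}{b}$ (from $\Delta_{t+1}>0$ and $g\le0$) and $\se[y^t]{b}{a}$ (from $\Delta_t<0$ and $\delta>|\Delta_t|$), so $\rse[y^t]{a}{b}$, contradicting semismoothness via \cref{prop:semismooth}. Thus $\Delta_j<0$ for all $j>k$, the KLN values are unimodal with unique maximum $y^k$, and $\phi^\text{KL}(x)=S^k\subseteq\overline{\phi^\ominus(x)}$. I expect this last step — carrying both sign-dependence conditions through the identities relating $f$ on the square spanned by $a$ and $b$, and keeping the inequality $\delta>|\Delta_t|$ straight — to be the fiddliest part of the argument.
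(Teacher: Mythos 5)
Your proof is correct, and it is considerably more careful than the one in the paper. For the conditionally-smooth half you follow the same underlying idea as the paper -- the combing of border dimensions forces the greedy loop to absorb $\phi^\oplus(x)$ before the gain sequence can turn non-positive, and the strict ascent $f(y^1)<\dots<f(y^k)$ pushes the $\argmax$ to an index $\geq k$ -- but you actually supply the two inductions that the paper's one-line proof glosses over: that no element of $\bigcup_{j\in\phi^\oplus(x)}\downarrow j$ is flipped while the border is unabsorbed, and the poset-descent argument for $S^k\subseteq\overline{\phi^\ominus(x)}$ (without which it is not even clear that $y^k$ lies in the face $\{0,1\}^{\overline{\phi^\ominus(x)}}x[\phi^\ominus(x)]$ that the paper's argument relies on). For the semismooth half your route is genuinely different: the paper argues that $y^k$ is a local, hence global, maximum of that face and concludes $\phi^\text{KL}(x)=S^k$, which as stated is shaky because the later points $y^{k+1},\dots,y^n$ of the KLN sequence eventually leave the face (they ultimately flip coordinates in $\phi^\ominus(x)$), so global maximality over the face does not immediately dominate them. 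Your alternative -- showing $\Delta_t<0\Rightarrow\Delta_{t+1}<0$ by exhibiting $\rse[y^t]{a}{b}$ on the square spanned by two consecutive greedy flips, hence that the KLN fitness sequence is unimodal with unique maximum at $y^k$ -- sidesteps this entirely and is, in my view, the more robust argument; the sign-epistasis bookkeeping you flag as fiddly does check out ($g\leq\Delta_t<0$ gives the first condition of $\se[y^t]{a}{b}$, and $\delta\geq\Delta_{t+1}-\Delta_t>|\Delta_t|$ gives the second condition of $\se[y^t]{b}{a}$, both in the common background $y^t$).
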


\begin{proof}
    For any assignment $z\in \{0,1\}^{\overline{\phi^\ominus(x)}}x[\phi^\ominus(x)]$ and every $j\in\phi^\oplus(x)$ we have $f(z[j\mapsto \overline{x_j}])>f(z[j\mapsto x_j])$.
    Since $\Delta_t > 0$ for all $t < k + 1$, $y^1,\ldots,y^k$ is an ascent.
    Since $\Delta_{k + 1} \leq 0$, it follows that $y^k$ is a local peak in $\{0,1\}^{\phi^-(x)}x[\phi^-(x)]$, so $y^k_j=\overline{x_j}$ and thus $\phi^\oplus(x)\subseteq S^k \subseteq \overline{\phi^\ominus(x)}$.
    
    If $f$ is semismooth then since $y^k$ is a local maximum in $\{0,1\}^{\phi^-(x)}x[\phi^-(x)]$ then it is also a global maximum in that face.
    Thus $\phi^\text{KL}(x) = S^k$.
\end{proof}

\cref{prop:KLH-flips-sources} means that for landscapes that are both conditionally-smooth and semismooth -- like oriented \VCSP s, for example --  setting $q(n) = 1$ satisfies \cref{eq:drift_time_bound} from \cref{thm:main_stepsBound} for \lang{KerninghanLin} and so it follows that:

\begin{proposition}
If $f$ is a conditionally-smooth and semismooth fitness landscape then \lang{KerninghanLin} takes at most $\text{height}_f(x^0)$ steps to find the peak from initial assignment $x^0$.
\label{prop:KL_bound}
\end{proposition}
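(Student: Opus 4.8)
The plan is to reduce \cref{prop:KL_bound} to \cref{thm:main_stepsBound} by showing that a single \lang{KerninghanLin} step \emph{deterministically} decrements the height, so that $q(n)=1$ satisfies \cref{eq:drift_time_bound}. The key input is \cref{prop:KLH-flips-sources}: when $f$ is both conditionally-smooth and semismooth, one step of \lang{KerninghanLin} from $x$ returns $x[\phi^{\text{KL}}(x)\mapsto\overline{x[\phi^{\text{KL}}(x)]}]$ with $\phi^{\oplus}(x)\subseteq\phi^{\text{KL}}(x)\subseteq\overline{\phi^{\ominus}(x)}$. Thus, exactly as for \lang{JumpToBest} in \cref{prop:runningtime-antipodal-and-best-jump}, a \lang{KerninghanLin} step flips a set of \emph{free} indices that contains the \emph{border}.

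First I would fix a $\prec$ witnessing $\prec$-smoothness and an assignment $x\neq x^*$ with $l:=\height[x]\geq 1$, and record the characterisation $i\in\phi^{\ominus}(x)\iff x[(\downarrow i)\cup\{i\}]=x^*[(\downarrow i)\cup\{i\}]$, which follows from \cref{def:in-out-maps} and $\prec$-smoothness: if $x$ agreed with $x^*$ on $(\downarrow i)\cup\{i\}$ then every $j\preceq i$ would satisfy $x[\downarrow j]=x^*[\downarrow j]$ and $x_j=x^*_j$, hence $j\in\phi^-(x)$, so $i\in\phi^{\ominus}(x)$; conversely, if $x$ disagreed with $x^*$ somewhere in $(\downarrow i)\cup\{i\}$, the $\prec$-minimal such disagreement $j_0$ would have $x[\downarrow j_0]=x^*[\downarrow j_0]$ and $x_{j_0}\neq x^*_{j_0}$, so $j_0\in\phi^+(x)$ with $j_0\preceq i$, contradicting $i\in\phi^{\ominus}(x)$.

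Next, writing $y$ for the assignment reached after one \lang{KerninghanLin} step, I would check $S_l\subseteq\phi^{\ominus}(y)$. For $i\in S_l$ we have $\downarrow i\subseteq S_{>l}\subseteq\phi^{\ominus}(x)$, and since $\phi^{\text{KL}}(x)\subseteq\overline{\phi^{\ominus}(x)}$ none of these indices are flipped, so $y$ still agrees with $x^*$ on $\downarrow i$. Since $S_l\subseteq\phi^{\ominus}(x)\cup\phi^{\oplus}(x)$, either $i\in\phi^{\ominus}(x)$, in which case $i$ is not flipped and $y_i=x_i=x^*_i$; or $i\in\phi^{\oplus}(x)\subseteq\phi^{\text{KL}}(x)$, in which case $i$ is flipped, and because $x$ agrees with $x^*$ on $\downarrow i$ and $i\in\phi^+(x)$, $\prec$-smoothness forces $x_i\neq x^*_i$, so the flip gives $y_i=x^*_i$. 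In both cases $i\in\phi^{\ominus}(y)$ by the characterisation above, so $\overline{\phi^{\ominus}(y)}\subseteq S_{<l}$ and $\height[y]<l$.

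Having established that each step strictly decreases the height, the random variable $\tau_{<l}$ of \cref{thm:main_stepsBound} equals $1$ deterministically, so $p(n,l)=1\leq q(n)=1$, and \cref{thm:main_stepsBound} yields the bound of at most $\height[x^0]\cdot 1=\height[x^0]$ steps. I expect the only real obstacle to be the bookkeeping in the third paragraph: making precise that flipping the \emph{whole} set $\phi^{\text{KL}}(x)$, rather than just the border $\phi^{\oplus}(x)$, still leaves every level-$l$ index correct. This is handled uniformly by the containment $\phi^{\text{KL}}(x)\subseteq\overline{\phi^{\ominus}(x)}$ from \cref{prop:KLH-flips-sources}, which guarantees that no already-correct index below a level-$l$ index is ever disturbed; everything else is a routine invocation of \cref{prop:KLH-flips-sources} and \cref{thm:main_stepsBound}.
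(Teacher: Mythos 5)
Your proposal is correct and follows essentially the same route as the paper: the paper likewise deduces from \cref{prop:KLH-flips-sources} that $\phi^\oplus(x)\subseteq\phi^{\text{KL}}(x)\subseteq\overline{\phi^\ominus(x)}$ lets one take $q(n)=1$ in \cref{thm:main_stepsBound}, exactly as for \lang{JumpToBest}. The only difference is that you spell out the height-decrement verification (via the characterisation of $\phi^\ominus$ and the level sets) that the paper leaves implicit, which is a faithful elaboration rather than a different argument.
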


Finally, for a slightly more complex form for $q(n)$ we can consider a stochastic jump rule like \lang{RandomJump} that given an assignment $x$ flips a random subset of $\phi^+(x)$.
Or, slightly more formally, if $Y^\text{RJ}(x) \sim \lang{RandomJump}^1_f(x)$ then:

\begin{equation}
\text{Pr}\{Y^\text{RJ}(x) = x[S \mapsto \overline{x[S]}]\} = \begin{cases}
\frac{1}{2^{|\phi^+(x)|}} & \text{ if } S \subseteq \phi^+(x) \\
0 & \text{ otherwise }
\end{cases}
\label{eq:randomJump}
\end{equation}

\noindent This stochastic local search algorithm requires only a log-factor more steps than the deterministic jump rules that we considered:

\propRJstepsBound*
\begin{proof}\label{proof:prop:randomJump_steps}
    Consider an any assignment $x$ with $l = \height[x]$.
    Let $x[S\mapsto\overline{x[S]}] \sim Y^\text{RJ}(x)$.
    For every $i\in S_l$ the probability that $i\in S$ is $1/2$, so $\mathbb{E}(|S_l\cap S)|)=|S_l|/2$.
    Thus $p(n,l) = \log |S_l| + 2$ and $q(n) = \log(\width[x^0]) + 2$ will satisfy \cref{eq:drift_time_bound} in \cref{thm:main_stepsBound}.
\end{proof}

\newpage
\bibliography{AUSO_VCSP}

\end{document}